%% file: main.tex
\documentclass{article}

\usepackage{microtype}
\usepackage{graphicx}
\usepackage{subcaption}
\usepackage{booktabs} 
\usepackage{multirow}
\usepackage{bm}
\usepackage{makecell}
\usepackage[table]{xcolor}
\usepackage{adjustbox}

\usepackage{amsmath}
\usepackage{amssymb}
\usepackage{mathtools}
\usepackage{amsthm}

\newenvironment{icompact}{
  \begin{list}{$\bullet$}{
    \itemindent -.05em
    \parsep 0pt plus 1pt
    \partopsep 0pt plus 1pt
    \topsep 2pt plus 2pt minus 2pt
    \itemsep 0pt plus 1.3pt
    \parskip 0pt plus 2pt
    \leftmargin 0.13in}
      }
{\normalsize
\end{list}
}

\usepackage{url}
\usepackage{hyperref}

\usepackage[nolinenum]{icml2026}

\usepackage[capitalize,noabbrev]{cleveref}

\theoremstyle{plain}
\newtheorem{theorem}{Theorem}[section]
\newtheorem{proposition}[theorem]{Proposition}

\theoremstyle{definition}

\theoremstyle{remark}

\definecolor{darkgrey}{HTML}{434343}
\definecolor{lightgray}{gray}{0.9}

\usepackage[disable,textsize=tiny]{todonotes}

\icmltitlerunning{Depth Charge: Jailbreak Large Language Models from Deep Safety Attention Heads}

\begin{document}

\twocolumn[
  \icmltitle{Depth Charge: Jailbreak Large Language Models from \\ Deep Safety Attention Heads}
  
  \icmlsetsymbol{equal}{*}
  
  \begin{icmlauthorlist}
    \icmlauthor{Jinman Wu}{xidian}
    \icmlauthor{Yi Xie}{tsu}
    \icmlauthor{Shiqian Zhao}{nyu}
    \icmlauthor{Xiaofeng Chen}{xidian}
  \end{icmlauthorlist}

  \icmlaffiliation{xidian}{Xidian University, Xi'an, China}
  \icmlaffiliation{tsu}{Tsinghua University, Shenzhen, China}
  \icmlaffiliation{nyu}{Nanyang Technological University, Singapore}

  \icmlcorrespondingauthor{Xiaofeng Chen}{sce-wjm@stu.xidian.edu.cn}

  \icmlkeywords{Machine Learning, ICML, Large Language Models, Jailbreak, Safety Attention}

  \vskip 0.3in
]

\printAffiliationsAndNotice{}
\input{sections/0-abstract}

\input{sections/1-introduction}
\input{sections/2-related}
\input{sections/3-Preliminaries}
\input{sections/4-method}
\input{sections/5-experiment}

\input{sections/6-conclusion}

\input{sections/10-limitations_ethics}
\clearpage

\bibliography{sections/11-reference}
\bibliographystyle{icml2026}

\newpage
\appendix
\onecolumn 
\input{sections/8-appendix}


\end{document}

%% file: sections/0-abstract.tex
\begin{abstract}
Currently, open-sourced large language models (OSLLMs) have demonstrated remarkable generative performance. However, as their structure and weights are made public, they are exposed to jailbreak attacks even after alignment. Existing attacks operate primarily at shallow levels, such as the prompt or embedding level, and often fail to expose vulnerabilities rooted in deeper model components, which creates a false sense of security for successful defense. In this paper, we propose \textbf{\underline{S}}afety \textbf{\underline{A}}ttention \textbf{\underline{H}}ead \textbf{\underline{A}}ttack (\textbf{SAHA}), an attention-head-level jailbreak framework that explores the vulnerability in deeper but insufficiently aligned attention heads. SAHA contains two novel designs. 
Firstly, we reveal that deeper attention layers introduce more vulnerability against jailbreak attacks. 
Based on this finding, \textbf{SAHA} introduces \textit{Ablation-Impact Ranking} head selection strategy to effectively locate the most vital layer for unsafe output. 
Secondly, we introduce a boundary-aware perturbation method, \textit{i.e. Layer-Wise Perturbation}, to probe the generation of unsafe content with minimal perturbation to the attention. This constrained perturbation guarantees higher semantic relevance with the target intent while ensuring evasion. 
Extensive experiments show the superiority of our method: SAHA improves ASR by 14\% over SOTA baselines, revealing the vulnerability of the attack surface on the attention head. 
Our code is available at \url{https://anonymous.4open.science/r/SAHA}.

\end{abstract}

%% file: sections/1-introduction.tex
\section{Introduction}
\label{sec:intro}

Lately, open-sourced large language models (OSLLMs), including Llama~\cite{32llama3} and Qwen~\cite{30qwen1.5}, have demonstrated remarkable capabilities in text generation tasks~\cite{1Fei25,2Guo25,3Hu25,4Sinha25}. However, as their model structure and weights are all accessible to all parties, they are exposed to jailbreak attacks~\cite{5andriushchenko25,6chen25,7zhao-etal-2025-sql,8huang-etal-2025-efficient, 9Ahi,10singh,11park,12panaitescu-liess-etal-2025-poisonedparrot}, where malicious users can leverage these OSLLMs for unsafe content generation, \textit{e.g.,} nudity and sexuality. 
These realistic threat makes pre-release penetration testing (\textit{e.g.} jailbreak) a must to find the vulnerabilities of open-sourced LLMs, which provide an alignment corpus for compensating these safety holes. 

Existing jailbreak attacks can be classified into two categories according to attack surface: \textit{prompt-level} and \textit{embedding-level attacks}. Specifically, prompt-level attacks~\cite{17DBLP,18saiem} manipulate the input by searching for adversarial prompts with gradient or LLMs~\cite{61wang2025comprehensive}. For example, GCG~\cite{60GCG} utilizes greedy search to find adversarial substitutes in the token space, and PAIR\cite{59PAIR} leverages the strong generation ability of LLMs to craft jailbreak prompts. More deeply, embedding-level attacks~\cite{19ha-etal-2025-one,20li-etal-2025-revisiting,21li-etal-2025-cavgan} operate in the latent continuous space, directly manipulating the embedding for unsafe content generation. For instance, SCAV~\cite{25DBLP:conf/nips/XuHCW24} conducts embedding-level attacks by introducing targeted perturbations with automatically selected hyperparameters. 


Despite the great attack performance achieved, existing jailbreak methods fail to show resistance: they launch attacks from a shallow attack surface and are easily tackled by superficial alignment. For example, \cite{68cao} shows that random token dropping-based robust alignment check could defend prompt-level attacks, and \cite{69zhao} shows that layer-specific editing of semantic representations can easily counter embedding-level attacks. 
The common reason for these attacks' failure to resist mitigation strategies is that they focus on manipulating the shallow layers of OSLLMs, and thus are brittle to simple shallow-level safety alignment~\cite{13DBLP:conf/acl/ChenLSH0SH25,14zhang-etal-2025-defense,15yi2025probe,15-1zhang-etal-2024-defending}. 
This kind of penetration testing and remediation provides an appearance of security, yet it overlooks a question that is worth reflecting on: \textit{Are the open-sourced LLMs safe against attacks that are launched from the deeper layers?}

Our answer is negative: we identified the vulnerability at the overlooked attention-head level, which can be used to mount highly effective jailbreaks, revealing a critical vulnerability in deployed OSLLMs. To thoroughly study this flaw, in this paper, we propose \textbf{\emph{Safety Attention Head Attack} (SAHA)}, a novel jailbreak from the new attack surface-attention head. In general, \textbf{SAHA} mainly contains two modules. 
First, we propose \textbf{Ablation-Impact Ranking (AIR)}, a strategy that identifies the safety-related attention heads. We introduce a safety classifier, which is trained on the attention activation, and monitor the loss when ablating the attention head. This loss variation provides a signal for locating the important safety-related attention head. 
Second, we introduce \textbf{Layer-Wise Perturbation (LWP)} for manipulating attention heads, which strategically allocates the perturbation budget to the most critical head within each layer. 
We generate the perturbation vector with a closed-form solution derived from the linearized decision boundary of the safety probe, 
yielding a perturbation of minimal magnitude that flips the safety label.
Such a practice makes our method computationally efficient and can be scaled to larger language models. 



We extensively evaluate SAHA across popular safety-aligned OSLLMs (including Llama3.1, Qwen, and DeepSeek) and compare against seven state-of-the-art baselines. The experimental results show that SAHA \textbf{consistently outperforms all baselines, including both prompt-level and embedding-level ones by substantial margin}, in terms of both attack success rate and semantic relevance. 
This indicates that attention level possesses a higher safety risk than shallower-level attack surfaces. 
Beyond performance surpass, we conduct thorough analysis to delve deeper into the mechanism via ablation studies on locating and perturbing strategies and statistical testing.
We also validate that our method is robust even under low perturbation budgets, and consistently show effectiveness against composite defenses. 
The superior attack performance and resistance to existing defenses raise the alarm for OSLLMs and highlight the urgent need for more effective safeguards.

In summary, we make the following contributions:
\begin{icompact}
	\item We identify the limitations of existing jailbreak attacks, which launch attacks from a shallow level and are thus easily defended by safety alignment. 
    \item We investigate the vulnerability of OSLLMs from a deeper perspective and propose a novel safety attention-head attack to exploit its vulnerability. 
    \item We propose AIR attention head selection strategy and LWP perturbation method to effectively locate and probe the model for unsafe content generation.
    \item We conducted extensive experiments to validate our method. Results on Llama and Qwen show that it largely outperforms existing methods, shedding light on the importance of deeper safety for OSLLMs.
\end{icompact}

%% file: sections/2-related.tex
\section{Related Work}
\label{sec:related_work}

\subsection{Jailbreak Attacks}

Existing jailbreak attacks can be divided into two types based on their attack surfaces, \textit{i.e.}, \textit{prompt-level attacks} and \textit{embedding-level attacks}. 

\textbf{Prompt-level Attack}. The prompt-level attacks craft adversarial tokens or templates at the input of LLMs to elicit malicious outputs~\cite{36mu-etal-2025-stealthy,37ding-etal-2024-wolf,38ramesh2024gpt,39yuan2024gpt}. 
For example, the gradient-based method GCG adopts greedy search to find the substitute of an unsafe word in the token dictionary~\cite{60GCG}, and PAIR uses an attacker LLM to find the revision of the blocked prompt that could bypass the safety boundary of the victim OSLLMs~\cite{59PAIR}. 
More recently, AutoDAN-Turbo designs a pipeline that can automatically find the effective strategies, which provide guidance for generating adversarial prompts~\cite{57autodanturbo}. 
As these methods manipulate only the model's input, they show less effectiveness when simple alignment is imposed on the encoder. For example, 
Liu et al.\cite{70liu2024formalizing} formalize and benchmark prompt-injection attacks and show that encoder-targeted defenses can substantially reduce attack success. Li et al.\cite{71li2024evaluating} construct a benchmark demonstrating that instruction-tuned LLMs are vulnerable to injected instructions in retrieved context, while retrieval- and encoder-side mitigations lower exploitability.

\textbf{Embedding-level Attack}. Embedding-level attacks manipulate the continuous hidden representations or optimize soft prompts~\cite{40schwinn2024soft,41hase2024smoothed,42zhou-etal-2024-alignment,43zhou} at the more covert embedding level. 
Recent efforts have further conceptualized ``steering vectors'' in activation space to directly steer OSLLMs~\cite{44o'brien2025steering,45scalena-etal-2024-multi,46nguyen-etal-2025-multi,47deng-etal-2025-unveiling} for unsafe content generation. 
For example, SCAV~\cite{25DBLP:conf/nips/XuHCW24} utilizes a concept activation vector (CAV) as a steering vector to perturb embeddings, and CAA~\cite{28rimsky-etal-2024-steering} derives the mean difference of contrastive activations as a steering vector. 
This line of work demonstrates that directly altering a model's activations is more effective and stealthier, as it functions at a deeper level than the prompt-level attacks. 
Despite their better attack performance and defense resistance, they are still fragile against safety alignment that touches the embedding level. 
For instance, 
Arditi et al.~\cite{72arditi2024refusal} show that a single residual-stream “refusal” direction mediates safety—ablating this direction disables refusal, while amplifying it induces refusal, implying that embedding interventions can block activation steering.
Yu et al.~\cite{73yu2024refusal} propose ReFAT and demonstrate that strengthening/purifying the refusal feature in hidden representations substantially reduces success rates of both activation- and input-level jailbreaks.

\textbf{Our Attack}. From prompt-level to embedding-level attack, as the attack surface penetrates deeper into the model, the rule is clear: the jailbreak attacks tend to become more effective. This is caused by the limited depth touched by the existing safety alignment. Motivated by this, in this work, we investigate the deeper level of OSLLMs, \textit{i.e.} the attention-head level, to verify the vulnerability of LLM when facing attacks from these blind spots for safety alignment. 

\subsection{Attention Head}
Previous works validate that LLMs' behaviors are governed by a sparse set of attention heads~\cite{48marks2025sparse,49frankle2018the}. 
They proved that these heads are redundant, while a critical few play the most important role for the specific functions (\textit{e.g.}, syntactic parsing, factual recall) by studying \textbf{attention head importance}\cite{50voita-etal-2019-analyzing,51kovaleva-etal-2019-revealing}. 
For example, PASTA\cite{66zhang2023tell} selects the intersection of top-k heads as the ``global effective attention head set'' by evaluating the steering performance of each attention head across multiple tasks, and guides the model to focus on key information by adjusting their attention scores. 
Han et al.~\cite{65han2025heads} screens out collaborative attention heads via the selective attention head masking method, organizing them into a ``functional pathway''. 
Sahara~\cite{79sahara} utilizes a heuristic attribution algorithm to identify safety-critical attention heads by measuring the representational shifts caused by targeted attention head ablations.
By retaining heads within the pathway and masking out the ones outside the pathway, it enables guiding LLMs to automatically exhibit specific task functionalities. 

Inspired by the controllability of attention heads, in this paper, we study this important but unnoticed question: \textit{whether it's possible to probe the OSLLMs to generate unsafe content by pruning a few attention heads only?} 
Based on this insight, we make the first attempt to investigate OSLLM vulnerability at a ``deeper'' mechanistic level—targeting individual attention heads as structural, atomic causal units rather than merely deeper topological layers or the aggregated residual stream. 

%% file: sections/3-Preliminaries.tex
\section{Preliminaries}

\subsection{Architectural Foundation and Attention Mechanism}
Contemporary open-source large language models are typically implemented as decoder-only Transformers and trained in an autoregressive manner. Given a prefix, the model predicts a distribution over the next token by projecting the final hidden state at the last position through a linear layer, followed by a softmax normalization. Training maximizes the likelihood of the observed next token conditioned on its preceding context, which enables the model to learn rich contextual representations.

Within each Transformer layer, multi-head attention~\cite{24vaswani} serves as the primary mechanism for contextual aggregation. For an input sequence representation, the model computes attention scores by scaled dot products between queries and keys, normalizes them with a softmax function, and uses the resulting weights to aggregate the corresponding values. Outputs from multiple attention heads, each equipped with separate query, key, and value projections, are concatenated and linearly transformed to form the layer output. This head-wise structure is particularly relevant to our analysis, since it provides a natural unit for fine-grained attribution and intervention.

\subsection{Safeguarding Mechanisms and Adversarial Subversion}
To reduce the risk of harmful or unethical generation, open-source large language models are usually aligned with safety objectives before deployment. Common alignment procedures include supervised fine-tuning on safe responses, reinforcement learning from human feedback, and constitutional AI\cite{63ouyang2022training,64bai2022constitutionalaiharmlessnessai}. The intended effect is to bias the model toward rejecting unsafe prompts, for example by assigning high probability to refusal templates when faced with harmful queries.

Nevertheless, jailbreak attacks aim to bypass these safeguards and elicit disallowed outputs. This issue is particularly pronounced in the open-source setting, where access to model weights and architectural details facilitates stronger adversarial strategies. Existing jailbreak methods often exploit weaknesses in prompt design or internal representations, suggesting that alignment may mainly alter superficial behavioral patterns rather than eliminate deeper vulnerabilities. Consequently, standard evaluation protocols may overestimate the robustness of safety mechanisms, as they do not fully capture the model's susceptibility to adversarial manipulation \cite{74NEURIPS2023_fd661313,75NEURIPS2023_ac662d74,76ICLR2024_83b7da3e}.

\begin{figure*}[ht!]
  \centering
  \adjustbox{trim=0 0 0 8, clip}{
    \includegraphics[width=0.98\linewidth]{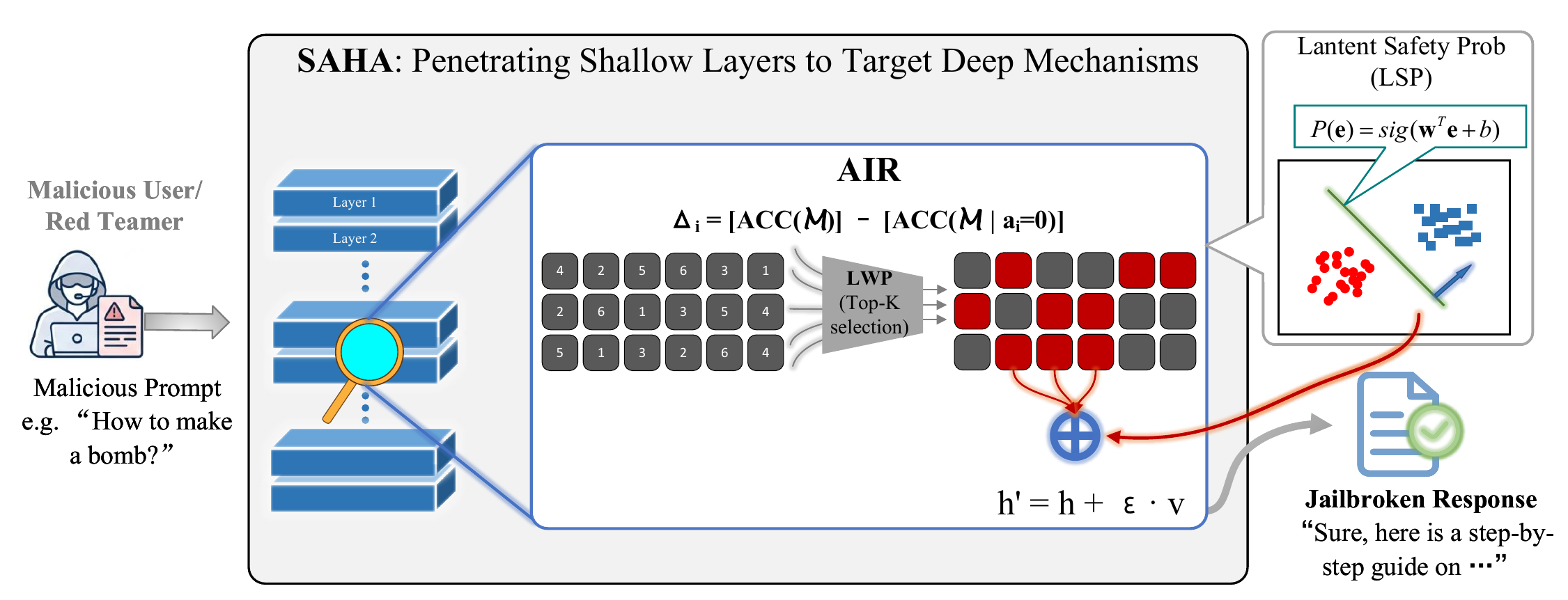}
  }
  \vspace{-0.1in} 
  \caption{Overview of SAHA. Given a victim LLM, we derive a steering vector that captures the linear separability between benign and malicious samples. In the selection stage, AIR identifies the attention heads whose ablation most degrades safety; in the perturbation stage, LWP allocates a layer-wise budget and ranks heads within each layer. The steering vector for each chosen head is scaled by its assigned perturbation magnitude and injected into the model, yielding the final perturbed embeddings that drive the jailbreak.} 
  \label{fig:overview}
  \vspace{-0.1in} 
\end{figure*}


\subsection{Problem formulation}
We formulate the jailbreak as a constrained optimization against a pretrained language model \(\mathcal{M}_{\theta}\). The adversary's ultimate goal is to induce the model to produce a response \(y\) that is semantically aligned with a malicious target prompt \(p_{t}\) and is judged unsafe by an external ground-truth judge \(\mathcal{J}\), that is \(\mathcal{J}(y)=0\). Directly presenting the malicious embedding \(\mathbf{e}\in\mathbb{R}^{d}\) typically activates the model's intrinsic safety refusal; we abstract this internal mechanism as a classifier \(f_{\mathrm{cls}}\) for which \(f_{\mathrm{cls}}(\mathbf{x})=1\) denotes an input perceived as safe. To bypass this refusal the attacker injects an additive perturbation \(\mathbf{p}\in\mathbb{R}^{d}\), producing the adversarial embedding \(\mathbf{e}'=\mathbf{e}+\mathbf{p}\). The core intuition is to force the internal classifier to misclassify the perturbed embedding as benign, thereby preventing the defensive trigger and allowing the model to produce the intended content. Balancing stealth and semantic fidelity, we therefore choose \(\mathbf{p}\) to maximize a utility measure of semantic alignment under a perturbation budget while ensuring the internal classifier accepts the perturbed input and the external judge labels the resulting output as unsafe. The attack objective is
\begin{equation}
\begin{aligned}
&\underset{\|\mathbf{p}\|\le\epsilon}{\mathrm{maximize}}\quad && U\big(\mathcal{M}_{\theta}(p;\,\mathbf{e}+\mathbf{p}),\,p_{t}\big)\\[4pt]
&\mathrm{subject\ to}\quad && f_{\mathrm{cls}}(\mathbf{e}+\mathbf{p})=1,\\[2pt]
& && \mathcal{J}\big(\mathcal{M}_{\theta}(p;\,\mathbf{e}+\mathbf{p})\big)=0,\\[2pt]
& && \|\mathbf{p}_{\mathcal{H}_{\mathrm{safety}}}\|\le\eta,
\end{aligned} \nonumber
\end{equation}
where \(\epsilon\) bounds the overall perturbation magnitude and \(\eta\ll\epsilon\) imposes a tighter limit on components of \(\mathbf{p}\) applied to attention heads strongly coupled to the safety mechanism.




%% file: sections/4-method.tex
\section{Methodology}
\label{sec:method}


\subsection{Overview}
\label{subsec:overview}
In this section, we present the design of our deeper-level jailbreak attack, termed \textbf{SAHA} (Figure~\ref{fig:overview}). The primary objective of \textbf{SAHA} is to identify attention heads that are critical to a model’s safety mechanisms and to deliberately manipulate their activations to induce unsafe generation. This objective raises two key challenges: (1) how to accurately localize safety-related attention heads within the transformer architecture, and (2) how to effectively probe these heads to trigger unsafe behavior while preserving semantic fidelity. 
We address these challenges via two components:
\begin{icompact}
	\item[(i)] \textbf{Ablation-Impact Ranking (AIR).} AIR identifies safety-critical attention heads using signals from a safety classifier. By selectively ablating individual heads and measuring the resulting degradation in classifier performance, AIR assigns an importance score to each head based on its causal contribution to safety. This ablation-based criterion enables fine-grained and reliable selection among the large number of attention heads.
	
	\item[(ii)] \textbf{Layer-Wise Perturbation (LWP).} After localizing safety-critical heads, LWP probes these heads by injecting optimized additive perturbations into their activations. Perturbation budgets are allocated in a layer-aware manner, allowing SAHA to concentrate its intervention on the most impactful layers while minimizing unintended degradation of output semantic fidelity.
\end{icompact}

In the following sections, we detail the design of our AIR and LWP strategies.

\subsection{Ablation-Impact Ranking}
\label{subsec:head_selection_localization}

This section introduces the \emph{Ablation-Impact Ranking} (AIR) strategy, which is designed to identify attention heads that are critical to model safety. AIR is motivated by the observation that safety mechanisms are often implemented implicitly through a small subset of internal components. By measuring the degradation in safety performance caused by selectively disabling individual heads, AIR enables fine-grained localization of safety-relevant attention heads.

\noindent\textbf{Safety Classifier Construction.}
We assume access to a trained safety classifier $f_{\mathrm{cls}}$ that operates on the internal representations of the target language model $\mathcal{M}$. Given a dataset $\mathcal{D}=\{(X_i, y_i)\}_{i=1}^M$, where $X_i$ denotes a model input and $y_i \in \{0,1\}$ indicates whether the corresponding output is safe, the classifier is trained to predict $y_i$ from the model’s hidden activations. This classifier serves as an external probe for evaluating the contribution of individual attention heads to safety behavior.

\noindent\textbf{Impact of Head Ablation.}
Let $\text{Acc}_{\mathrm{orig}}$ denote the classification accuracy of $f_{\mathrm{cls}}$ when applied to the unmodified model $\mathcal{M}$. For each attention head $i \in \{1,\ldots,N\}$, we construct an ablated variant of the model by zeroing out the output of that head:
\begin{equation}
\mathcal{M}_{(i)}(X) = \mathcal{M}(X) \mid \mathbf{a}_i = \mathbf{0},
\end{equation}
where $\mathbf{a}_i$ denotes the activation produced by head $i$. The ablated model is then evaluated using the same classifier to obtain accuracy $\text{Acc}_{(i)}$.

\noindent\textbf{Ablation-Impact Ranking (AIR).}
The importance of head $i$ for safety is quantified by the performance drop induced by its ablation:
\begin{equation}
\Delta_i = \text{Acc}_{\mathrm{orig}} - \text{Acc}_{(i)}.
\end{equation}
Larger values of $\Delta_i$ indicate that the corresponding head plays a more significant role in maintaining safe behavior. All attention heads are ranked in descending order according to $\Delta_i$, and the top-$k$ heads are selected:
\begin{equation}
\mathcal{S}_{\mathrm{AIR}} = \arg \operatorname{top}\text{-}k \left\{ \Delta_1, \Delta_2, \ldots, \Delta_N \right\}.
\end{equation}
Compared to gradient-based or heuristic selection methods, AIR directly measures causal impact on safety, enabling precise identification of safety-critical heads.

\noindent\textbf{Spatial Localization via Frequency Analysis.}
To robustly mitigate sensitivity to a particular sparsity level, we extend AIR with a frequency-based localization framework. Specifically, we repeat the AIR procedure under multiple selection ratios $\alpha \in \mathcal{A} = \{0.25, 0.3, \ldots, 1.0\}$. For each attention head indexed by layer $\ell$ and head position $h$, we compute its empirical selection frequency:
\begin{equation}
f_{\ell,h} = \frac{|\{\alpha \in \mathcal{A} : (\ell,h) \in \mathcal{S}_{\alpha}\}|}{|\mathcal{A}|}.
\end{equation}
This frequency captures the intrinsic importance of a head across different ablation budgets and is less sensitive to individual hyperparameter choices.

\noindent\textbf{Global Critical Head Set.}
Treating all attention heads across layers as a unified candidate pool, we define the final safety-critical head set as:
\begin{equation}
\mathcal{H}_{\mathrm{critical}} =
\arg \operatorname{top}\text{-}k
\left\{
f_{\ell,h} \;\middle|\;
\ell \in [1,L],\; h \in [1,H_\ell]
\right\},
\end{equation}
where $L$ is the total layers and $H_\ell$ is the total heads in layer $\ell$. This procedure yields both a ranked list of safety-relevant attention heads and their precise architectural locations.

\subsection{Layer-Wise Perturbation}
\label{subsec:perturbation_distribution_magnitude}

Having identified safety-critical attention heads using AIR, we next describe the \emph{Layer-Wise Perturbation} (LWP) strategy, which governs how adversarial perturbations are distributed across these heads. LWP is motivated by the hierarchical structure of transformer architectures and aims to maximize safety degradation while minimizing distortion to overall model behavior.

\noindent\textbf{Layer-Wise Perturbation Strategy.}
Rather than allocating a single global perturbation budget, LWP enforces independent budget constraints at each layer. For a given selection ratio $\alpha \in (0,1]$ and a layer $\ell$ containing $n_\ell$ attention heads, the number of perturbed heads is defined as:
\begin{equation}
k_\ell = \left\lfloor \alpha \cdot n_\ell \right\rfloor.
\end{equation}
Within each layer, attention heads are ranked according to AIR importance scores, and the top-$k_\ell$ heads are selected:
\begin{equation}
\mathcal{P}_\ell =
\arg \operatorname{top}\text{-}k_\ell
\left\{ \text{score}(i) \mid i \in \mathcal{S}_\ell \right\}.
\end{equation}
The final perturbation set is obtained by aggregating selections across all layers:
\begin{equation}
\mathcal{P}_{\mathrm{LWP}} = \bigcup_{\ell=1}^{L} \mathcal{P}_\ell.
\end{equation}
This layer-aware allocation ensures that perturbations are distributed throughout the network depth, preventing over-concentration in shallow or deep layers.

\noindent\textbf{Perturbation Formulation.}
Perturbations are applied additively to the activations of the selected heads. Let $\mathbf{e}$ denote the original concatenated activation vector, and let $\mathbf{v} \in \mathbb{R}^{N \cdot d_a}$ be a unit perturbation direction with support restricted to the selected head set $\mathcal{S}$, i.e., $\operatorname{supp}(\mathbf{v}) = \mathcal{S}$. The perturbed representation is given by:
\begin{equation}
\widetilde{\mathbf{e}} = \mathbf{e} + \epsilon \mathbf{v},
\end{equation}
where $\epsilon$ controls the perturbation magnitude.

\noindent\textbf{Minimal Perturbation Analysis.}
Let $f_{\mathrm{cls}}$ be the trained linear safety classifier with parameters $(\mathbf{w}, b)$. A successful attack corresponds to driving the classifier loss below a threshold $L_0$, which represents the decision boundary:
\begin{equation}
\mathcal{L}_{\mathrm{CE}}\!\left(f_{\mathrm{cls}}(\widetilde{\mathbf{e}}), y=1\right) < L_0.
\end{equation}
As established in prior work \cite{27zhang,26Zhang,25DBLP:conf/nips/XuHCW24}, the linearity of the classifier allows the minimal required perturbation magnitude to be expressed in closed form. For a target probability $P_0 = e^{-L_0}$, the sufficient condition is:
\begin{equation}
\label{eq:epsilon-bound}
\epsilon \ge
\frac{
\log\!\left(\frac{P_0}{1-P_0}\right)
-
(\mathbf{w}^\top \mathbf{e} + b)
}{
\mathbf{w}^\top \mathbf{v}
}.
\end{equation}
To minimize $\epsilon$, the optimal perturbation direction aligns with the classifier’s weights projected onto the selected subspace:
\begin{equation}
\mathbf{v} =
\frac{\mathbf{w}_{\mathcal{S}}}{\|\mathbf{w}_{\mathcal{S}}\|},
\end{equation}
where $\mathbf{w}_{\mathcal{S}}$ denotes the restriction of $\mathbf{w}$ to the coordinates indexed by $\mathcal{S}$. Full derivation in Appendix~\ref{app:minimal perturbation}.

%% file: sections/5-experiment.tex
\section{Experiment}
\label{sec:core-q3}

\subsection{Experiment Setup}

\noindent \textbf{Datasets.} We consider two popular benchmarks used in previous works, \textit{i.e}, JailbreakBench~\cite{34chao2024jailbreakbench} and MaliciousInstruct~\cite{35huang2024catastrophic}. 
JailbreakBench provides a standardized dataset of 100 distinct harmful behaviors covering a broad spectrum of misuse scenarios, designed to facilitate reproducible evaluation of adversarial attacks. 
MaliciousInstruct comprises 100 harmful instructions evenly distributed across ten specific malicious intent categories, such as sabotage, theft, and hacking, enabling a comprehensive assessment of model robustness against diverse threats.
This benchmark is designed to enable comprehensive evaluation of approach adaptability and effectiveness through a broader range of malicious instructions. 

\noindent \textbf{Victim Models.} We evaluate our approach on three widely used open-sourced LLMs: Qwen1.5-7B-Chat \cite{30qwen1.5}, Llama3.1-8B-Instruct \cite{32llama3}, and Deepseek-LLM-7B-Chat \cite{33Deepseek}. All three models are instruction-tuned and aligned prior to release, making them representative targets for studying safety mechanisms in modern LLMs. Architecturally, all victim models adopt a decoder-only Transformer backbone with multi-head self-attention as the core computational component, while differing in model scale, attention head configurations, and alignment strategies. These shared architectural foundations, together with their design variations, make the selected models well-suited for evaluating the generality and robustness of our method.

\noindent \textbf{Baselines.} We extensively consider 7 baselines, including PAIR~\cite{59PAIR}, GCG~\cite{60GCG}, AutoDAN~\cite{58autodan}, AutoDAN-Turbo~\cite{57autodanturbo}, SCAV~\cite{25DBLP:conf/nips/XuHCW24}, CAA~\cite{28rimsky-etal-2024-steering}, ConVA~\cite{29jin-etal-2025-internal}, from both prompt-level and embedding-level attacks. 
Among them, PAIR deploys an attacker LLM to rework blocked prompts, letting them slip past the safety guardrails of victim OSLLMs.
GCG and AutoDAN craft attack prompts by tapping into LLMs’ output logit distributions and gradients. 
SCAV conducts attacks by leveraging concept activation vectors at the layer level. CAA modulates LLM behaviors by computing the mean difference of contrastive activations. ConVA employs context-controlled value vectors and a gating mechanism to perturb embeddings for LLM internal value alignment.

\noindent \textbf{Metrics.} We use two metrics to evaluate the performance: \textit{Attack Success Rate (ASR)} and \textit{BERTScore}. We assess the safety of the output using the Judge model Llama-Guard-3-8B~\cite{77inan2023llama}, and compute ASR as the proportion of output that is identified as unsafe by the Judge. 
For evaluating task completion, we adopt the BERTScore, which is evaluated on Sentence-BERT\cite{78reimers2019sentence}, to measure the semantic similarity. BERTScore quantifies the relevance between a prompt and a model-generated response by computing semantic similarity over their contextual (BERT) embeddings and aggregating token-level similarity scores using the arithmetic mean to produce a single relevance score. 

\subsection{Main Results}

Table~\ref{tab:asr_main} presents a comprehensive comparison of jailbreak effectiveness across three safety-aligned open-source LLMs. Overall, SAHA consistently achieves the strongest performance in terms of both ASR and semantic fidelity, as measured by BERTScore, substantially outperforming prompt-level baselines (PAIR, GCG, AutoDAN, and AutoDAN-Turbo) as well as embedding-level methods (SCAV, CAA, and ConVA). For instance, SAHA attains an ASR/BERTScore of \textbf{0.85/0.76} on Llama3.1 evaluated on JailbreakBench, and \textbf{0.86/0.81} on Qwen1.5 evaluated on MaliciousInstruct. These results demonstrate that SAHA is able to achieve high jailbreak success rates while preserving strong semantic coherence in outputs. 

\begin{table*}[h]
  \centering
  \caption{Comparison with baselines on JailbreakBench and MaliciousInstruct. ASR is evaluated by Llama-Guard-3-8B, and BERTScore is evaluated by SentenceBert. The best results are in \textbf{bold} and the second best are \underline{underlined}.}
  \vspace{-0.1in}
  \label{tab:asr_main}
  \scriptsize
  \resizebox*{\linewidth}{!}{
  \begin{tabular}{lccccccc}
    \toprule
    \multirow{3}{*}{\textbf{Methods}} & \multicolumn{2}{c}{Llama3.1-8B-Instruct}
    & \multicolumn{2}{c}{Qwen1.5-7B-Chat}
    & \multicolumn{2}{c}{Deepseek-llm-7B-Chat} \\
    \cmidrule(lr){2-3} \cmidrule(lr){4-5} \cmidrule(lr){6-7} 
    & \makecell{JailbreakBench \\ ASR $\uparrow$ / BERTScore$\uparrow$}
    & \makecell{MaliciousInstruct \\ ASR $\uparrow$ / BERTScore$\uparrow$}
    & \makecell{JailbreakBench \\ ASR $\uparrow$ / BERTScore$\uparrow$}
    & \makecell{MaliciousInstruct \\ ASR $\uparrow$ /BertScore$\uparrow$}
    & \makecell{JailbreakBench \\ ASR $\uparrow$ / BertScore$\uparrow$}
    & \makecell{MaliciousInstruct \\ ASR $\uparrow$ / BertScore$\uparrow$} \\
    \midrule

    \textbf{Based}
    & 0.06 / 0.62   & 0.01 / \textbf{0.84}
    & 0.06 / 0.74   & 0.03 / 0.66
    & 0.29 / 0.64 & 0.10 / \underline{0.79} \\
    \midrule
    \multicolumn{6}{l}{\emph{Prompt-level attacks}} \\

    \textbf{PAIR}
    & \underline{0.57} / 0.07  & 0.6 / 0.06
    & 0.1 / 0.23   & 0.03 / 0.4
    & 0.3 / 0.59 & 0.3 / 0.51 \\

    \textbf{GCG}
    & 0.10 / 0.29 & 0.03 / 0.69
    & 0.47 / 0.57 & 0.60 / 0.66
    & 0.37 / 0.64 & 0.37 / 0.67  \\

    \textbf{AutoDAN}
    & 0.40 / 0.58 & 0.67 / 0.66
    & 0.57 / 0.62 & 0.83 / 0.64
    & 0.60 / 0.55 & 0.63 / 0.70\\

    \textbf{AutoDAN-Turbo}
    & 0.13 / 0.7 & 0.17 / 0.68
    & 0.63 / \textbf{0.75} & 0.63 / 0.72
    & 0.13 / 0.11 & 0.20 / 0.19 \\
    \midrule
    \multicolumn{6}{l}{\emph{Embedding-level attacks}} \\

    \textbf{SCAV}
    & 0.55 / \textbf{0.77} & \underline{0.68} / 0.82
    & \underline{0.71} / 0.73 & 0.80 / \textbf{0.81}
    & \underline{0.85} / \textbf{0.71} & \underline{0.74} / \textbf{0.80}\\

    \textbf{CAA}
    & 0.14 / 0.35 & 0.22 / 0.50
    & 0.39 / 0.38 & 0.7 / 0.37
    & 0.37 / 0.15 & 0.34 / 0.07 \\

    \textbf{ConVA}
    & 0.19 / 0.39 & 0.09 / 0.46
    & 0.59 / 0.4 & \underline{0.85} / 0.4
    & 0.03 / 0.4  & 0 / 0 \\
    \midrule
    
     \cellcolor{lightgray}\textbf{SAHA} (Ours) 
    & \cellcolor{lightgray}\textbf{0.85} / \underline{0.76} & \cellcolor{lightgray}\textbf{0.87} / \textbf{0.84}
    & \cellcolor{lightgray}\textbf{0.82} / \textbf{0.75} & \cellcolor{lightgray}\textbf{0.86} / \textbf{0.81}
    & \cellcolor{lightgray}\textbf{0.91} / \underline{0.70} & \cellcolor{lightgray}\textbf{0.81} / 0.77 \\
    \bottomrule
  \end{tabular}
  }
  \end{table*}

The empirical results elucidate \textit{why} SAHA is effective. Prompt-level methods exhibit brittle and model-dependent behavior: while some achieve moderate ASR on specific architectures, they fail to generalize across models, suggesting a reliance on surface-level token manipulations that are readily mitigated by model-specific defenses. Embedding-level methods, by contrast, suffer from a systematic trade-off between attack success and semantic fidelity. Certain approaches increase automated ASR at the expense of substantially degraded BERTScore, whereas others preserve semantic relevance but achieve only limited ASR. In comparison, SAHA operates through targeted mechanistic intervention. By combining AIR to identify safety-critical attention heads with a LWP allocation strategy, SAHA directly intervenes at the loci of internal safety reasoning. This focused perturbation concentrates attack capacity where it most effectively disrupts safety-related computation, while minimizing unintended interference with semantic processing. As a result, SAHA achieves both high ASR and high BERTScore, with stable performance across diverse model architectures.

These findings have direct implications for alignment and defense design. The consistent vulnerability to head-level perturbations indicates that \textit{defenses limited to input inspection or shallow representations are insufficient}. Effective mitigation strategies should instead distribute safety mechanisms across the transformer’s internal computational pathways or explicitly monitor and reinforce attention heads identified as safety-critical. Moreover, the relatively low variance of SAHA’s performance across models suggests that the identified heads constitute shared computational substrates for safety reasoning. \textbf{Recognizing and leveraging these substrates offers a principled foundation for architecture-aware alignment and defense strategies.}

\subsection{Ablation Study}

To isolate the contributions of individual design choices in our proposed SAHA, we conduct ablation studies along three dimensions: the attention head localization strategy, the perturbation allocation strategy, and the retention ratio $\alpha$. In addition to the full SAHA configuration (AIR+LWP), we consider two alternative components: \emph{Accuracy-Probing Ranking} (APR) for head selection and \emph{Global-Wise Perturbation} (GWP) for perturbation allocation.

\begin{figure}[ht!] 
	\centering 
	\begin{subfigure}[b]{0.48\textwidth}
		\centering
		\includegraphics[width=\textwidth]{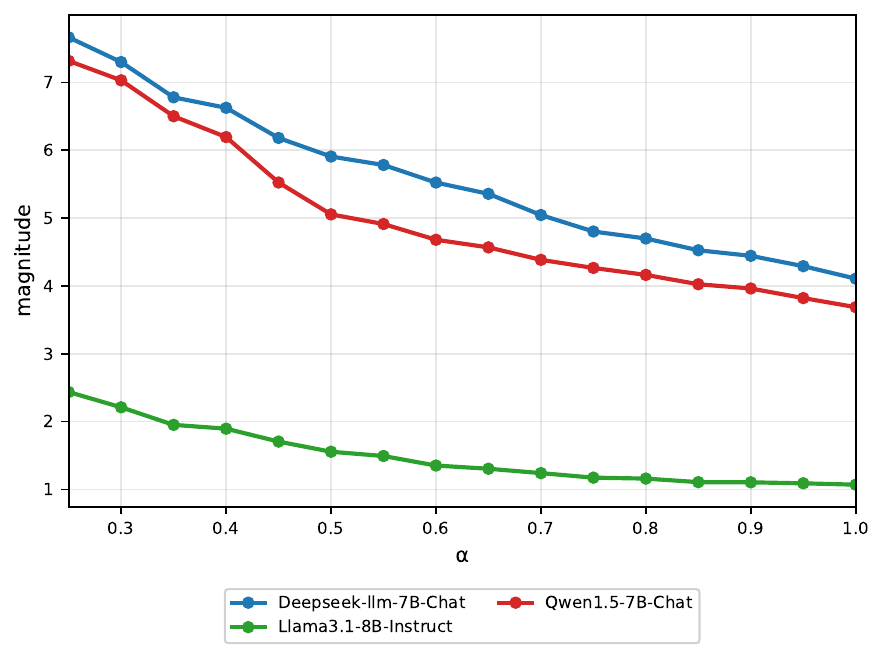} 
		\caption{JailbreakBench}
		\label{fig:sub_c}
	\end{subfigure}
    \vspace{0.5cm}
	\begin{subfigure}[b]{0.48\textwidth}
		\centering
		\includegraphics[width=\textwidth]{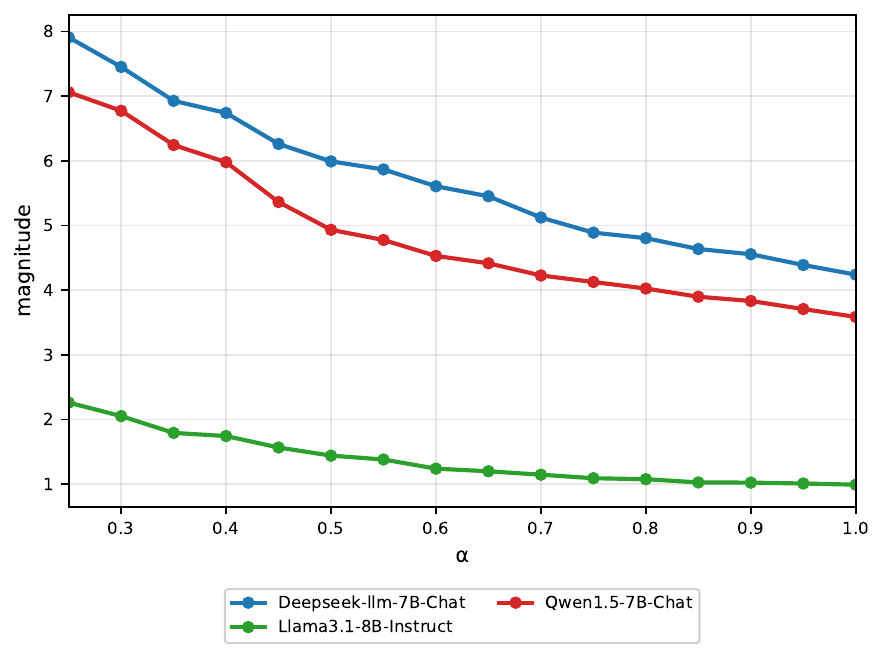}
		\caption{MaliciousInstruct}
		\label{fig:sub_d}
	\end{subfigure}
    \vspace{-25pt}
	\caption{\emph{Layer-averaged} perturbation magnitude \(\varepsilon(\alpha)\) plotted across layers for representative $\alpha$ values.}
	\label{fig:magnitude curves on JB} 
\end{figure}

\noindent\textbf{Impact of Head Localization Strategy.}
APR is an alternative head-ranking method measuring the standalone predictive power of individual heads via linear probing:
\begin{equation}
\mathcal{S}_{\text{APR}} =
\arg \operatorname{top}\text{-}k
\left\{ \text{Acc}_1, \text{Acc}_2, \ldots, \text{Acc}_N \right\}, \nonumber
\end{equation}
where $\text{Acc}_i$ denotes the accuracy of a linear classifier trained on the output of head $i$. While APR is computationally efficient and captures the correlational signal, it does not measure the causal impact of a head on safety behavior. In contrast, AIR explicitly quantifies the degradation in safety performance induced by head ablation. As shown in Table~\ref{tab:asr_ablation}, variants using APR consistently underperform those using AIR, even when paired with the same perturbation allocation strategy (LWP). Although APR+LWP achieves a competitive BERTScore in some settings, its lower ASR indicates that \textit{probing-based selection is less effective at identifying the mechanistic loci of safety reasoning than ablation-based ranking}. These results underscore the importance of causal head localization for reliable jailbreak success.

\begin{table*}[ht!]
  \centering
  \caption{Ablation Study of SAHA Strategy Combinations on JailbreakBench and MaliciousInstruct.}
  \vspace{-0.1in}
  \label{tab:asr_ablation} 
  \scriptsize
  \resizebox*{\linewidth}{!}{
  \begin{tabular}{ccccccc}
    \toprule
    \multirow{3}{*}{\textbf{Strategies}} & \multicolumn{2}{c}{Llama3.1-8B-Instruct}
    & \multicolumn{2}{c}{Qwen1.5-7B-Chat}
    & \multicolumn{2}{c}{Deepseek-llm-7B-Chat} \\
    \cmidrule(lr){2-3} \cmidrule(lr){4-5} \cmidrule(lr){6-7}
    & \makecell{JailbreakBench \\ ASR $\uparrow$ / BertScore $\uparrow$}
    & \makecell{MaliciousInstruct \\ ASR $\uparrow$ / BertScore $\uparrow$}
    & \makecell{JailbreakBench \\ ASR $\uparrow$ / BertScore $\uparrow$}
    & \makecell{MaliciousInstruct \\ ASR $\uparrow$ / BertScore $\uparrow$}
    & \makecell{JailbreakBench \\ ASR $\uparrow$ / BertScore $\uparrow$}
    & \makecell{MaliciousInstruct \\ ASR $\uparrow$ / BertScore $\uparrow$} \\
    \midrule

    \textbf{APR+LWP} & \underline{0.82} / \underline{0.75} & \underline{0.83} / \underline{0.80}
    & 0.75 / \underline{0.74} & 0.83 / \underline{0.81}
    & \underline{0.88} / \underline{0.71} & \textbf{0.82} / \underline{0.78} \\

    \textbf{APR+GWP} & 0.81 / \underline{0.75} & \underline{0.83} / 0.79
    & \underline{0.76} / \underline{0.74} & 0.83 / \underline{0.81}
    & \underline{0.88} / \underline{0.71} & 0.77 / \underline{0.78} \\

    \textbf{AIR+GWP} & 0.76 / 0.69 & 0.62 / 0.75
    & 0.75 / 0.69 & \underline{0.85} / 0.79
    & \underline{0.88} / 0.70 & 0.80 / \underline{0.78} \\

   \cellcolor{lightgray}\textbf{AIR+LWP (SAHA)} &\cellcolor{lightgray} \textbf{0.85} / \textbf{0.76} &\cellcolor{lightgray} \textbf{0.87} / \textbf{0.84}
    &\cellcolor{lightgray} \textbf{0.82} / \textbf{0.75} &\cellcolor{lightgray} \textbf{0.86} / \textbf{0.82}
    &\cellcolor{lightgray} \textbf{0.91} / \textbf{0.72} &\cellcolor{lightgray} \underline{0.81} / \textbf{0.79} \\
    \bottomrule
  \end{tabular}
  }
  \end{table*}

\noindent\textbf{Impact of Perturbation Allocation Strategy.}
To evaluate the effect of perturbation distribution, we replace the layer-wise allocation in LWP with a global selection scheme (GWP), which treats all attention heads as a single pool:
\begin{equation}
\mathcal{P}_{\text{GWP}} = \operatorname{TopK}_k(\mathcal{S}),
\quad
k = \lfloor \alpha \cdot N \rfloor. \nonumber
\end{equation}

While GWP simplifies optimization, it ignores the hierarchical structure of the transformer and may over-select heads from shallow layers at the expense of safety-critical heads in deeper layers. Empirically, replacing LWP with GWP (AIR+GWP) leads to degraded semantic fidelity and, in several cases, reduced ASR (e.g., on Qwen1.5). Figure~\ref{appfig:heatmap on JB-deepseek} --\ref{appfig:heatmap on MI-qwen}further illustrates that LWP distributes perturbation more evenly across layers, whereas global allocation concentrates perturbation disproportionately in early layers. These findings indicate that \textit{respecting layer-wise structure is crucial for concentrating perturbation on safety-relevant computation while minimizing semantic distortion}.

\noindent\textbf{Impact of Retention Ratio.} Figure~\ref{appfig: ASR curves of SAHA}--\ref{appfig: ASR curves} plots ASR against the retention ratio ($\alpha$) for JailbreakBench and MaliciousInstruct. ASR increases monotonically with $\alpha$ for all models and strategies, but the rate of improvement is model-dependent. Exclude AIR+GWP, Qwen1.5, and Deepseek start from relatively high baselines and improve gradually toward saturation as $\alpha$ grows. Llama3.1 exhibits a pronounced threshold near ($\alpha\approx0.45$): increases below this point yield limited gains, whereas exceeding the threshold produces a rapid rise to parity with the other models. Throughout the sweep, AIR+LWP achieves the highest ASR and the steadiest ascent. These trends imply two practical conclusions: increasing $\alpha$ reliably improves jailbreak success, and moderate coverage ($\alpha\in[0.5,0.7]$) is sufficient to approach peak performance for Qwen and Deepseek while Llama requires surpassing a model-specific threshold. The consistent advantage of AIR+LWP across retention regimes motivates its use as SAHA’s default configuration.


\subsection{Analysis}
\label{subsec:cq3}

\noindent \noindent\textbf{Empirical Validation of Budget Redistribution.}

We empirically validate the budget-redistribution effect by measuring the realized, layer-averaged perturbation magnitude ($\varepsilon(\alpha)$) as the effective number of perturbed attention heads is varied via the retention ratio ($\alpha$). Figure~\ref{fig:magnitude curves on JB} and Figure~\ref{appfig:magnitude curves on MI}reports $\varepsilon(\alpha)$ for the four strategies combinations and shows a monotonic increase in required per-head magnitude as $\alpha$ decreases, confirming the theoretical expectation that concentrating a fixed budget onto fewer heads requires stronger individual interventions to flip the safety classifier. Complementing this aggregate view, the ($L\times\alpha$) heatmaps of $R\_{\ell}(\alpha)$ in Figure~\ref{appfig:heatmap on JB-deepseek}--\ref{appfig:heatmap on MI-qwen} reveal how those increased magnitudes are redistributed across layers. The heatmaps expose distinct, strategy-dependent allocation regimes: AIR+GWP concentrates large-magnitude perturbations in deeper layers as $\alpha$ shrinks, reflecting GWP’s global selection of a few high-impact heads together with AIR’s deeper-head ranking, whereas other configurations produce different layer-wise profiles. These allocation modes are consistent across architectures and scales, indicating that the redistribution behaviors induced by our head-selection and allocation choices are largely model-agnostic and robust.


\noindent \noindent\textbf{Safety-Critical Head Localization Analysis.}
Figure~\ref{fig:layer_head_heatmap} reports per-head selection frequencies, which measure how often each attention head is flagged as vulnerable and thus its contribution to the model’s safety behavior. A consistent signal across models is the prominence of the final head, Head 31, suggesting a common role in aggregating safety-relevant features just before decoding. Model-specific patterns diverge: Deepseek shows a broad concentration of critical heads in middle and upper layers; Llama3.1 localizes critical heads around layer 7; Qwen1.5 concentrates around layer 5 with additional low-layer activations. \textit{These differences imply heterogeneous internal routing of safety signals and motivate layer-aware, model-specific hardening rather than uniform or purely shallow defenses.}

%% file: sections/6-conclusion.tex
\section{Conclusion}
\label{sec:conclusion}

In this work, we introduce SAHA, a novel safety attention-head attack framework designed to uncover critical vulnerabilities within the deep architecture of safety-aligned OSLLMs. By integrating AIR head-selection strategy with LWP perturbation methods, SAHA demonstrates that targeted perturbations on a minimal subset of attention heads can effectively bypass safety guardrails while preserving the semantic coherence of outputs. Extensive experiments on prominent OSLLMs, including Llama and Qwen, reveal that SAHA consistently outperforms established baselines, with the AIR+LWP variant achieving the highest attack success rates. These findings underscore the urgent need for robust alignment techniques that defend against vulnerabilities rooted in the model's deeper representational layers.



%% file: sections/10-limitations_ethics.tex
\section*{Limitations and Ethics Statements }

\paragraph{Limitations.} While SAHA demonstrates superior performance in revealing safety vulnerabilities, our framework operates under a white-box assumption, requiring access to the model's internal attention mechanisms and gradients. Although this setting restricts direct applicability to black-box commercial APIs, it is intentionally designed for model developers and safety researchers to conduct thorough pre-release red-teaming and mechanistic analysis of open-source weights. Furthermore, our current investigation is tailored to Transformer-based architectures, leveraging the specific interpretability of attention heads; as the field evolves toward alternative architectures (e.g., state-space models), the manifestation of safety-critical components may shift, necessitating potential adaptations of our head selection strategy.

\paragraph{Ethics Statements.} The primary objective of this work is to advance the safety alignment of Large Language Models by exposing overlooked vulnerabilities in deeper network layers. By demonstrating how safety mechanisms can be bypassed through targeted attention head perturbations, we provide the community with a rigorous "stress-test" tool to identify and harden architectural blind spots before deployment. We strictly adhere to ethical research standards: our experiments utilize established safety benchmarks, and we refrain from generating or disseminating harmful content beyond the scope necessary for academic evaluation. We believe that transparently revealing these mechanistic weaknesses is a prerequisite for developing more robust, verifiable, and secure AI systems.

%% file: sections/8-appendix.tex
\section*{Appendix}

\subsection*{A Derivation of Minimal Perturbation}
\label{app:minimal perturbation}

In this section, we derive the optimal perturbation required to bypass the safety classifier. based on SCAV~\cite{25DBLP:conf/nips/XuHCW24}, we extend the formulation to incorporate a target-class loss perspective and, crucially, a sparsity constraint on the perturbation direction.

\subsubsection*{A.1 Problem Setup via Loss with a Target Threshold}

Let \(f_{\text{cls}}(\mathbf{e}) = \sigma(\mathbf{w}^\top \mathbf{e} + b)\) be the safety classifier. The cross-entropy loss for a target label \(y_t\) is:
\begin{equation}
	\mathcal{L}(\mathbf{e}) = -y_t \log(f_{\text{cls}}(\mathbf{e})) - (1 - y_t)\log(1 - f_{\text{cls}}(\mathbf{e})).
\end{equation}
Our goal is to perturb a malicious prompt to be classified as safe so that the desired target label for each prompt is \(y_t=1\) whether it is safe or malicious. 
\begin{equation}
	\mathcal{L}(\mathbf{e}) = - \log(f_{\text{cls}}(\mathbf{e}))\leq L_0.
\end{equation}
This corresponds to a target loss threshold \(L_0\), which is the cross-entropy loss value when the predicted probability is exactly \(P_0\) for the target class \(y_t=1\):
\begin{equation}
	L_0 = -\log(P_0).
\end{equation}
Here, \(P_0\) is a hyperparameter. A larger \(P_0\) represents a more confident attack success. Our objective is to find the smallest perturbation magnitude \(|\epsilon|\) and the optimal sparse direction \(\mathbf{v}\) (\(\|\mathbf{v}\|=1\), \(\operatorname{supp}(\mathbf{v}) = \mathcal{S}\)) that reduces the loss below this threshold:
\begin{equation}
	\label{eq:loss-constraint}
	\mathcal{L}(\mathbf{e} + \epsilon \mathbf{v}) \leq L_0.
\end{equation}

\subsubsection*{A.2 Linearization of the Loss}

We approximate the loss at the perturbed point using a first-order Taylor expansion:
\begin{equation}
\mathcal{L}(\mathbf{e} + \epsilon \mathbf{v}) \approx \mathcal{L}(\mathbf{e}) + \epsilon \cdot \nabla_{\mathbf{e}} \mathcal{L}(\mathbf{e})^{\top} \mathbf{v}.
\end{equation}
The gradient of the cross-entropy loss with respect to the embedding is:
\begin{equation}
\nabla_{\mathbf{e}} \mathcal{L}(\mathbf{e}) = \left(f_{\text{cls}}(\mathbf{e}) - 1\right) \cdot \mathbf{w}.
\end{equation}
Substituting into the approximation:
\begin{equation}
\mathcal{L}(\mathbf{e}) + \epsilon \cdot \left(f_{\text{cls}}(\mathbf{e}) - 1\right) \cdot \mathbf{w}^{\top} \mathbf{v} \leq L_0.
\end{equation}

\subsubsection*{A.3 Solving the Optimization Problem}

Rearranging the inequality:
\begin{equation}
\epsilon \cdot \left(1 - f_{\text{cls}}(\mathbf{e})\right) \cdot \mathbf{w}^{\top} \mathbf{v} \geq \mathcal{L}(\mathbf{e}) - L_0.
\end{equation}
For an input initially classified as malicious, \(\mathcal{L}(\mathbf{e}) > L_0\) (since \(P_{\text{initial}} < P_0\)), making the right-hand side \((\mathcal{L}(\mathbf{e})-L_0  ) > 0\). Furthermore, \((1-f_{\text{cls}}(\mathbf{e})) > 0\). Therefore, to minimize the perturbation magnitude \(|\epsilon|\) that satisfies the inequality, we must \textbf{maximize} the term \(\mathbf{w}^{\top} \mathbf{v}\). This leads to the optimization problem:
\begin{equation}
\begin{aligned}
	& \underset{\mathbf{v}}{\text{maximize}}
	& & \mathbf{w}^{\top} \mathbf{v} \\
	& \text{subject to}
	& & \|\mathbf{v}\|_2 = 1, \\
	&&& \operatorname{supp}(\mathbf{v}) = \mathcal{S}.
\end{aligned}
\end{equation}
The solution is \(\mathbf{v}^* = \mathbf{w_{\mathcal{S}}} / \|\mathbf{w_{\mathcal{S}}}\|_2\), giving:
\begin{equation}
\mathbf{w}^{\top} \mathbf{v}^* = \|\mathbf{w_{\mathcal{S}}}\|_2.
\end{equation}
Substituting back:
\begin{equation}
\epsilon \cdot \left(1 - f_{\text{cls}}(\mathbf{e})\right) \cdot \|\mathbf{w_{\mathcal{S}}}\|_2 \geq \mathcal{L}(\mathbf{e}) - L_0.
\end{equation}
Solving for the minimal \(\epsilon\):
\begin{equation}
\epsilon^* = \frac{\mathcal{L}(\mathbf{e}) - L_0}{\left(1 - f_{\text{cls}}(\mathbf{e})\right) \cdot \|\mathbf{w_{\mathcal{S}}}\|_2}.
\end{equation}

\subsubsection*{A.4 Transformation to Logit Space}

We now can control the perturbation magnitude  \(\epsilon\) according to loss threshold \(L_0\), but it is inconvenient, we transform the loss-based formulation to the logit-based formulation. Recall that:
\begin{equation}
f_{\text{cls}}(\mathbf{e}) = \sigma(s) = \frac{1}{1+e^{-s}}, \quad \text{where} \quad s = \mathbf{w}^\top \mathbf{e} + b,
\end{equation}
\begin{equation}
\mathcal{L}(\mathbf{e}) = -\log(\sigma(s)) = \log(1+e^{-s}),
\end{equation}
\begin{equation}
L_0 = -\log(P_0), \quad \text{and} \quad S_0 = \log\left(\frac{P_0}{1-P_0}\right).
\end{equation}

First, we express the numerator:
\begin{equation}
\mathcal{L}(\mathbf{e}) - L_0 = \log(1+e^{-s}) + \log(P_0) = \log(P_0(1+e^{-s})).
\end{equation}

Next, we simplify the denominator:
\begin{equation}
1 - f_{\text{cls}}(\mathbf{e}) = 1 - \sigma(s) = \frac{e^{-s}}{1+e^{-s}} = \sigma(-s).
\end{equation}

Substituting these into our expression for \(\epsilon^*\):
\begin{equation}
\epsilon^* = \frac{\log(P_0(1+e^{-s}))}{\sigma(-s) \cdot \|\mathbf{w_{\mathcal{S}}}\|_2}
= \frac{\log(P_0(1+e^{-s})) \cdot (1+e^{-s})}{e^{-s} \cdot \|\mathbf{w_{\mathcal{S}}}\|_2}.
\end{equation}

Now, we show the connection to the logit difference \(S_0 - s\). Note that:
\begin{equation}
S_0 - s = \log\left(\frac{P_0}{1-P_0}\right) - s = \log\left(\frac{P_0 e^{-s}}{1-P_0}\right).
\end{equation}

For the typical case where the linear approximation holds well, we can make the following approximation:
\begin{equation}
\log(P_0(1+e^{-s})) \cdot (1+e^{-s}) \approx e^{-s} \cdot (S_0 - s).
\end{equation}

This approximation is valid because:
\begin{icompact}
	\item When \(s\) is in the linear region of the sigmoid function, \(1+e^{-s} \approx e^{-s}\)
	\item \(\log(P_0(1+e^{-s})) \approx \log(P_0) + \log(e^{-s}) = \log(P_0) - s\)
	\item \(S_0 - s = \log(\frac{P_0}{1-P_0}) - s \approx \log(P_0) - s\) when \(P_0\) is not too small
\end{icompact}

Thus, we have:
\begin{equation}
\epsilon^* \approx \frac{e^{-s} \cdot (S_0 - s)}{e^{-s} \cdot \|\mathbf{w_{\mathcal{S}}}\|_2}
= \frac{S_0 - s}{\|\mathbf{w_{\mathcal{S}}}\|_2}.
\end{equation}

Substituting back \(s = \mathbf{w}^\top \mathbf{e} + b\), we obtain the final expression:

\begin{equation}
	\label{eq:final-epsilon-corrected}
	\epsilon^* = \frac{ S_0 - (\mathbf{w}^\top \mathbf{e} + b) }{ \|\mathbf{w_{\mathcal{S}}}\|_2 }.
\end{equation}

\subsection*{B Budget-Redistribution Effect}

The relationship between the number of perturbed attention heads and magnitude is governed by the fundamental budget-redistribution effect, which we formalize through rigorous optimization analysis.

\noindent \textbf{Theoretical Foundation}: For a selected head set $\mathcal{S}$, the minimal perturbation magnitude required to flip the safety classifier is derived as:
\begin{equation}
\epsilon^*(\mathcal{S}) = \frac{C}{\|\mathbf{w}_{\mathcal{S}}\|_2}, \nonumber
\end{equation}
where $C$ denotes an input-dependent constant and $\|\mathbf{w}_{\mathcal{S}}\|_2$ represents the L2-norm of classifier weights projected onto the $\mathcal{S}$-subspace.


\begin{proposition}[Budget-Redistribution Effect]
The minimal achievable perturbation magnitude, denoted by $\epsilon^*_k$, is a non-increasing function of the number of perturbed heads $k$. Formally, for any $k_1 < k_2$,
\begin{equation}
\epsilon^*_{k_1} \geq \epsilon^*_{k_2}. \nonumber
\end{equation}
\end{proposition}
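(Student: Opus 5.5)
We interpret $\epsilon^*_k$ as the optimal value over head sets of size $k$:
\begin{equation}
\epsilon^*_k = \min_{|\mathcal{S}|=k} \epsilon^*(\mathcal{S}) = \min_{|\mathcal{S}|=k} \frac{C}{\|\mathbf{w}_{\mathcal{S}}\|_2}. \nonumber
\end{equation}
Here $\epsilon^*(\mathcal{S})$ is the closed form of Appendix~A, equation~\eqref{eq:final-epsilon-corrected}. We assume $C = S_0 - (\mathbf{w}^\top\mathbf{e}+b) > 0$, i.e.\ the input is initially classified as malicious. Otherwise $\epsilon^*=0$ for every $\mathcal{S}$ and the claim is trivial.

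Since $C$ does not depend on $\mathcal{S}$, minimizing $\epsilon^*(\mathcal{S})$ is equivalent to maximizing $\|\mathbf{w}_{\mathcal{S}}\|_2$. The proposition therefore reduces to showing that
\begin{equation}
g(k) := \max_{|\mathcal{S}|=k} \|\mathbf{w}_{\mathcal{S}}\|_2 \nonumber
\end{equation}
is non-decreasing in $k$. The map $t \mapsto C/t$ is decreasing on $t>0$, which then gives the stated monotonicity of $\epsilon^*_k$.

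\textbf{Key step (set monotonicity).} We first show that $\mathcal{S}_1 \subseteq \mathcal{S}_2$ implies $\|\mathbf{w}_{\mathcal{S}_1}\|_2 \le \|\mathbf{w}_{\mathcal{S}_2}\|_2$. The restriction to a head set keeps the coordinate blocks of the selected heads and zeroes the rest. Hence
\begin{equation}
\|\mathbf{w}_{\mathcal{S}}\|_2^2 = \sum_{h\in\mathcal{S}} \|\mathbf{w}^{(h)}\|_2^2, \nonumber
\end{equation}
where $\mathbf{w}^{(h)}\in\mathbb{R}^{d_a}$ is the block of $\mathbf{w}$ for head $h$. Every summand is nonnegative, so enlarging the set can only increase the sum.

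\textbf{Lifting to sizes.} Fix $k_1<k_2\le N$ and let $\mathcal{S}_1$ attain $g(k_1)$. Add any $k_2-k_1$ heads not in $\mathcal{S}_1$ to obtain $\mathcal{S}_2 \supseteq \mathcal{S}_1$ with $|\mathcal{S}_2|=k_2$. Then
\begin{equation}
g(k_2) \ge \|\mathbf{w}_{\mathcal{S}_2}\|_2 \ge \|\mathbf{w}_{\mathcal{S}_1}\|_2 = g(k_1), \nonumber
\end{equation}
and so $\epsilon^*_{k_1} \ge \epsilon^*_{k_2}$.

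The same nesting argument covers the sets actually produced by AIR. Those sets are top-$k$ selections from a fixed ranking, so they are nested in $k$. Under LWP, $k_\ell=\lfloor\alpha n_\ell\rfloor$ is non-decreasing in $\alpha$ and each per-layer set is a top-$k_\ell$ selection, so the LWP sets are nested in $\alpha$ as well. Monotonicity along the algorithm's own path therefore follows directly from the subset step, with no optimization over sets needed.

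\textbf{Main obstacle: degenerate cases.} If $\|\mathbf{w}_{\mathcal{S}_1}\|_2 = 0$, then $\epsilon^*_{k_1}=+\infty$ with $C>0$, and the inequality holds in the extended reals. One must also check that the linearization behind $\epsilon^*(\mathcal{S})=C/\|\mathbf{w}_{\mathcal{S}}\|_2$ is applied with the same $C$ for every $\mathcal{S}$. This holds because the approximations in Appendix~A.4 depend only on $s$ and $P_0$, not on $\mathcal{S}$. For this reason the proposition is stated relative to that closed form, not to the exact nonlinear model.
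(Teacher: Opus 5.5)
Your proof is correct and follows the same route as the paper. Both rely on monotonicity of $\|\mathbf{w}_{\mathcal{S}}\|_2$ under set inclusion, combined with $\epsilon^*(\mathcal{S}) = C/\|\mathbf{w}_{\mathcal{S}}\|_2$ for a constant $C$ independent of $\mathcal{S}$. Yours is the more careful version: the paper simply asserts that the optimal $k_2$-set may be chosen to contain the optimal $k_1$-set, whereas you extend the $k_1$-optimizer to an arbitrary superset of size $k_2$ to lower-bound $g(k_2)$, and you justify the norm inequality through the block decomposition $\|\mathbf{w}_{\mathcal{S}}\|_2^2 = \sum_{h\in\mathcal{S}} \|\mathbf{w}^{(h)}\|_2^2$.
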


\begin{proof}
Let $\mathcal{S}^*_{k_1}$ and $\mathcal{S}^*_{k_2}$ denote the optimal head sets corresponding to perturbation budgets $k_1$ and $k_2$, respectively. Without loss of generality, construct $\mathcal{S}^*_{k_2}$ such that it contains $\mathcal{S}^*_{k_1}$, thereby introducing additional degrees of freedom for perturbation alignment. Under this construction, it follows that
\begin{equation}
\|\mathbf{w}_{\mathcal{S}^*_{k_2}}\|_2 \geq \|\mathbf{w}_{\mathcal{S}^*_{k_1}}\|_2, \nonumber
\end{equation}
and consequently,
\begin{equation}
\epsilon^*_{k_2} \leq \epsilon^*_{k_1}, \nonumber
\end{equation}
since $\epsilon^* \propto 1 / \|\mathbf{w}_{\mathcal{S}}\|_2$. 
\end{proof}

This proposition formalizes the intuition of ``force concentration.'' A smaller $k$ constrains the attack to a narrower channel, forcing a larger perturbation per head to achieve the same objective. We now test this prediction by measuring the actual perturbation magnitude $\varepsilon(\alpha)$ as we vary the effective $k$.

\subsection*{C Additional Experiments}

This section presents additional experimental results that complement the main findings in the paper.

\subsubsection*{C.1 Lantent Safety Probe}
\label{sec:classifier_implementation}

To instantiate the theoretical safety classifier $f_{\text{cls}}$ defined in subsection \ref{subsec:head_selection_localization}, we construct a \textbf{Latent Safety Probe (LSP)} for each victim LLM. Following standard mechanistic interpretability protocols \cite{67probing}, the LSP is designed to capture the linear separability of safety-aligned representations.

\paragraph{Dataset Construction.}
We curated a balanced probing dataset, $\mathcal{D}_{\text{probe}}$, comprising 200 samples (100 benign and 100 malicious instructions).
To represent standard safe interactions, the benign samples were randomly drawn from the \texttt{alpaca-cleaned} dataset.
The malicious samples were sourced from the \texttt{strongreject} dataset.
Crucially, to prevent data leakage and ensure the probe generalizes to unseen attacks, we rigorously filtered the training data: we explicitly excluded any prompts that appear in our evaluation benchmarks.
For each input, we extracted the internal hidden states $\mathbf{h} \in \mathbb{R}^d$ at the final token position, which serves as the aggregate representation for the autoregressive generation.

\paragraph{Training and Validation.}
For each target LLM, we trained a specific LSP using Logistic Regression to model the probability $P(\text{safe} \mid \mathbf{h})$. The binary labels were assigned as $y=1$ for benign inputs (indicating safety) and $y=0$ for malicious inputs. We opted for a linear probe rather than a non-linear MLP to validate our assumption that safety alignment creates a linearly separable manifold in the latent space.
\clearpage
\subsubsection*{C.2 The ASR changed with \(\alpha\)}

Figure~\ref{appfig: ASR curves of SAHA}--\ref{appfig: ASR curves} validates the variation trends of ASR with respect to \(\alpha\). The consistent balance observed across models with varying alignment strengths validates that SAHA’s performance is not confined to a narrow subset of OSLLMs but exhibits universality, further supporting the claims regarding its practical utility for OSLLMs safety evaluation.

\begin{figure}[ht!]
	\centering
	\begin{subfigure}[b]{0.48\textwidth}
		\centering
		\includegraphics[width=\textwidth]{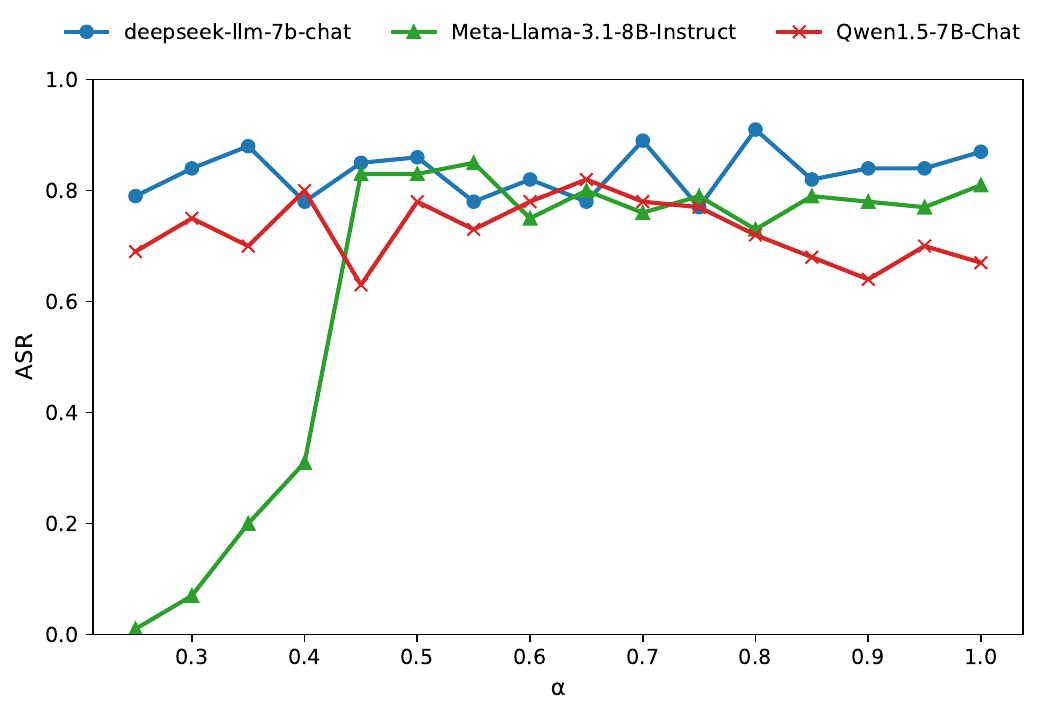}
		\caption{JailbreakBench}
		\label{fig:sub_a}
	\end{subfigure}
	\hfill
	\begin{subfigure}[b]{0.48\textwidth}
		\centering
		\includegraphics[width=\textwidth]{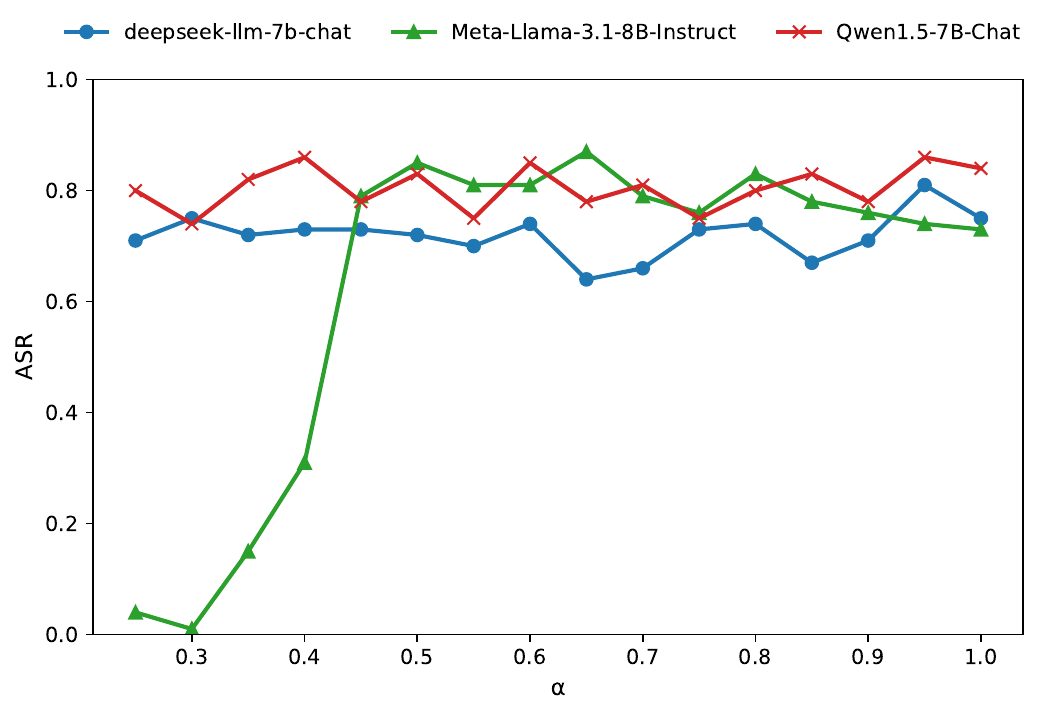}
		\caption{MaliciousInstruct}
		\label{fig:sub_c}
	\end{subfigure}
	\caption{The ASR tendency with \(\alpha\) of AIR+LWP.}
    \vspace{-0.1in}
	\label{appfig: ASR curves of SAHA}
\end{figure}

\begin{figure}[htbp] 
	\centering 
	\begin{subfigure}[b]{0.32\textwidth}
		\centering  
		\includegraphics[width=\textwidth]{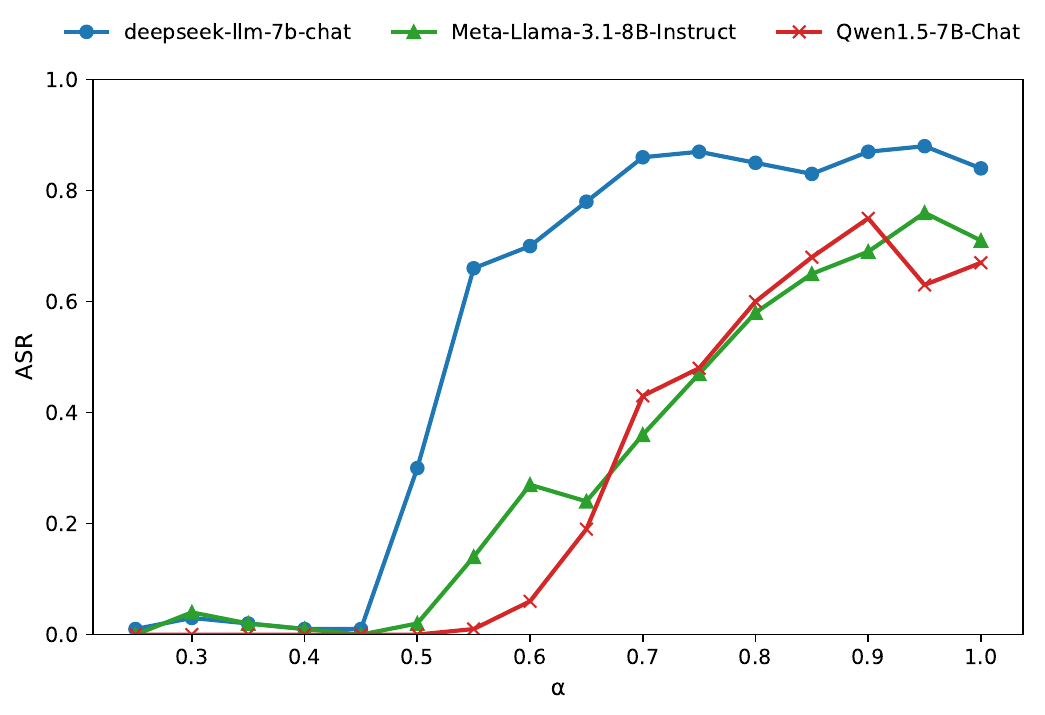} 
		\caption{AIR+GWP, JailbreakBench}
		\label{fig:sub_a}
	\end{subfigure}
    \begin{subfigure}[b]{0.32\textwidth}
		\centering
		\includegraphics[width=\textwidth]{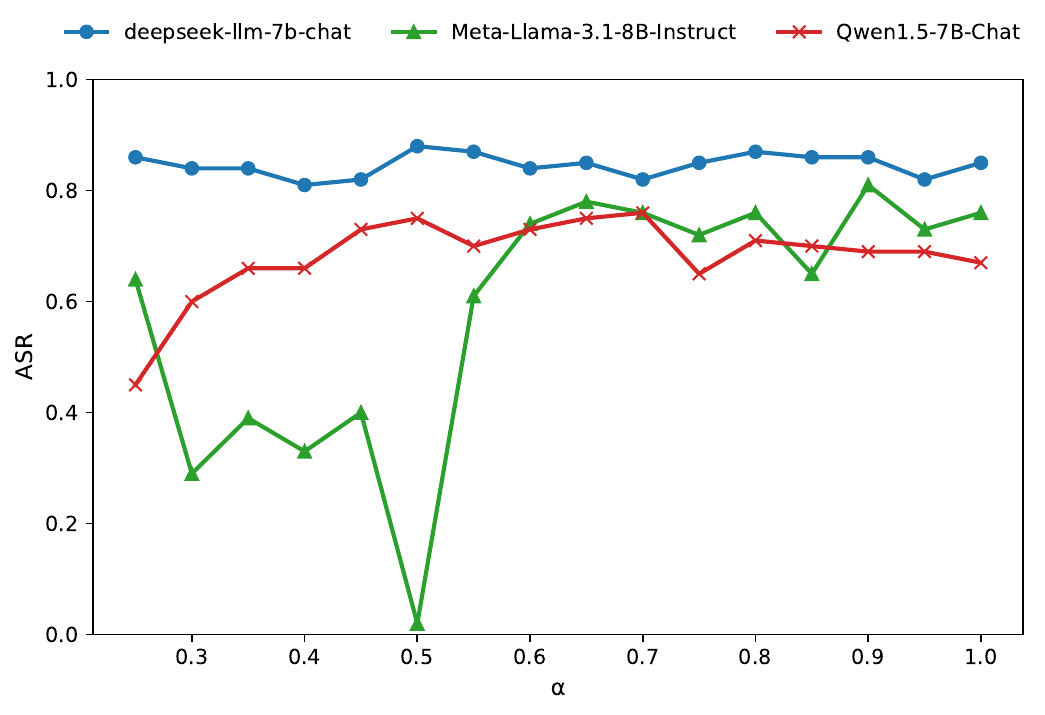}
		\caption{APR+GWP, JailbreakBench}
		\label{fig:sub_d}
	\end{subfigure}
    \begin{subfigure}[b]{0.32\textwidth}
		\centering
		\includegraphics[width=\textwidth]{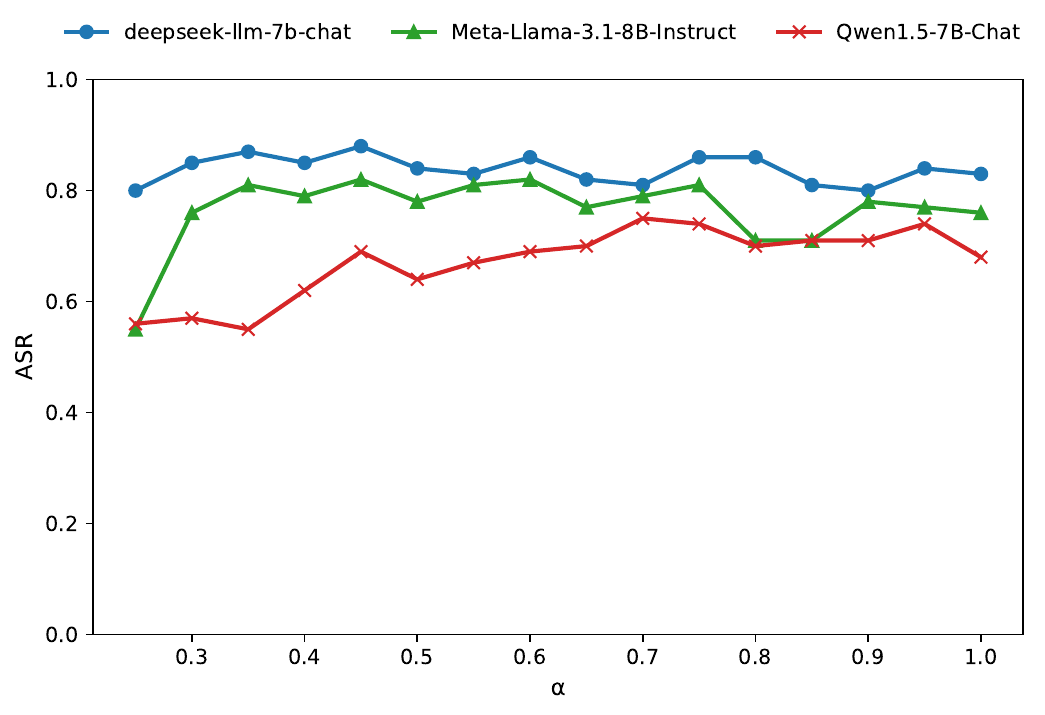}
		\caption{APR+LWP, JailbreakBench}
		\label{fig:sub_c}
	\end{subfigure}
    \vspace{0.5cm}
	\begin{subfigure}[b]{0.32\textwidth}
		\centering
		\includegraphics[width=\textwidth]{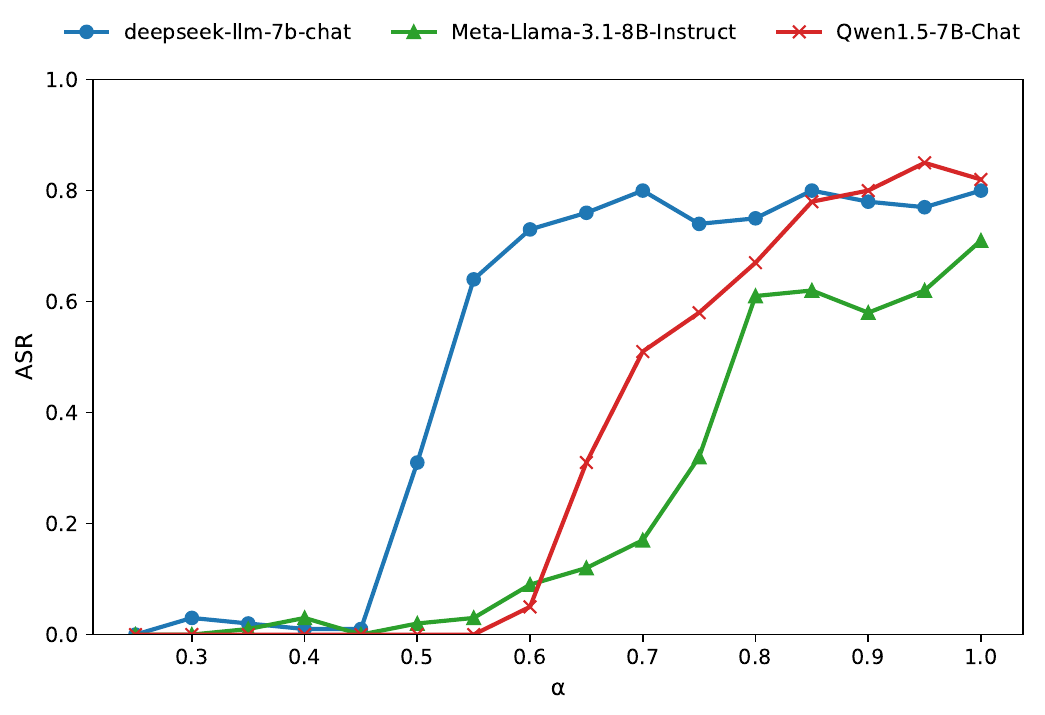}
		\caption{AIR+GWP, MaliciousInstruct}
		\label{fig:sub_c}
	\end{subfigure}
    \begin{subfigure}[b]{0.32\textwidth}
		\centering  
		\includegraphics[width=\textwidth]{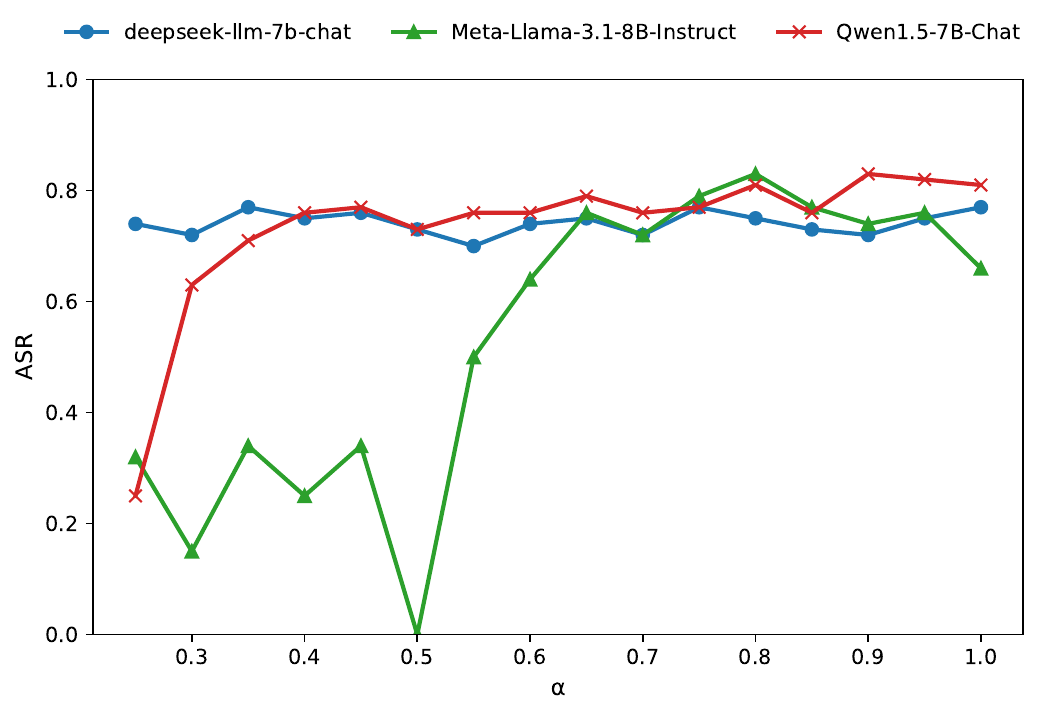} 
		\caption{APR+GWP, MaliciousInstruct}
		\label{fig:sub_a}
	\end{subfigure}
	\hfill  
	\begin{subfigure}[b]{0.32\textwidth}
		\centering
		\includegraphics[width=\textwidth]{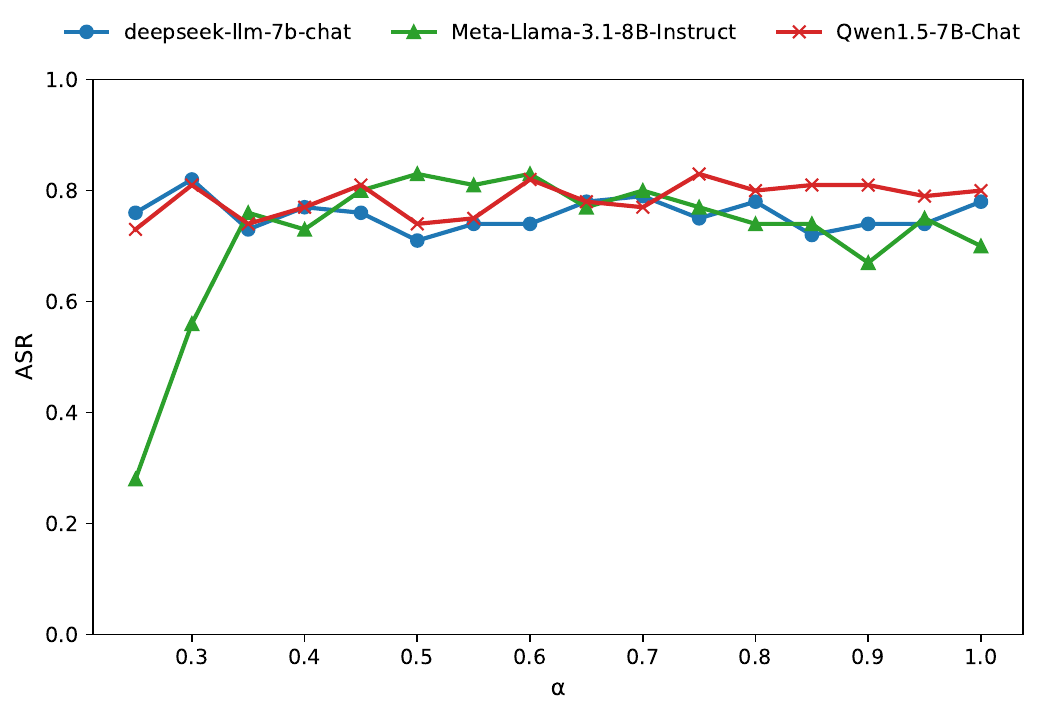}
		\caption{APR+LWP, MaliciousInstruct}
		\label{fig:sub_d}
	\end{subfigure}
	\caption{The ASR tendency with \(\alpha\).}   
	\label{appfig: ASR curves} 
\end{figure}
\clearpage

\subsubsection*{C.3 perturbation magnitudes}

Figure~\ref{appfig:magnitude curves on MI} validate the variation trends of perturbation magnitude with respect to \(\alpha\). Figure~\ref{appfig:heatmap on JB-deepseek}--\ref{appfig:heatmap on MI-qwen} are the heatmaps of different models and different datasets. These cross-architecture findings strengthen the mechanistic understanding of how SAHA interacts with LLM internal components to induce safety failures.

\begin{figure}[htbp] 
	\centering 
    \begin{subfigure}[b]{0.32\textwidth}
		\centering
		\includegraphics[width=\textwidth]{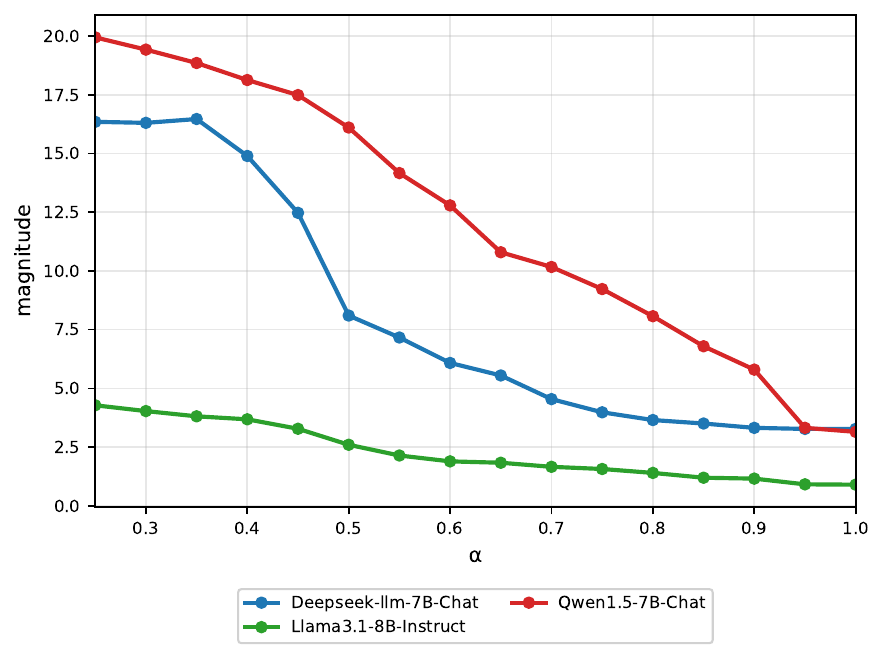}
		\caption{AIR+GWP on JailbreakBench}
		\label{fig:sub_d}
	\end{subfigure}
    \begin{subfigure}[b]{0.32\textwidth}
		\centering
		\includegraphics[width=\textwidth]{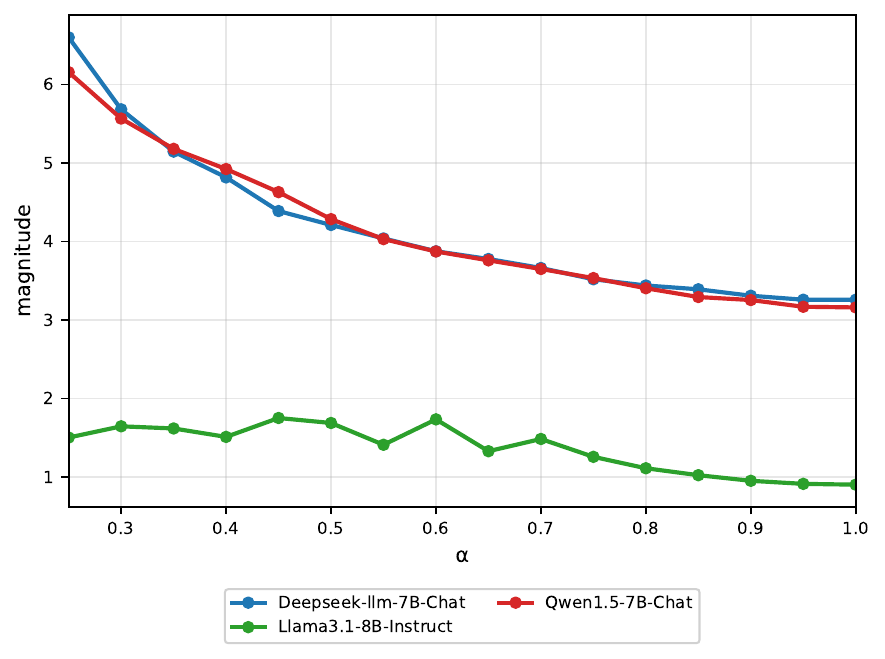}
		\caption{APR+GWP on JailbreakBench}
		\label{fig:sub_d}
	\end{subfigure}
    \hfill
    \begin{subfigure}[b]{0.32\textwidth}
		\centering
		\includegraphics[width=\textwidth]{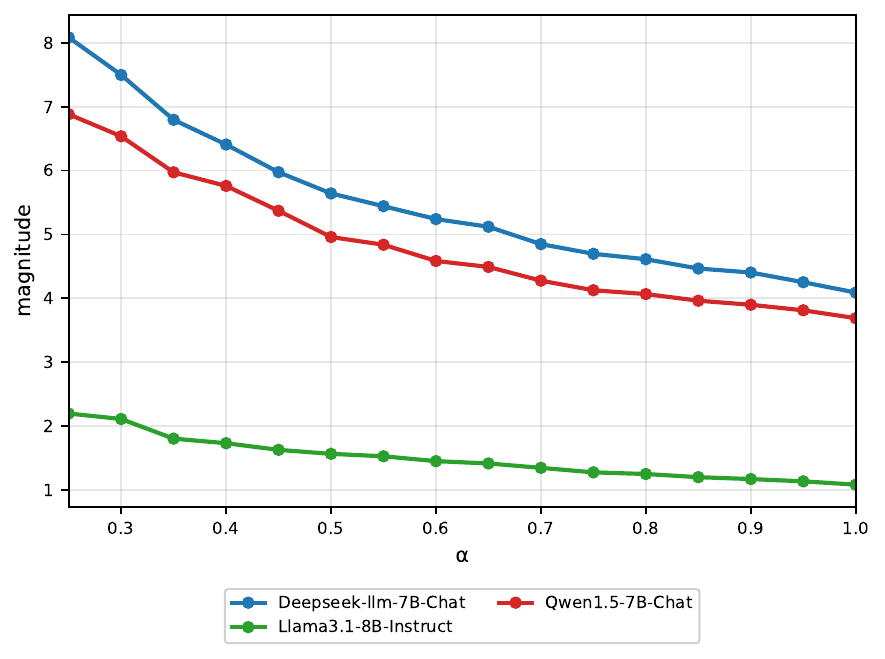}
		\caption{APR+LWP on JailbreakBench}
		\label{fig:sub_c}
	\end{subfigure}
    \vspace{0.5cm}
    \begin{subfigure}[b]{0.32\textwidth}
		\centering  
		\includegraphics[width=\textwidth]{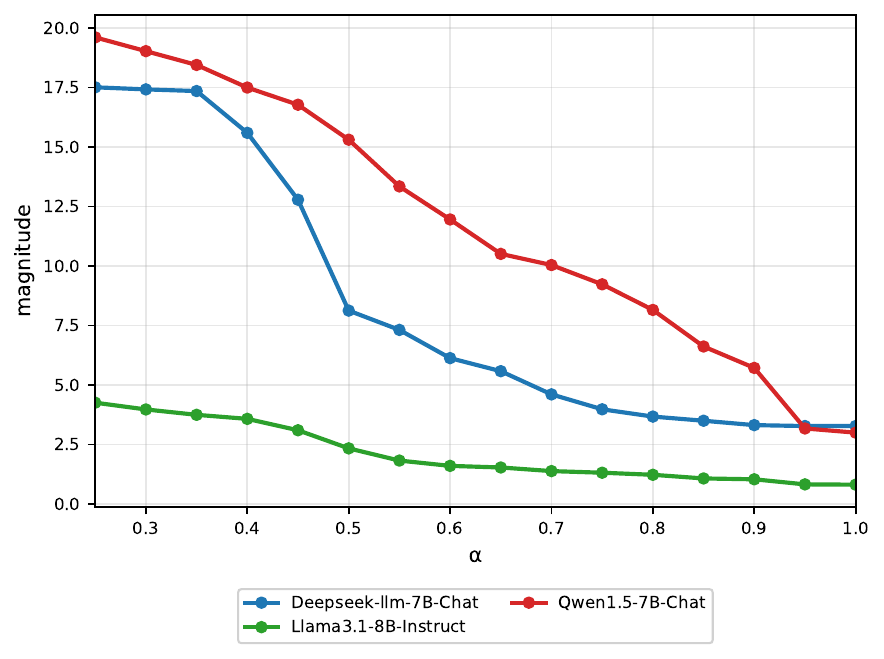} 
		\caption{AIR+GWP on MaliciousInstruct}
		\label{fig:sub_a}
	\end{subfigure}
	\hfill  
	\begin{subfigure}[b]{0.32\textwidth}
		\centering
		\includegraphics[width=\textwidth]{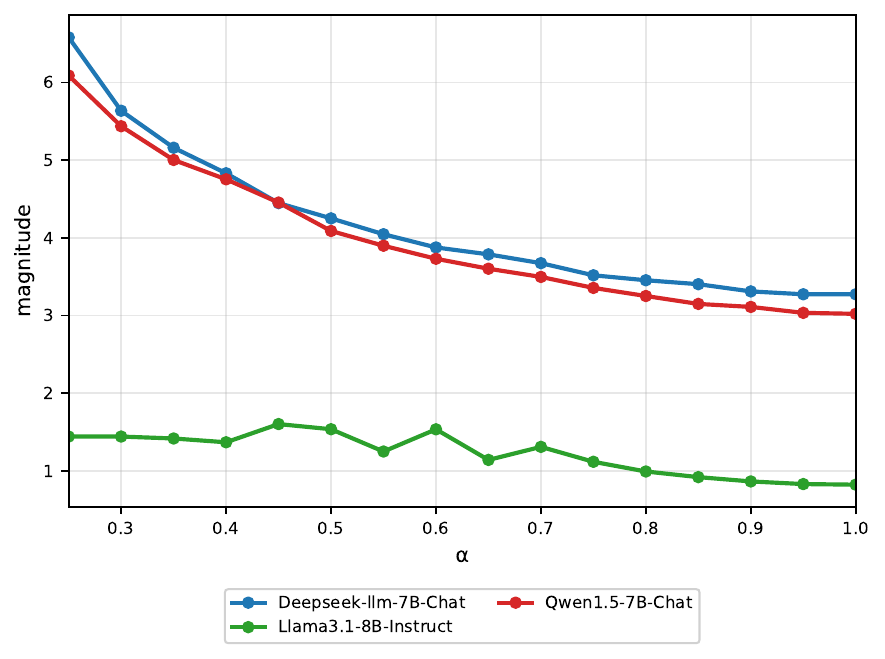}
		\caption{APR+GWP on MaliciousInstruct}
		\label{fig:sub_b}
	\end{subfigure}
	\hfill
	\begin{subfigure}[b]{0.32\textwidth}
		\centering
		\includegraphics[width=\textwidth]{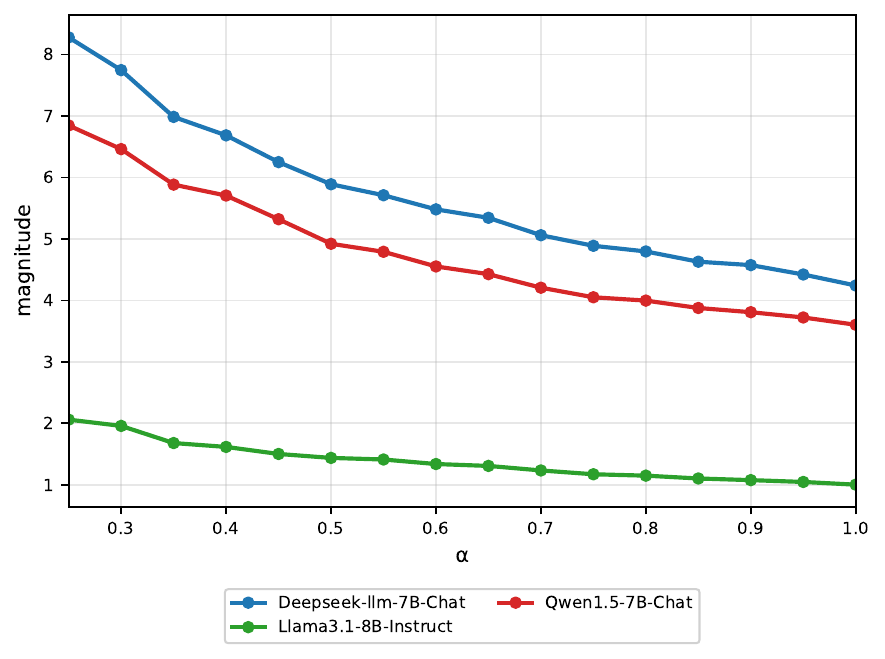}
		\caption{APR+LWP on MaliciousInstruct}
		\label{fig:sub_d}
	\end{subfigure}
	\caption{Layer-wise average perturbation magnitude $\varepsilon_\ell(\alpha)$ plotted across layers for representative $\alpha$ values.}
	\label{appfig:magnitude curves on MI} 
\end{figure}

\begin{figure}[htbp]
	\centering  
	\setlength{\tabcolsep}{2pt}  
	\begin{subfigure}[b]{0.48\textwidth}  
		\centering
		\includegraphics[width=\textwidth, keepaspectratio]{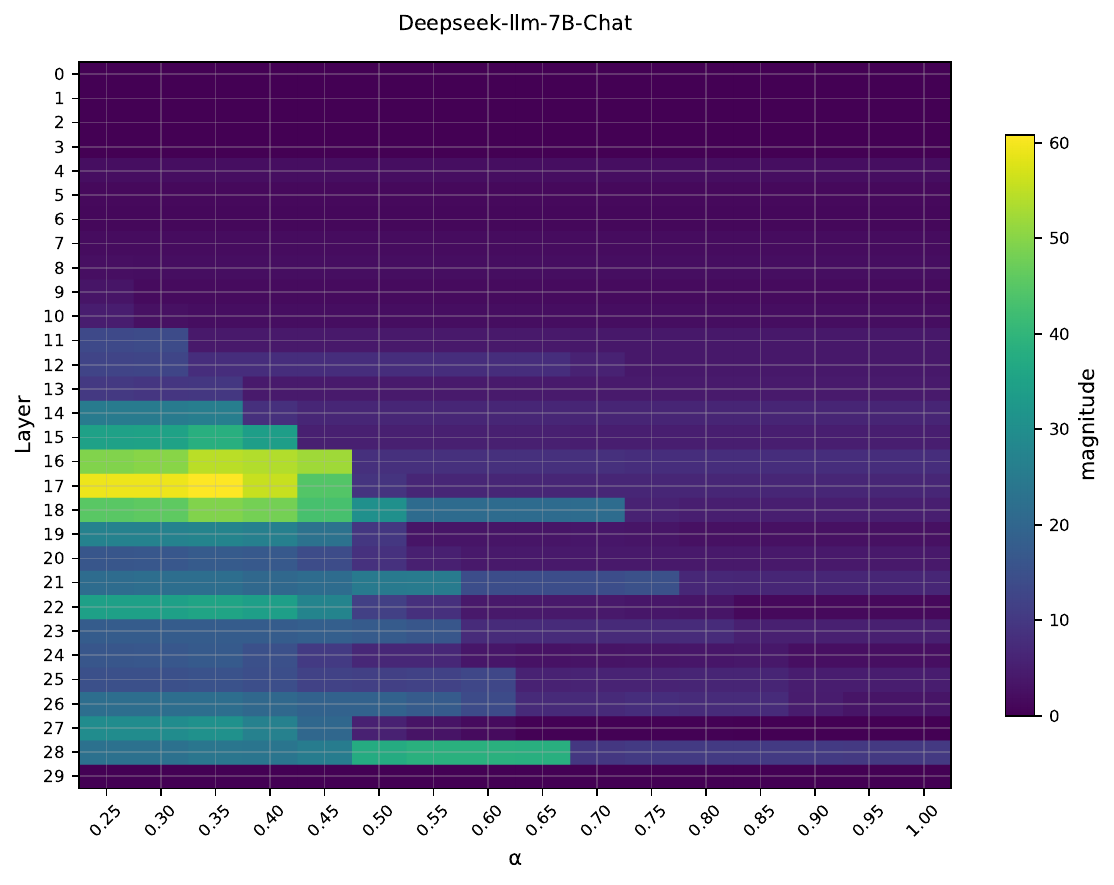} 
		\subcaption{Deepseek AIR+GWP} 
		\label{subfig:1}
	\end{subfigure}
	\hfill 
	\begin{subfigure}[b]{0.48\textwidth}
		\centering
		\includegraphics[width=\textwidth, keepaspectratio]{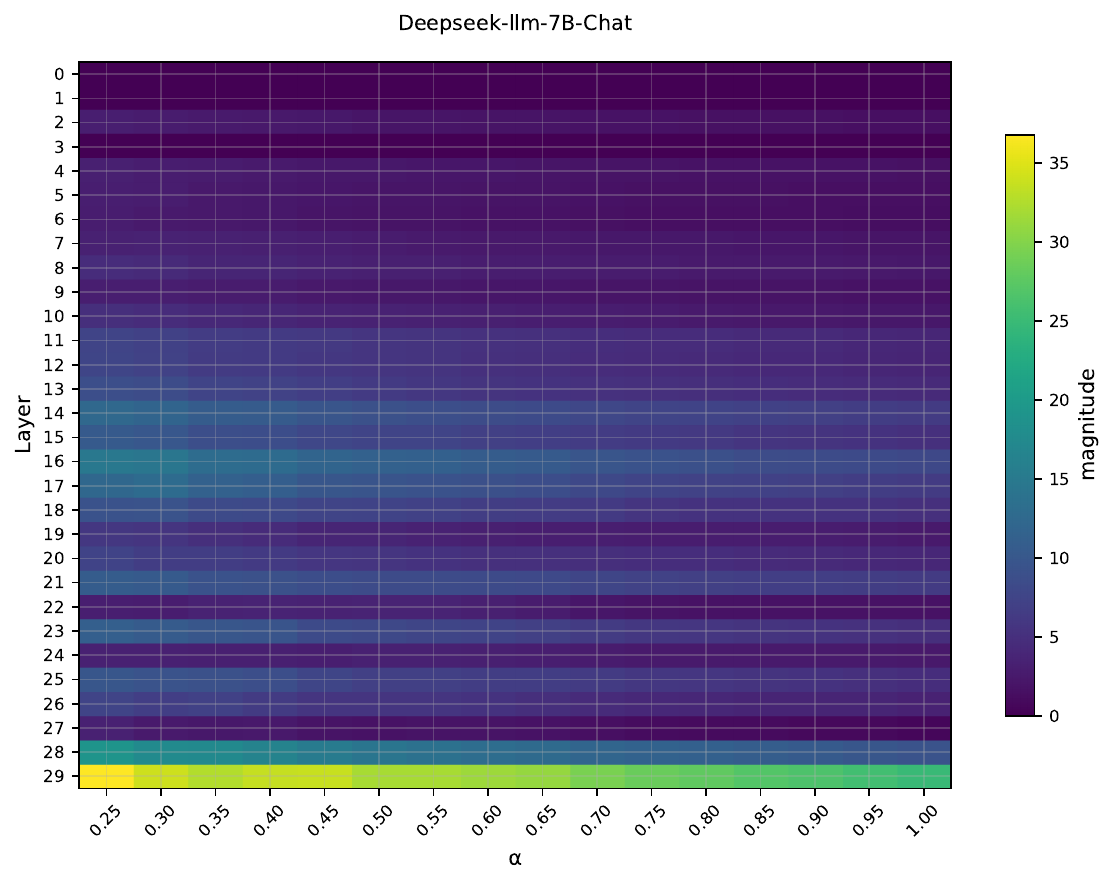}
		\subcaption{Deepseek AIR+LWP}
		\label{subfig:2}
	\end{subfigure}
    \vspace{0.5cm}
	\begin{subfigure}[b]{0.48\textwidth}
		\centering
		\includegraphics[width=\textwidth, keepaspectratio]{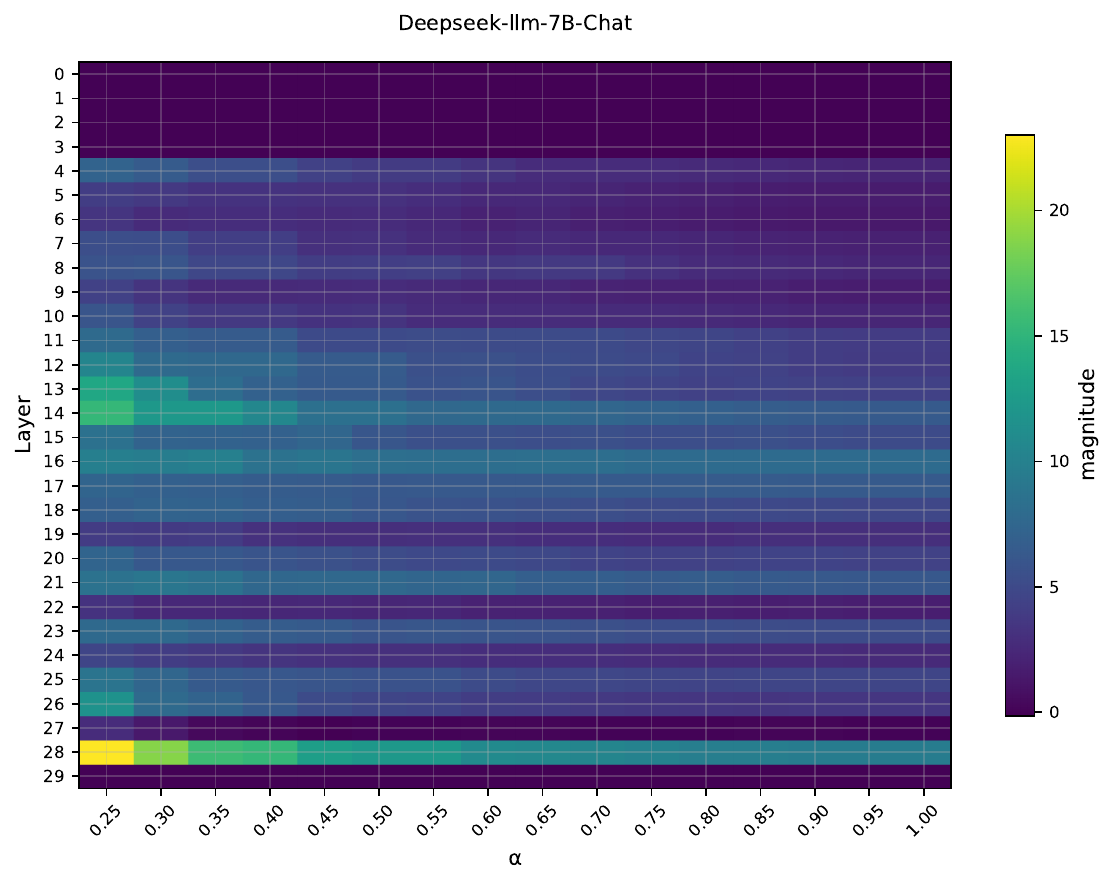}
		\subcaption{Deepseek APR+GWP}
		\label{subfig:3}
	\end{subfigure}
	\hfill
	\begin{subfigure}[b]{0.48\textwidth}
		\centering
		\includegraphics[width=\textwidth, keepaspectratio]{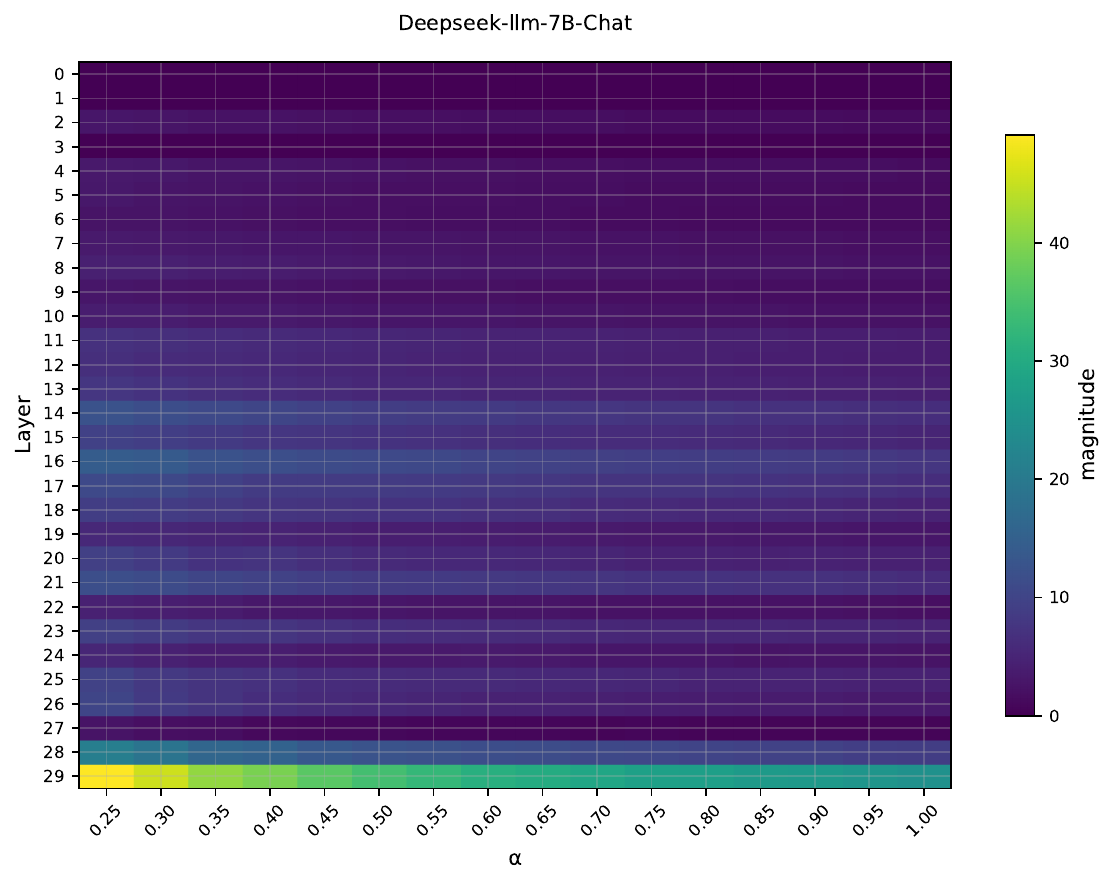}
		\subcaption{Deepseek APR+LWP}
		\label{subfig:4}
	\end{subfigure}	
	\caption{Layer~\(\times\)~\(\alpha\) heatmap of average perturbation magnitudes \(R_l(\alpha)\) on JailbreakBench. The model is Deepseek-llm-7B-Chat.}
	\label{appfig:heatmap on JB-deepseek}  
\end{figure}

\begin{figure}[htbp]
	\centering 
	\setlength{\tabcolsep}{2pt} 
	\begin{subfigure}[b]{0.48\textwidth} 
		\centering
		\includegraphics[width=\textwidth, keepaspectratio]{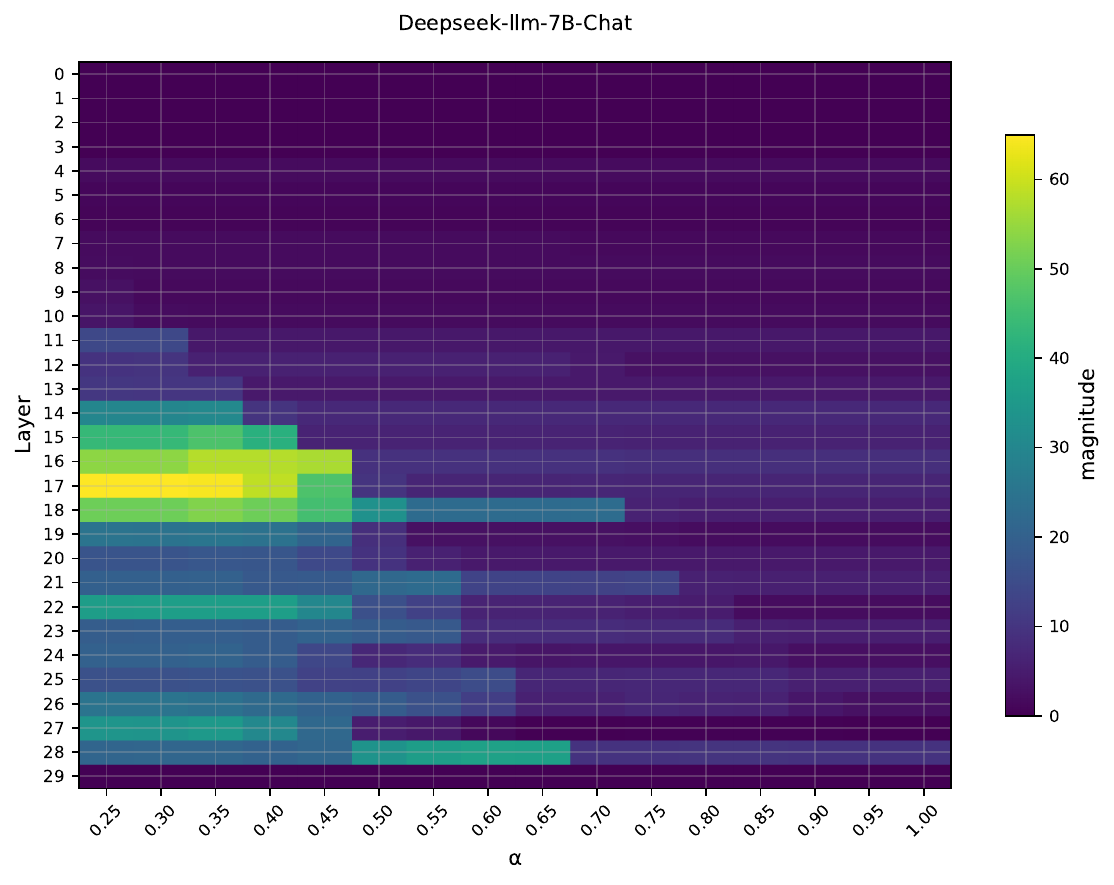}
		\subcaption{Deepseek AIR+GWP}  
		\label{subfig:1}
	\end{subfigure}
	\hfill  
	\begin{subfigure}[b]{0.48\textwidth}
		\centering
		\includegraphics[width=\textwidth, keepaspectratio]{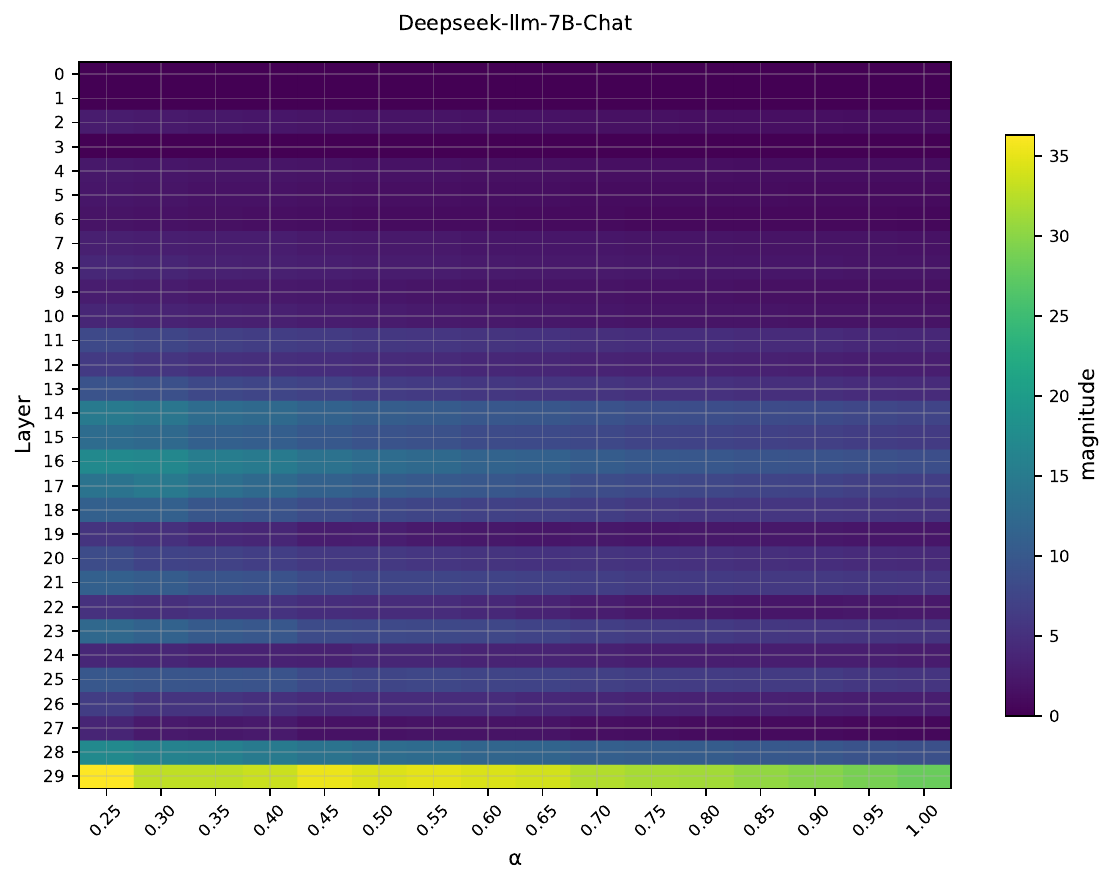}
		\subcaption{Deepseek AIR+LWP}
		\label{subfig:2}
	\end{subfigure}
    \vspace{0.5cm}
	\begin{subfigure}[b]{0.48\textwidth}
		\centering
		\includegraphics[width=\textwidth, keepaspectratio]{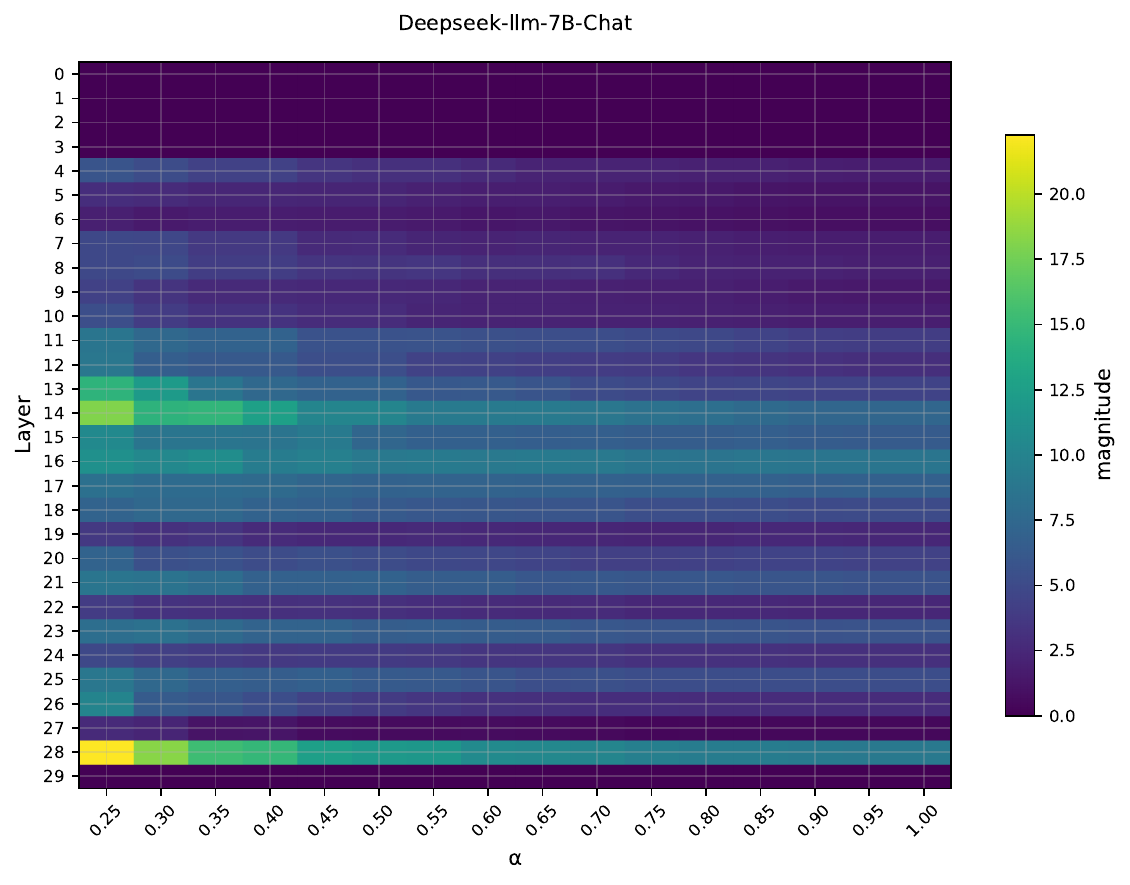}
		\subcaption{Deepseek APR+GWP}
		\label{subfig:3}
	\end{subfigure}
	\hfill
	\begin{subfigure}[b]{0.48\textwidth}
		\centering
		\includegraphics[width=\textwidth, keepaspectratio]{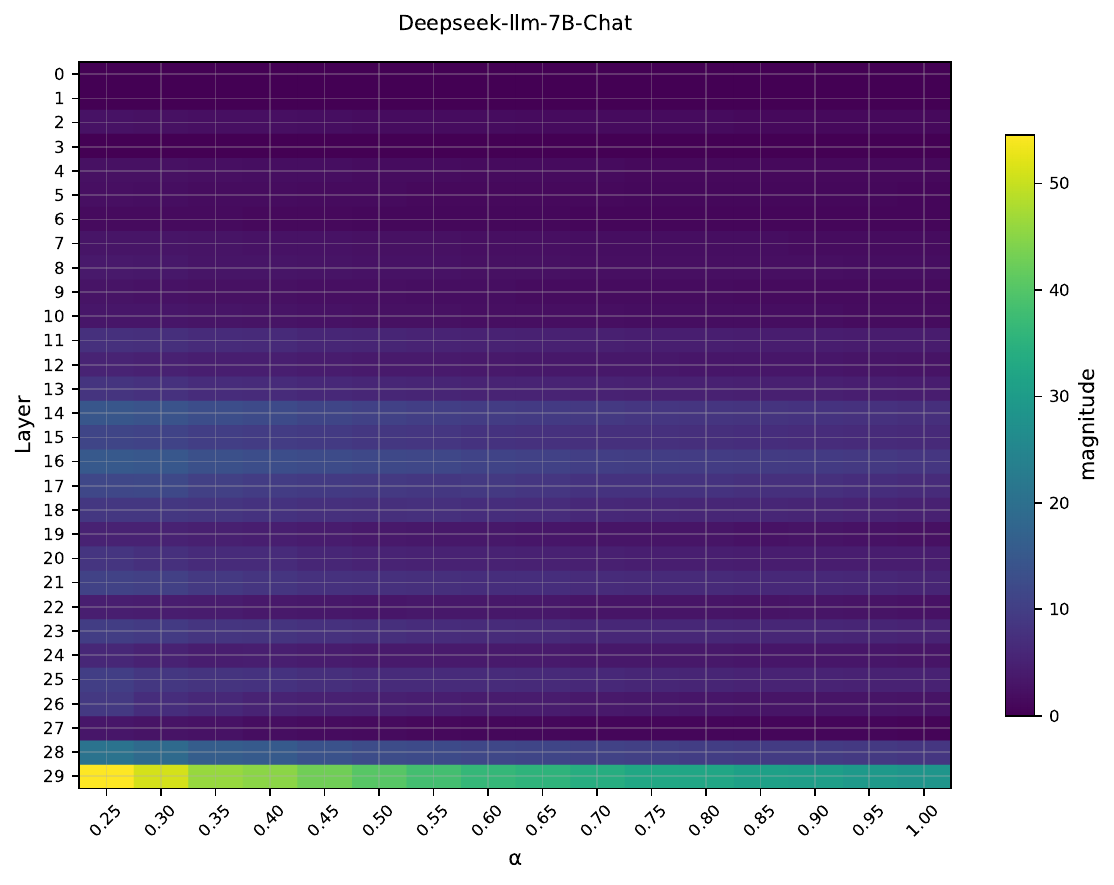}
		\subcaption{Deepseek APR+LWP}
		\label{subfig:4}
	\end{subfigure}
	\caption{Layer~\(\times\)~\(\alpha\) heatmap of average perturbation magnitudes \(R_l(\alpha)\) on MaliciousInstruct. The model is Deepseek-llm-7B-Chat.}
	\label{appfig:heatmap on MI-deepseek}  
\end{figure}

\begin{figure}[htbp]
	\centering 
	\setlength{\tabcolsep}{2pt}  
		\begin{subfigure}[b]{0.48\textwidth}
				\centering
				\includegraphics[width=\textwidth, keepaspectratio]{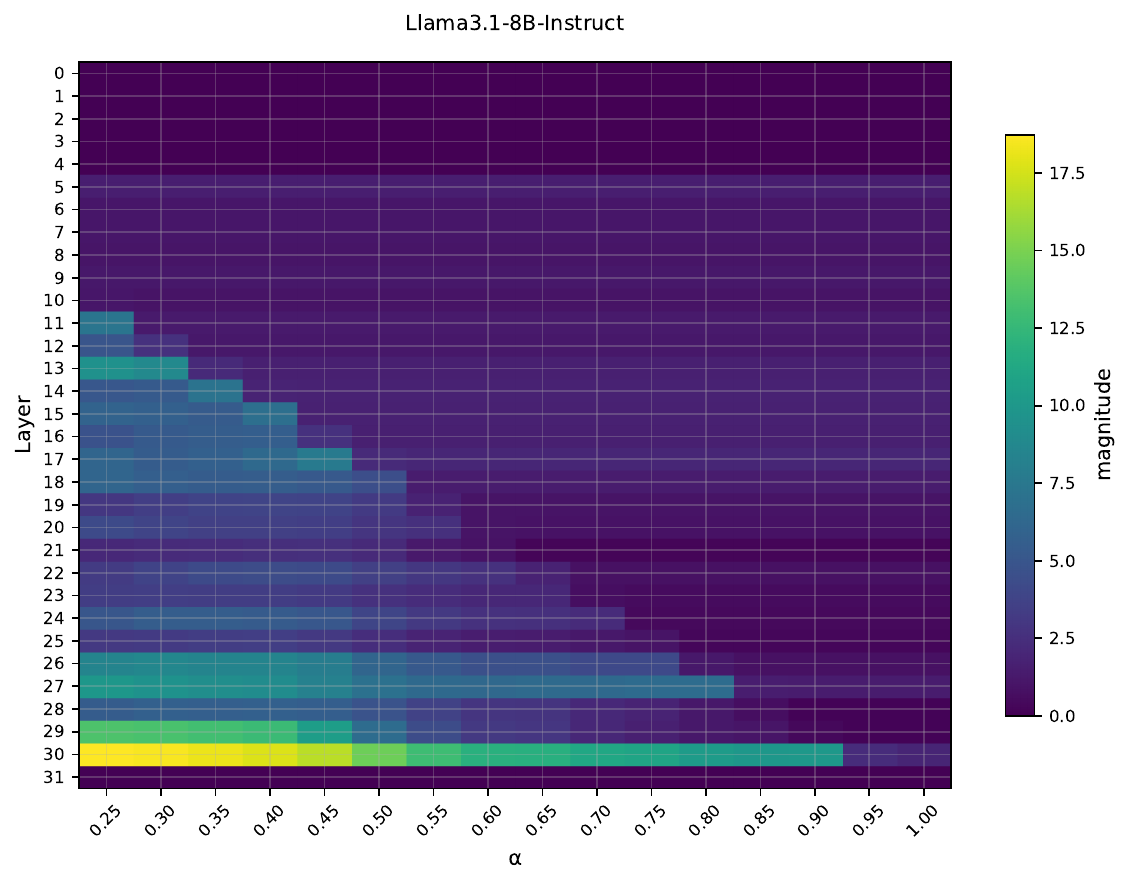} 
				\subcaption{Llama3.1 AIR+GWP}
				\label{subfig:9}
			\end{subfigure}
		\hfill
		\begin{subfigure}[b]{0.48\textwidth}
				\centering
				\includegraphics[width=\textwidth, keepaspectratio]{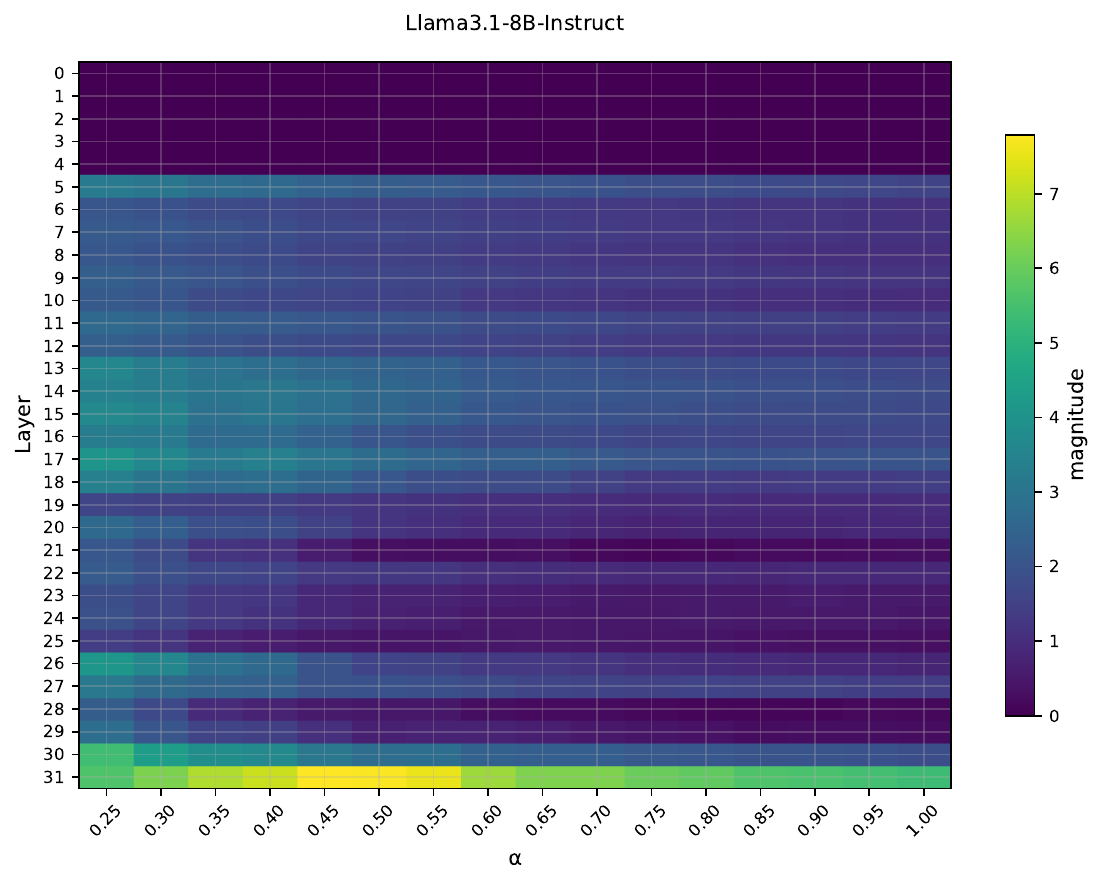}
				\subcaption{Llama3.1 AIR+LWP}
				\label{subfig:10}
			\end{subfigure}
        \vspace{0.5cm}
		\begin{subfigure}[b]{0.48\textwidth}
				\centering
				\includegraphics[width=\textwidth, keepaspectratio]{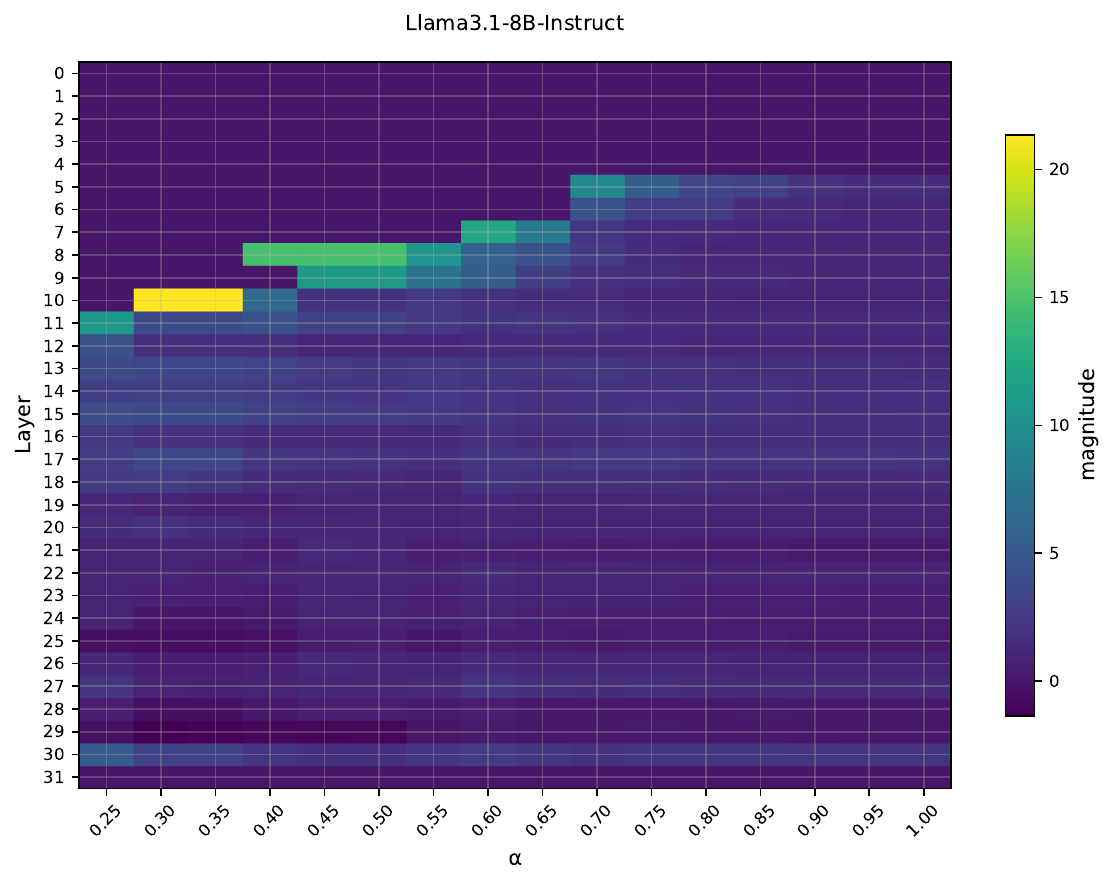}
				\subcaption{Llama3.1 APR+GWP}
				\label{subfig:11}
			\end{subfigure}
		\hfill
		\begin{subfigure}[b]{0.48\textwidth}
				\centering
				\includegraphics[width=\textwidth, keepaspectratio]{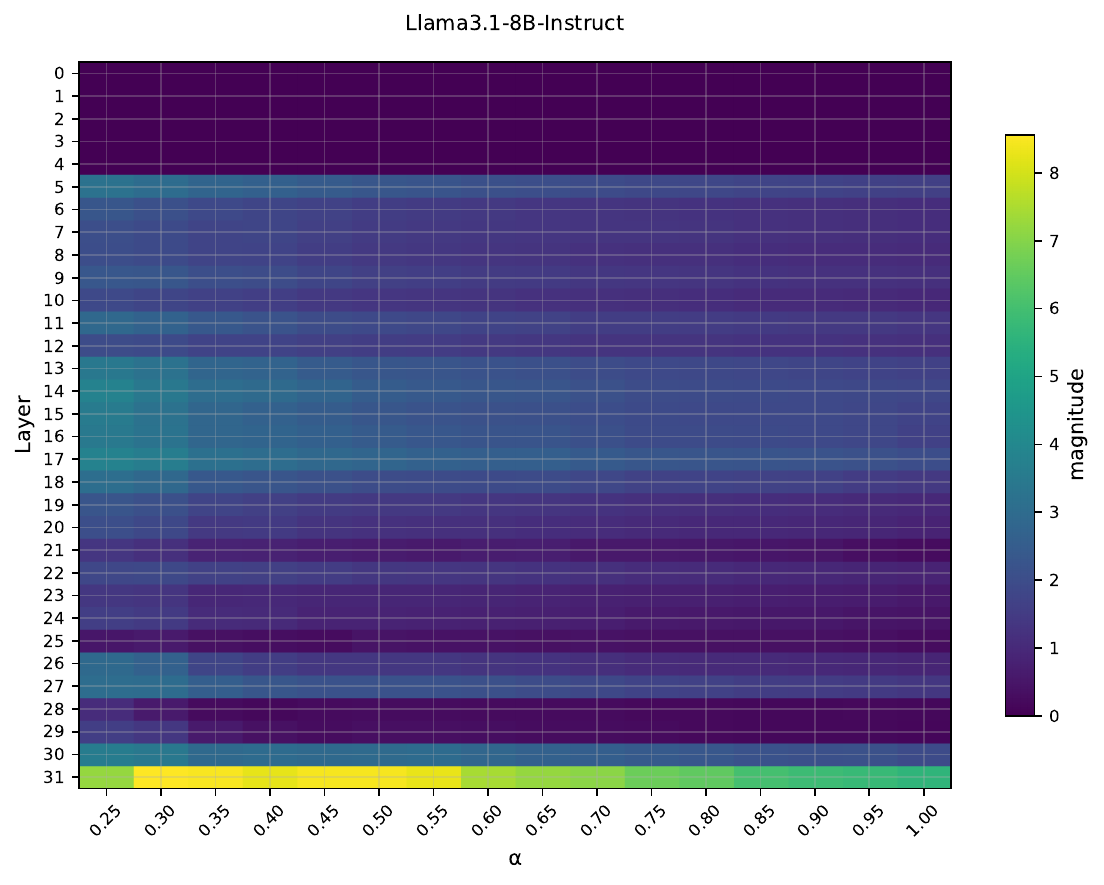}
				\subcaption{Llama3.1 APR+LWP}
				\label{subfig:12}
			\end{subfigure}
	\caption{Layer~\(\times\)~\(\alpha\) heatmap of average perturbation magnitudes \(R_l(\alpha)\) on JailbreakBench. The model is Llama3.1-8B-Instruct.}
	\label{appfig:heatmap on JB-llama}  
\end{figure}

\begin{figure}[htbp]
	\centering 
	\setlength{\tabcolsep}{2pt}  
		\begin{subfigure}[b]{0.48\textwidth}
				\centering
				\includegraphics[width=\textwidth, keepaspectratio]{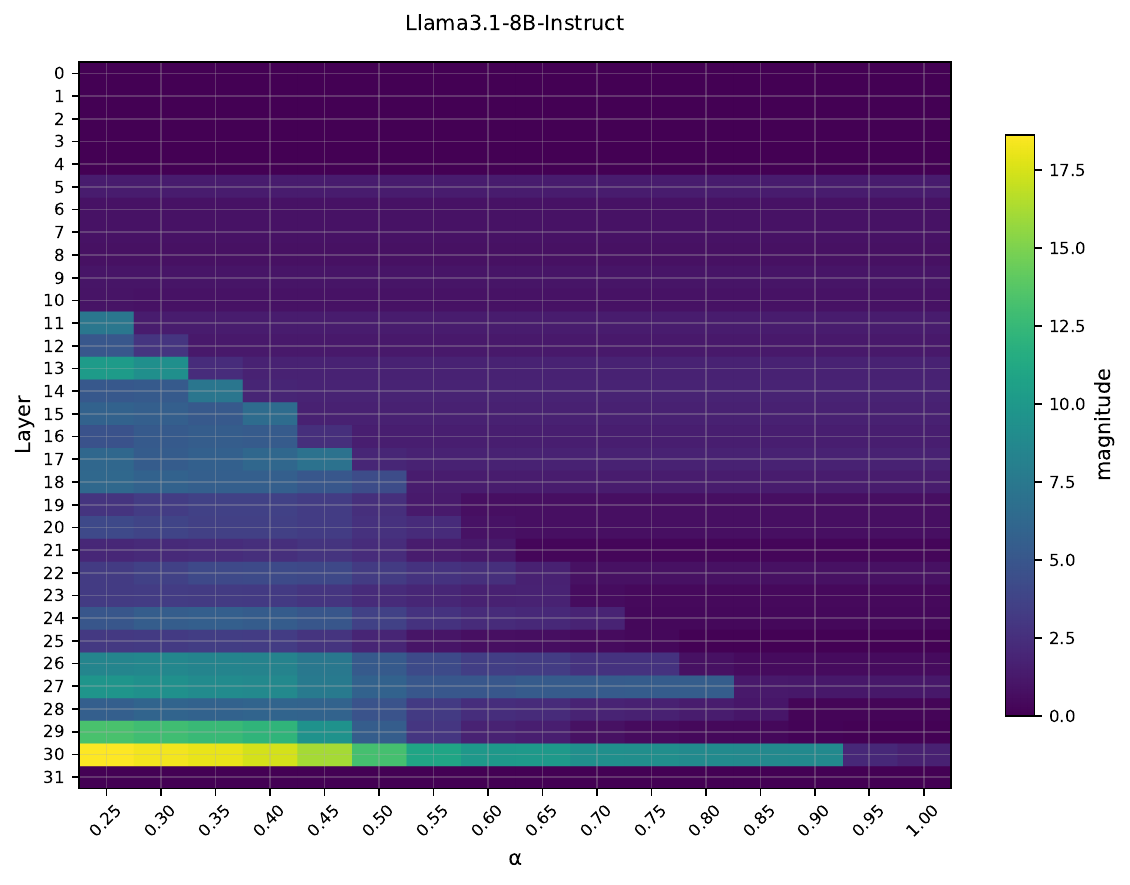} 
				\subcaption{Llama3.1 AIR+GWP}
				\label{subfig:9}
			\end{subfigure}
		\hfill
		\begin{subfigure}[b]{0.48\textwidth}
				\centering
				\includegraphics[width=\textwidth, keepaspectratio]{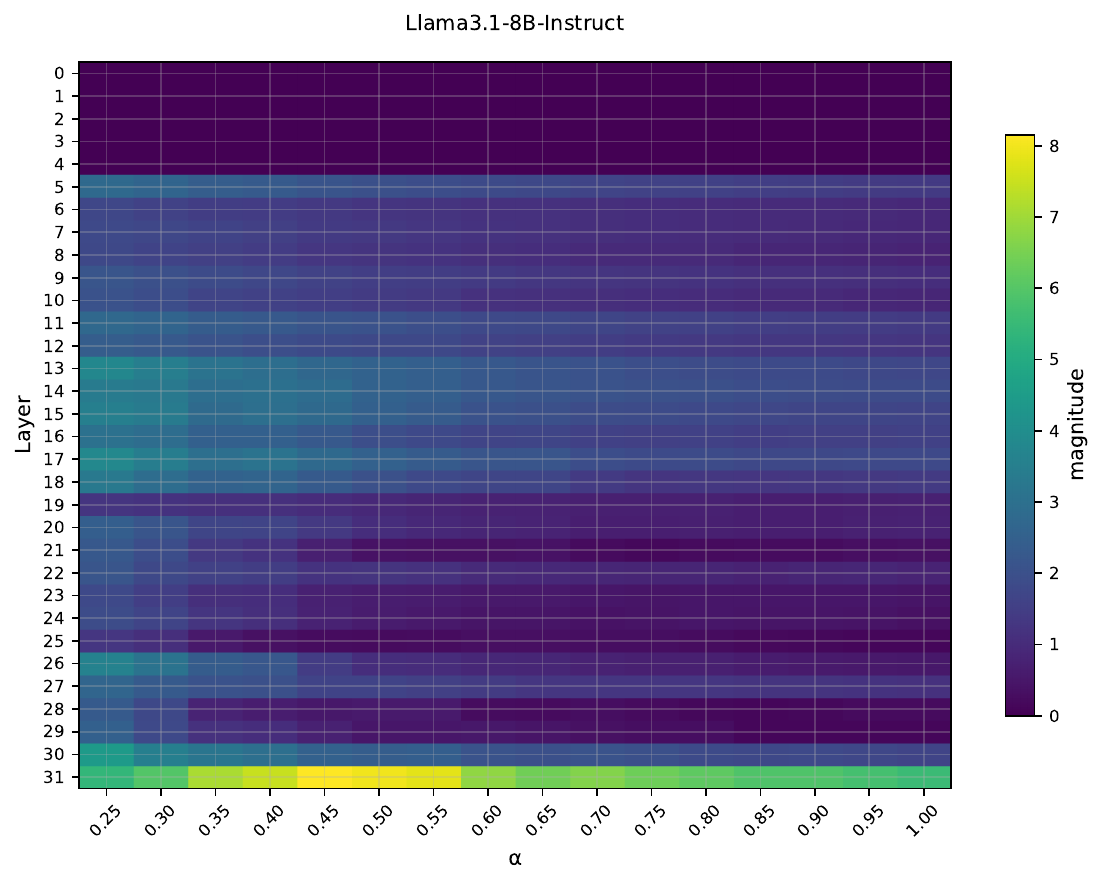}
				\subcaption{Llama3.1 AIR+LWP}
				\label{subfig:10}
			\end{subfigure}
        \vspace{0.5cm}
		\begin{subfigure}[b]{0.48\textwidth}
				\centering
				\includegraphics[width=\textwidth, keepaspectratio]{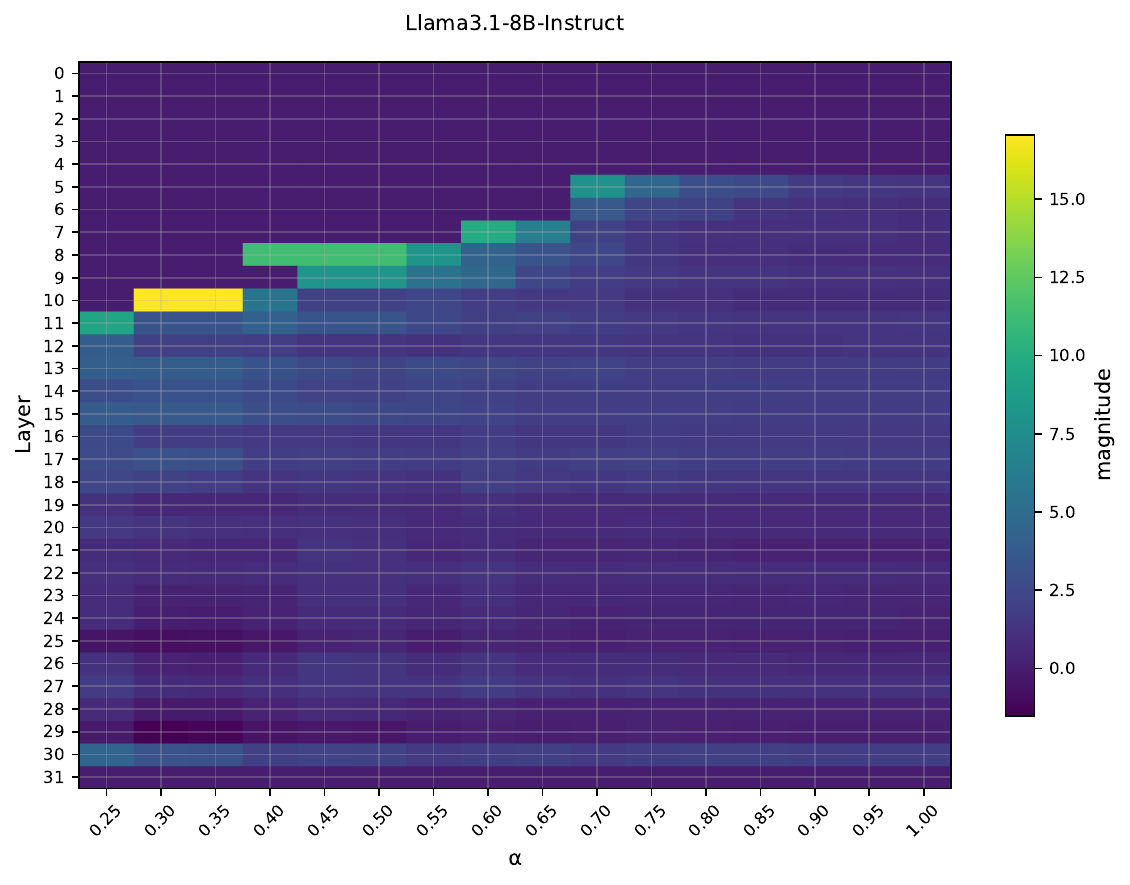}
				\subcaption{Llama3.1 APR+GWP}
				\label{subfig:11}
			\end{subfigure}
		\hfill
		\begin{subfigure}[b]{0.48\textwidth}
				\centering
				\includegraphics[width=\textwidth, keepaspectratio]{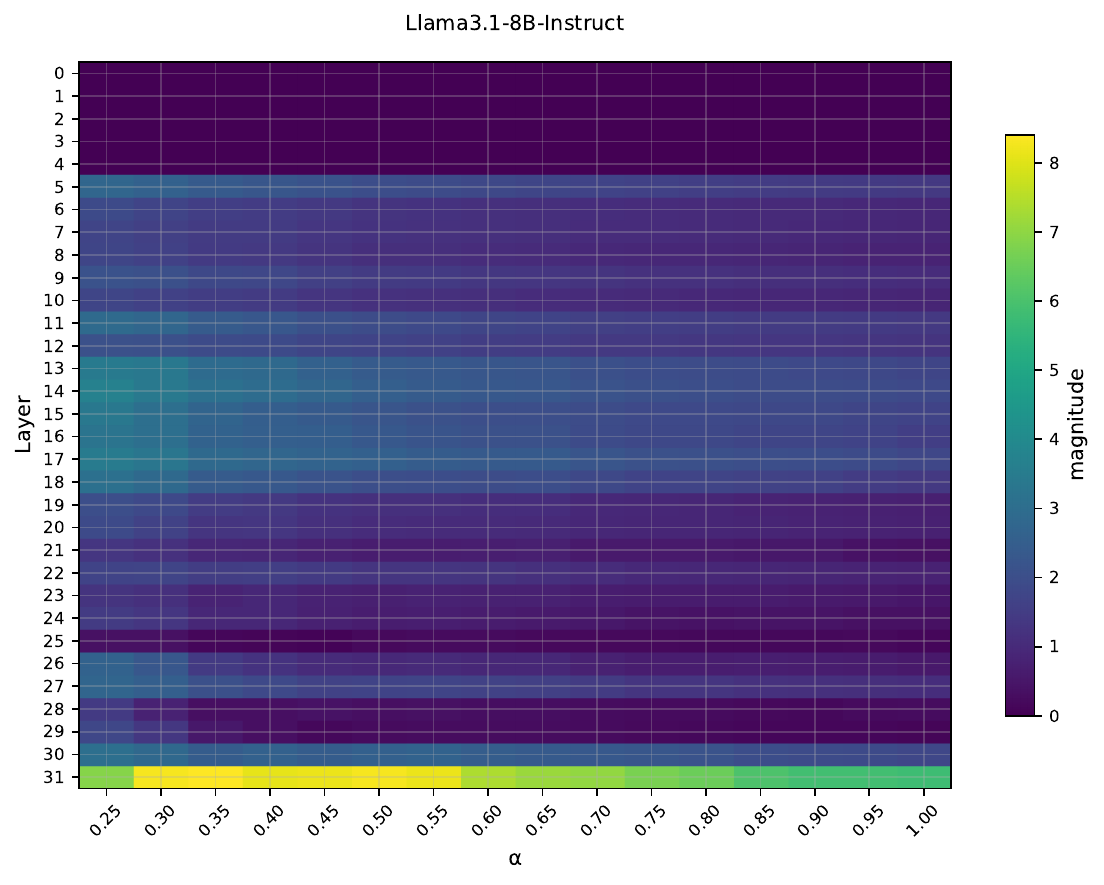}
				\subcaption{Llama3.1 APR+LWP}
				\label{subfig:12}
			\end{subfigure}
	\caption{Layer~\(\times\)~\(\alpha\) heatmap of average perturbation magnitudes \(R_l(\alpha)\) on MaliciousInstruct. The model is Llama3.1-8B-Instruct.}
	\label{appfig:heatmap on MI-llama}  
\end{figure}

\begin{figure}[htbp]
	\centering 
	\setlength{\tabcolsep}{2pt}  
		\begin{subfigure}[b]{0.48\textwidth}
				\centering
				\includegraphics[width=\textwidth, keepaspectratio]{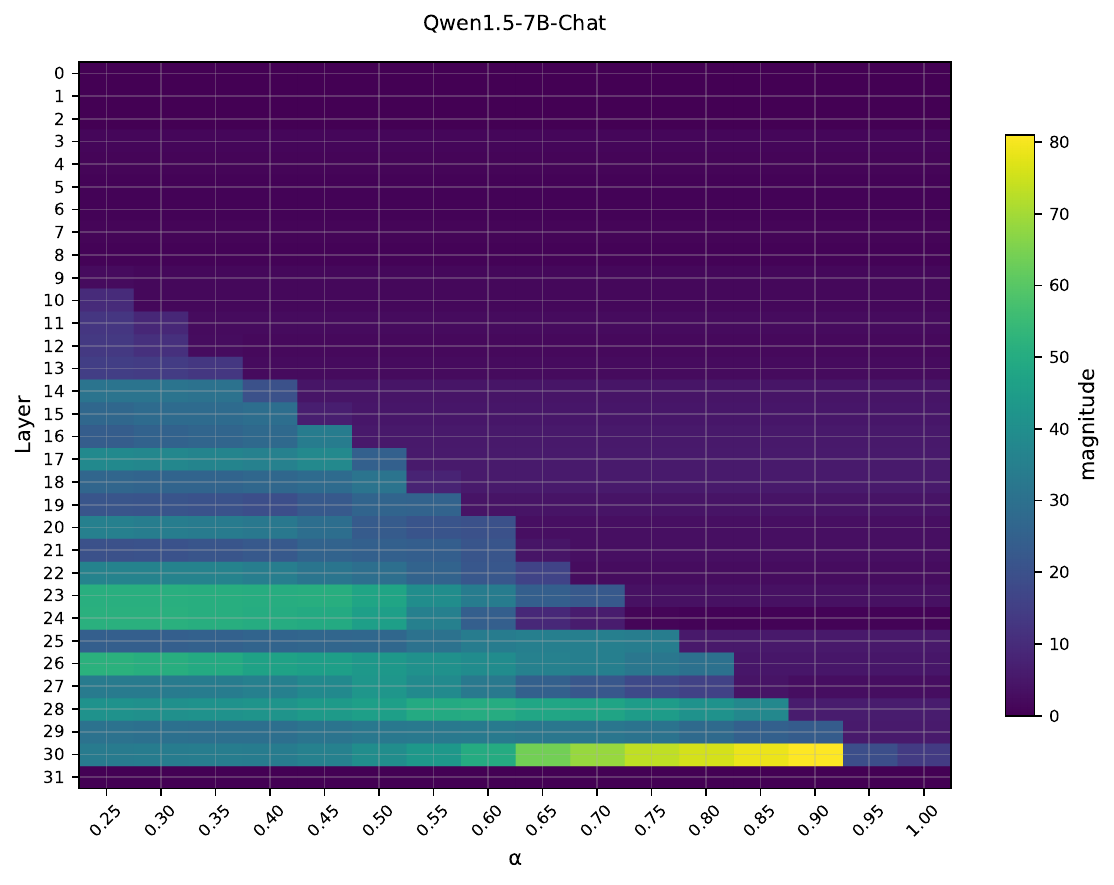}
				\subcaption{Qwen AIR+GWP}
				\label{subfig:13}
			\end{subfigure}
		\hfill
		\begin{subfigure}[b]{0.48\textwidth}
				\centering
				\includegraphics[width=\textwidth, keepaspectratio]{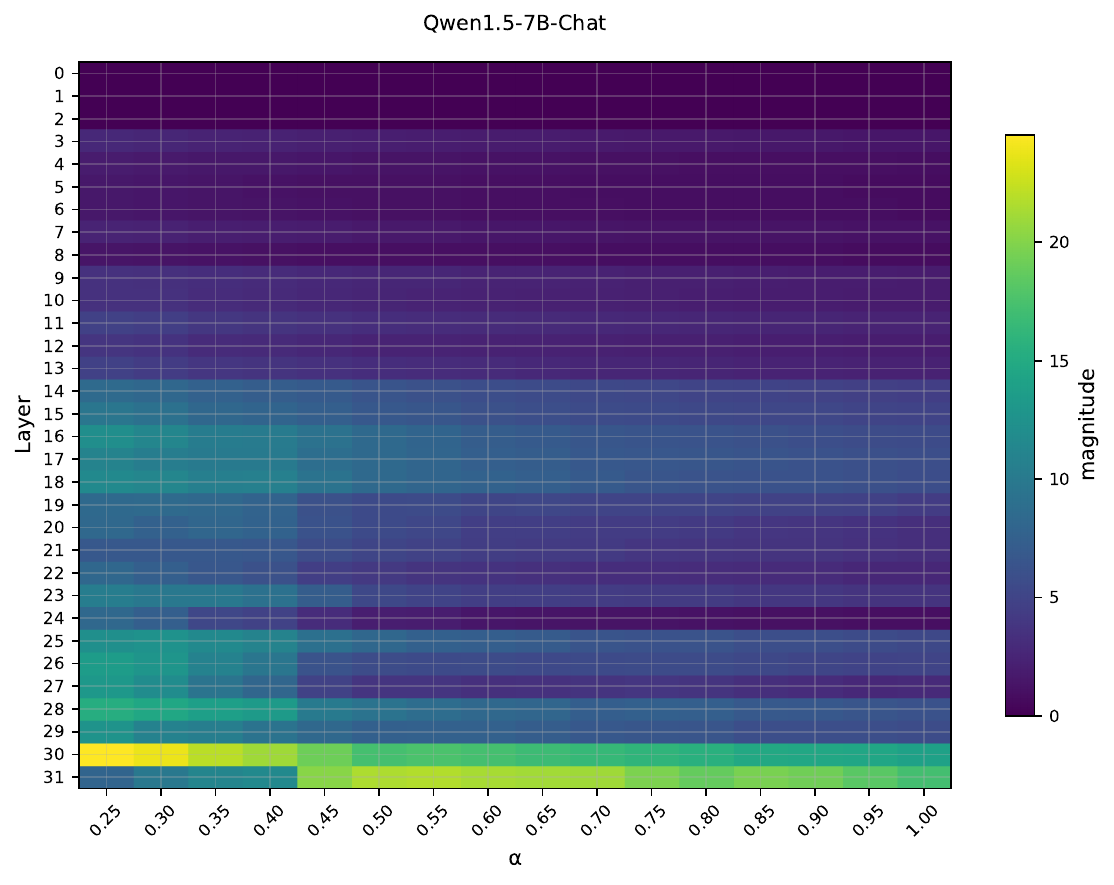}
				\subcaption{Qwen AIR+LWP}
				\label{subfig:14}
			\end{subfigure}
        \vspace{0.5cm}
		\begin{subfigure}[b]{0.48\textwidth}
				\centering
				\includegraphics[width=\textwidth, keepaspectratio]{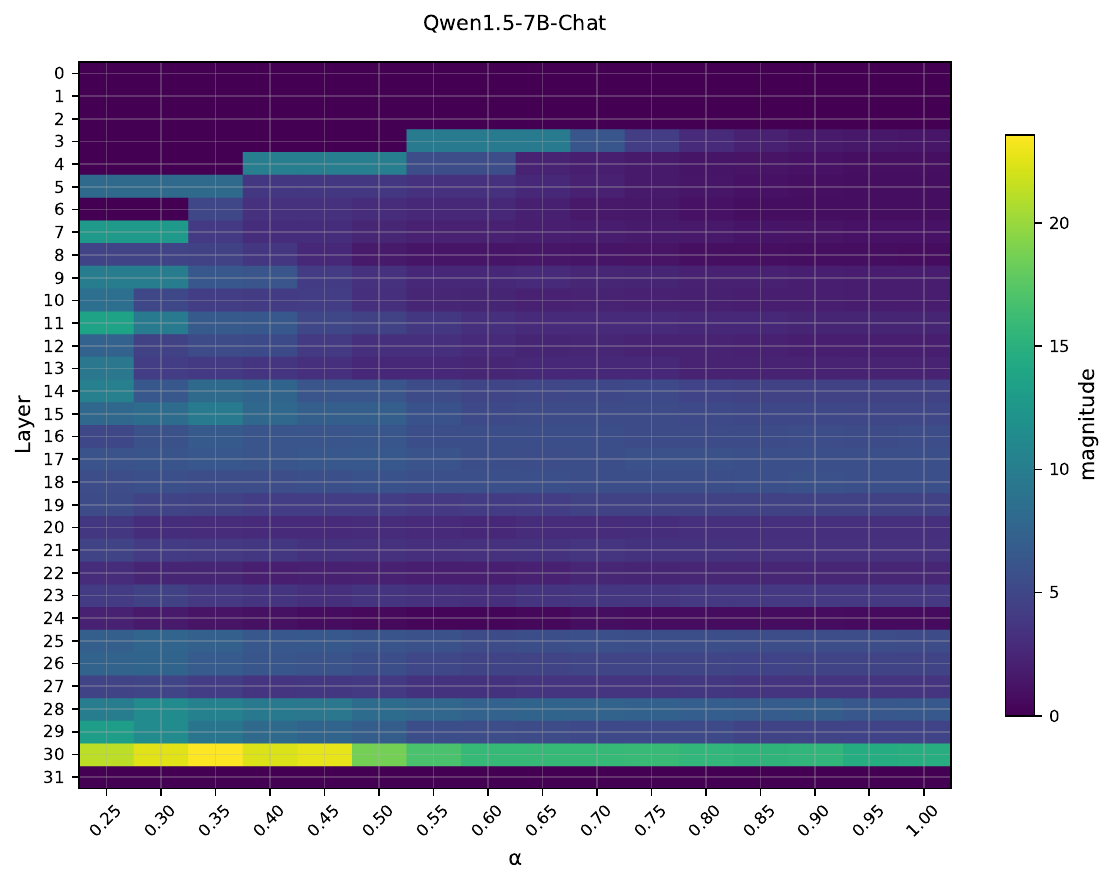}
				\subcaption{Qwen APR+GWP}
				\label{subfig:15}
			\end{subfigure}
		\hfill
		\begin{subfigure}[b]{0.48\textwidth}
				\centering
				\includegraphics[width=\textwidth, keepaspectratio]{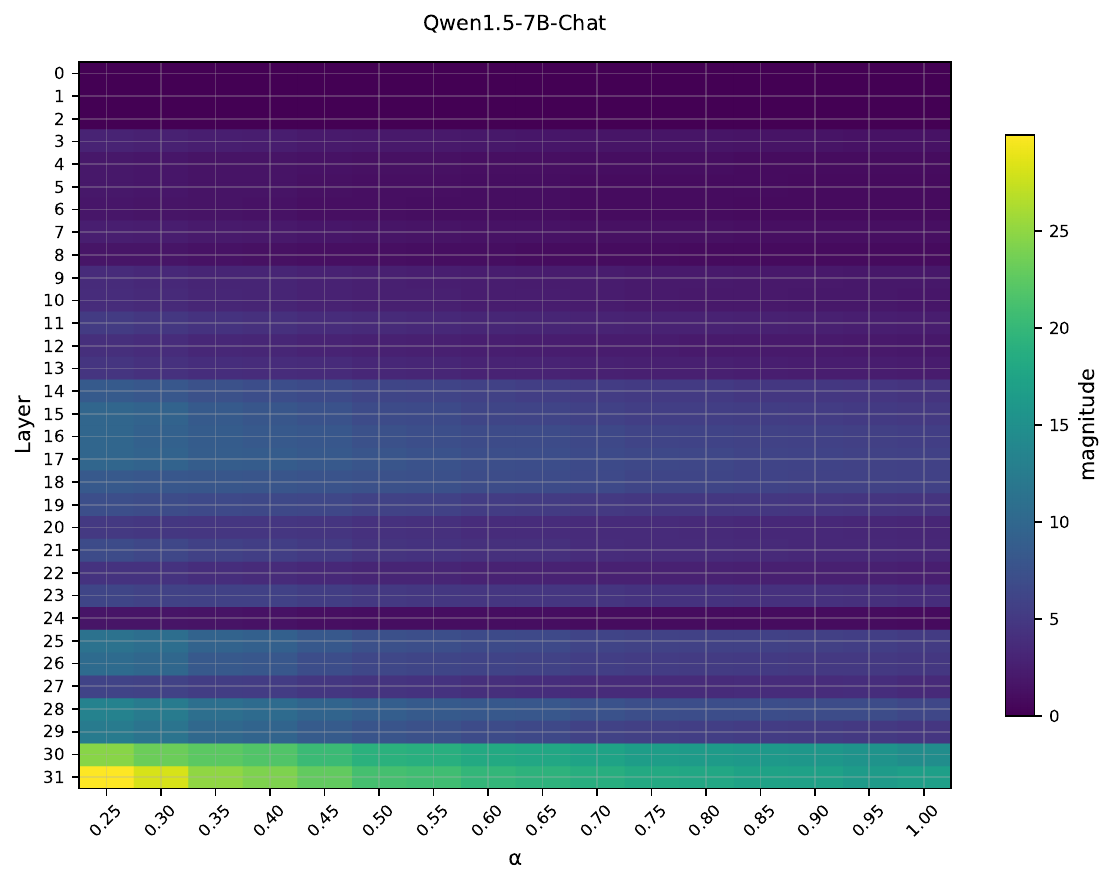}
				\subcaption{Qwen APR+LWP}
				\label{subfig:16}
			\end{subfigure}
	\caption{Layer~\(\times\)~\(\alpha\) heatmap of average perturbation magnitudes \(R_l(\alpha)\) on JailbreakBench. The model is Qwen1.5-7B-Chat.}
	\label{appfig:heatmap on JB-qwen}  
\end{figure}

\begin{figure}[htbp]
	\centering 
	\setlength{\tabcolsep}{2pt}  
		\begin{subfigure}[b]{0.48\textwidth}
				\centering
				\includegraphics[width=\textwidth, keepaspectratio]{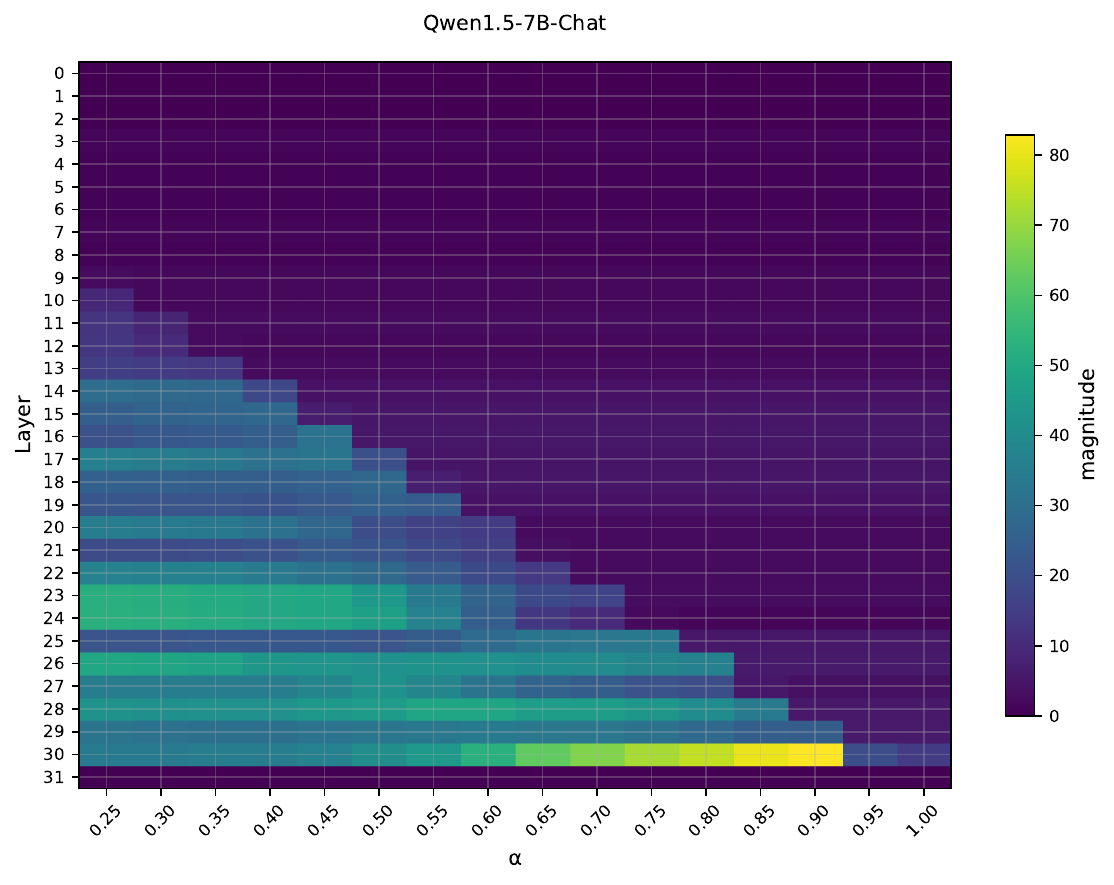}
				\subcaption{Qwen AIR+GWP}
				\label{subfig:13}
			\end{subfigure}
		\hfill
		\begin{subfigure}[b]{0.48\textwidth}
				\centering
				\includegraphics[width=\textwidth, keepaspectratio]{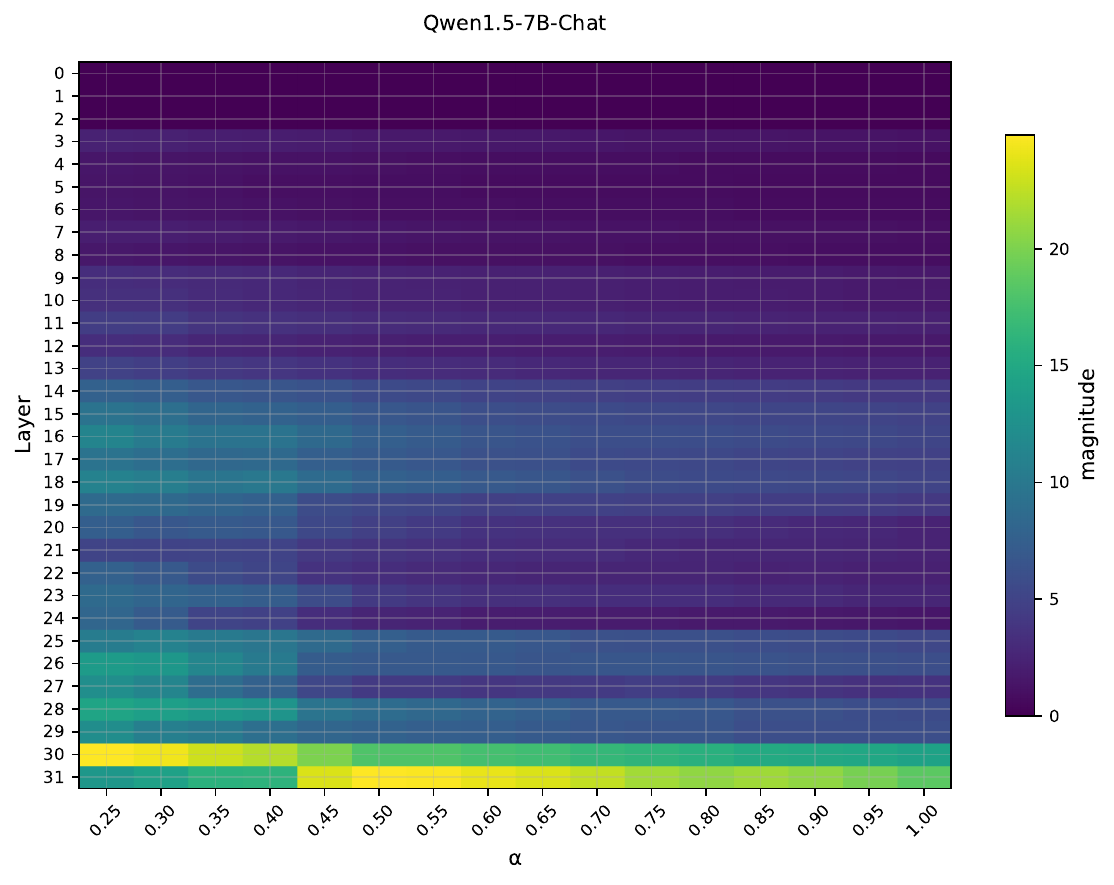}
				\subcaption{Qwen AIR+LWP}
				\label{subfig:14}
			\end{subfigure}
        \vspace{0.5cm}
		\begin{subfigure}[b]{0.48\textwidth}
				\centering
				\includegraphics[width=\textwidth, keepaspectratio]{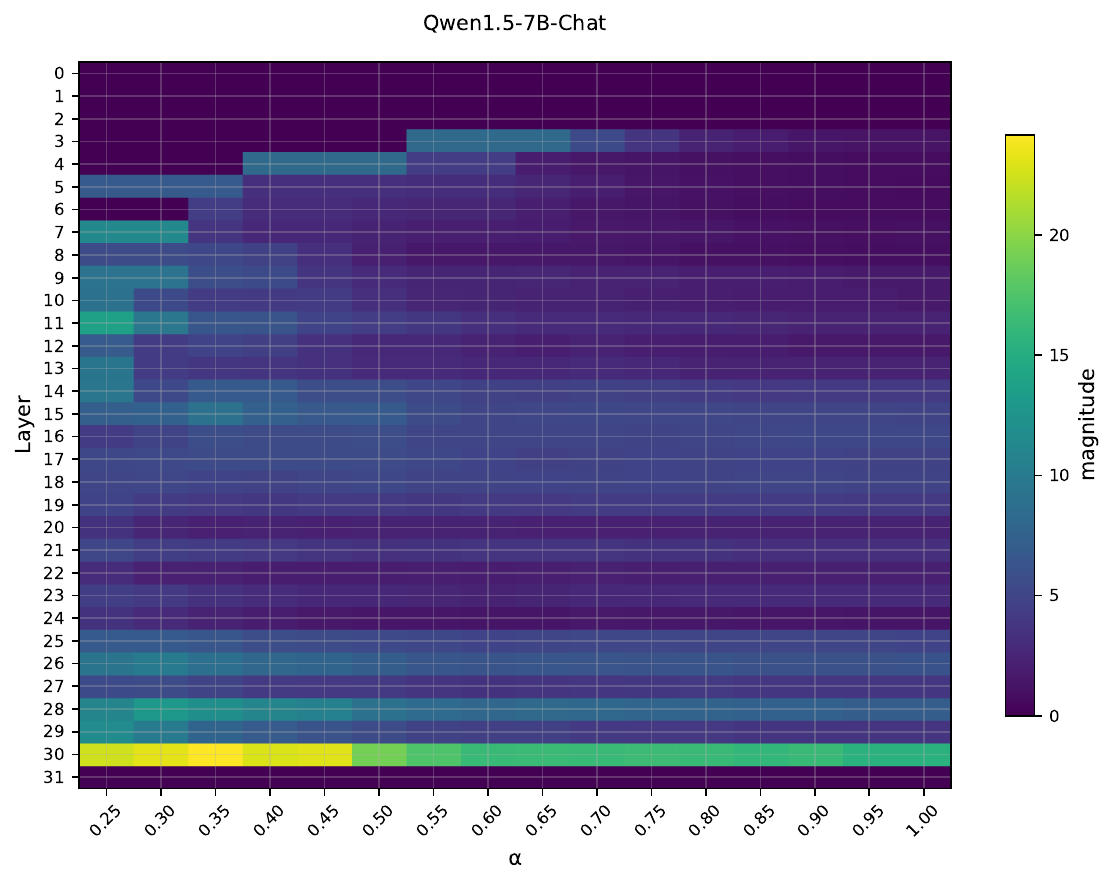}
				\subcaption{Qwen APR+GWP}
				\label{subfig:15}
			\end{subfigure}
		\hfill
		\begin{subfigure}[b]{0.48\textwidth}
				\centering
				\includegraphics[width=\textwidth, keepaspectratio]{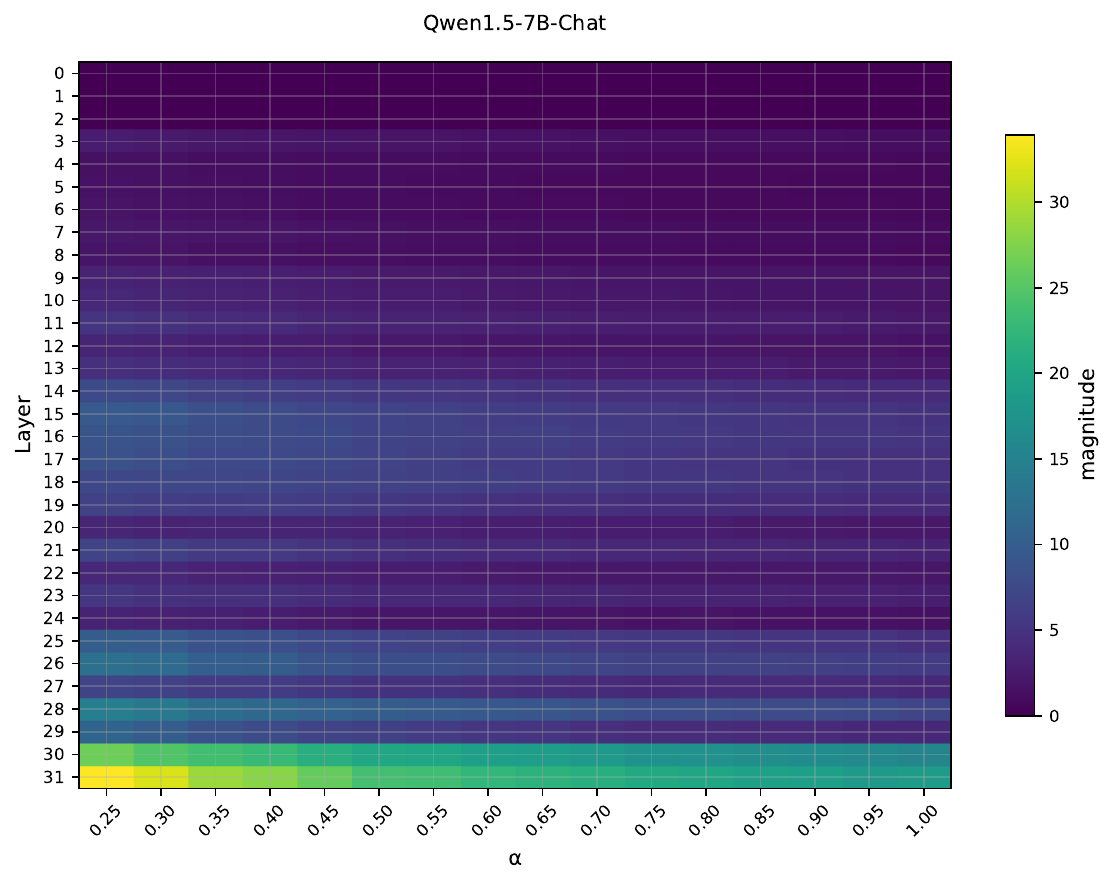}
				\subcaption{Qwen APR+LWP}
				\label{subfig:16}
			\end{subfigure}
	\caption{Layer~\(\times\)~\(\alpha\) heatmap of average perturbation magnitudes \(R_l(\alpha)\) on MaliciousInstruct. The model is Qwen1.5-7B-Chat.}
	\label{appfig:heatmap on MI-qwen}  
\end{figure}
\clearpage
\subsubsection*{C.4 Safety-Critical Head Localization}
Figure~\ref{fig:layer_head_heatmap} reveals distinct patterns in the distribution of safety-critical mechanisms of Deepseek, indicating a more diffuse safety mechanism.

\begin{figure}[ht!] 
	\centering 
	\begin{subfigure}[b]{0.48\textwidth}
		\centering
		\includegraphics[width=\textwidth]{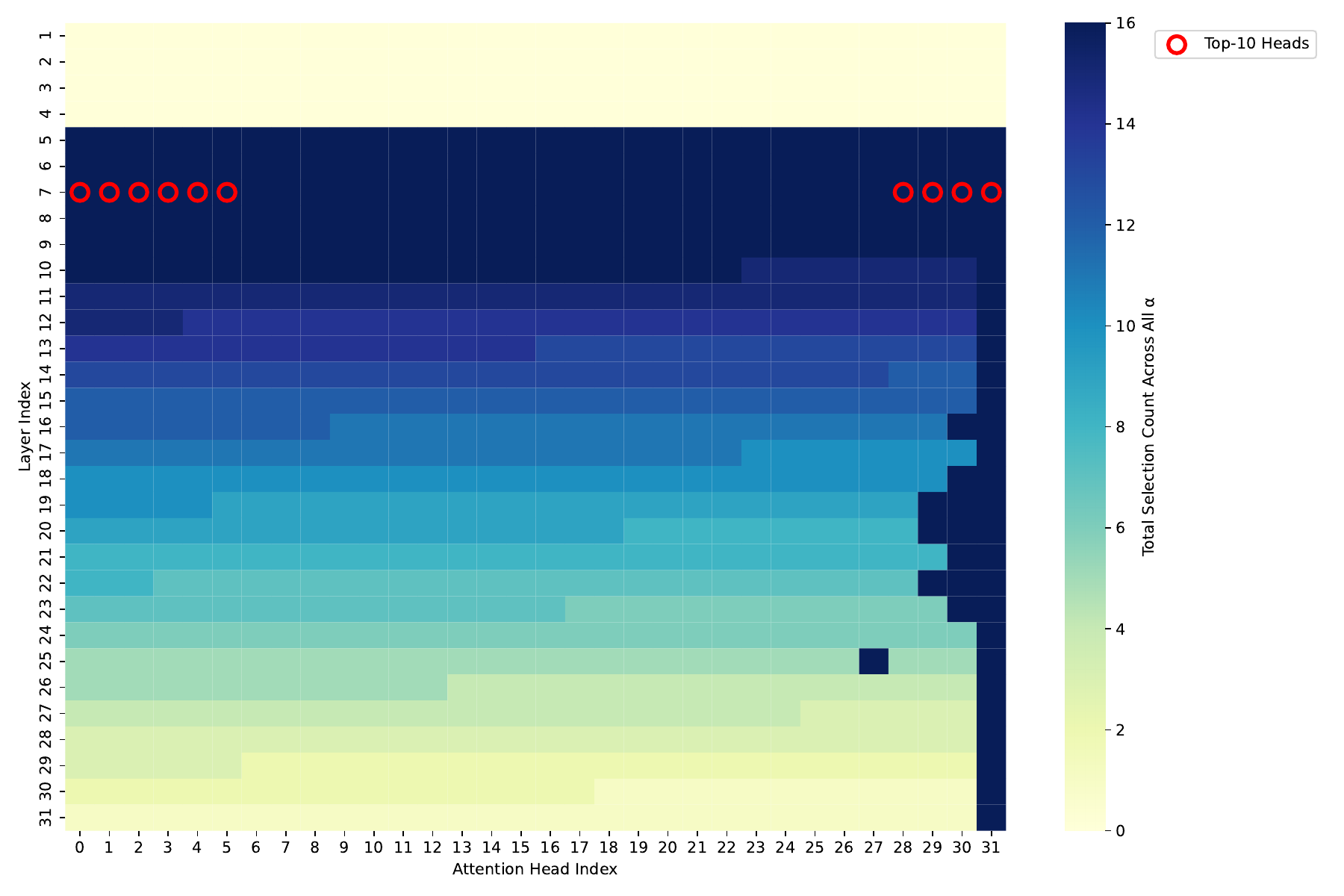}
		\caption{Llama3.1}
		\label{fig:heatmap-llama3.1}
	\end{subfigure}
	\hfill  
    \begin{subfigure}[b]{0.48\textwidth}
		\centering
		\includegraphics[width=\textwidth]{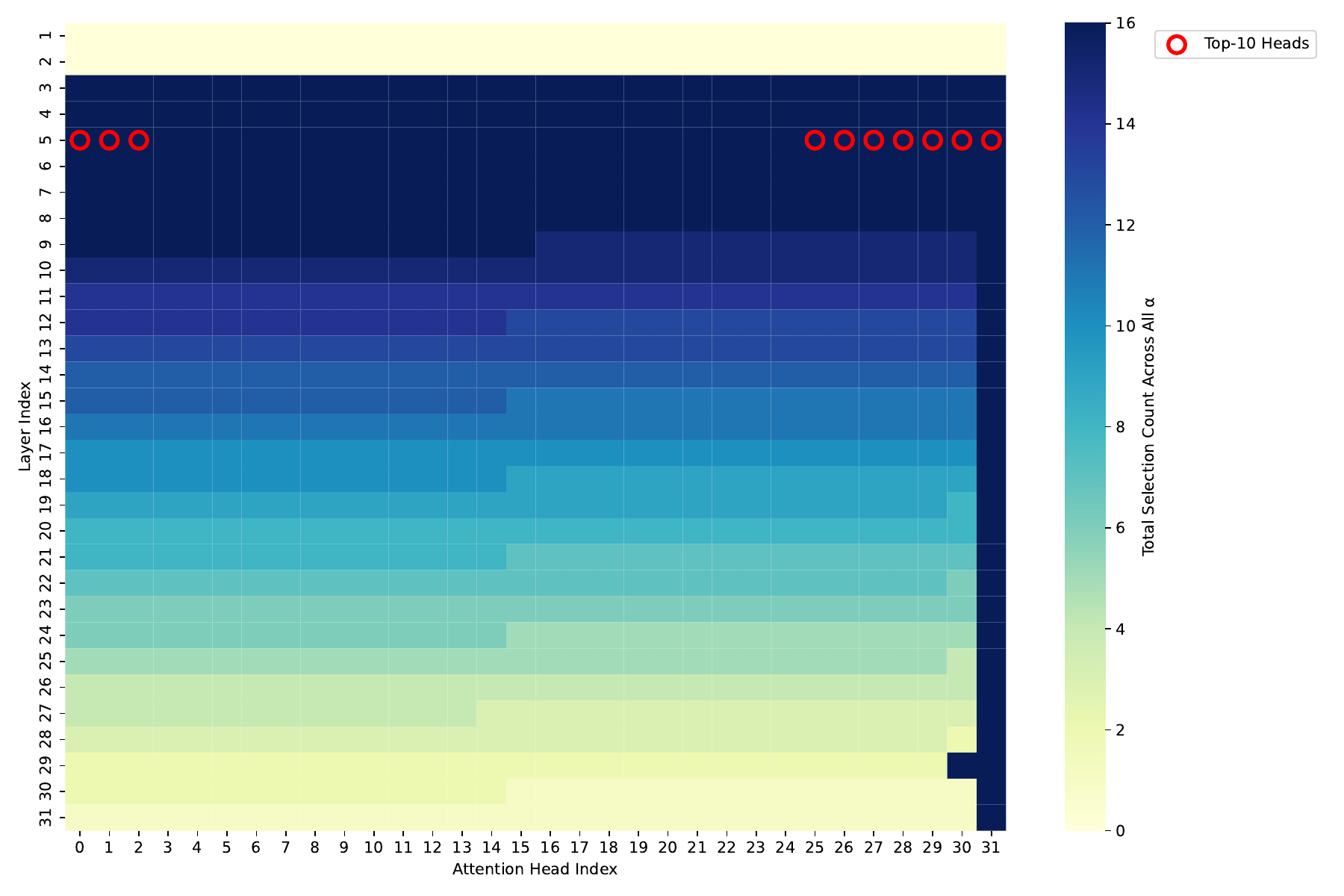}
		\caption{Qwen}
		\label{fig:heatmap-qwen}
	\end{subfigure}
    \vspace{0.5cm}    
    \begin{subfigure}[b]{0.48\textwidth}
		\centering
		\includegraphics[width=\textwidth]{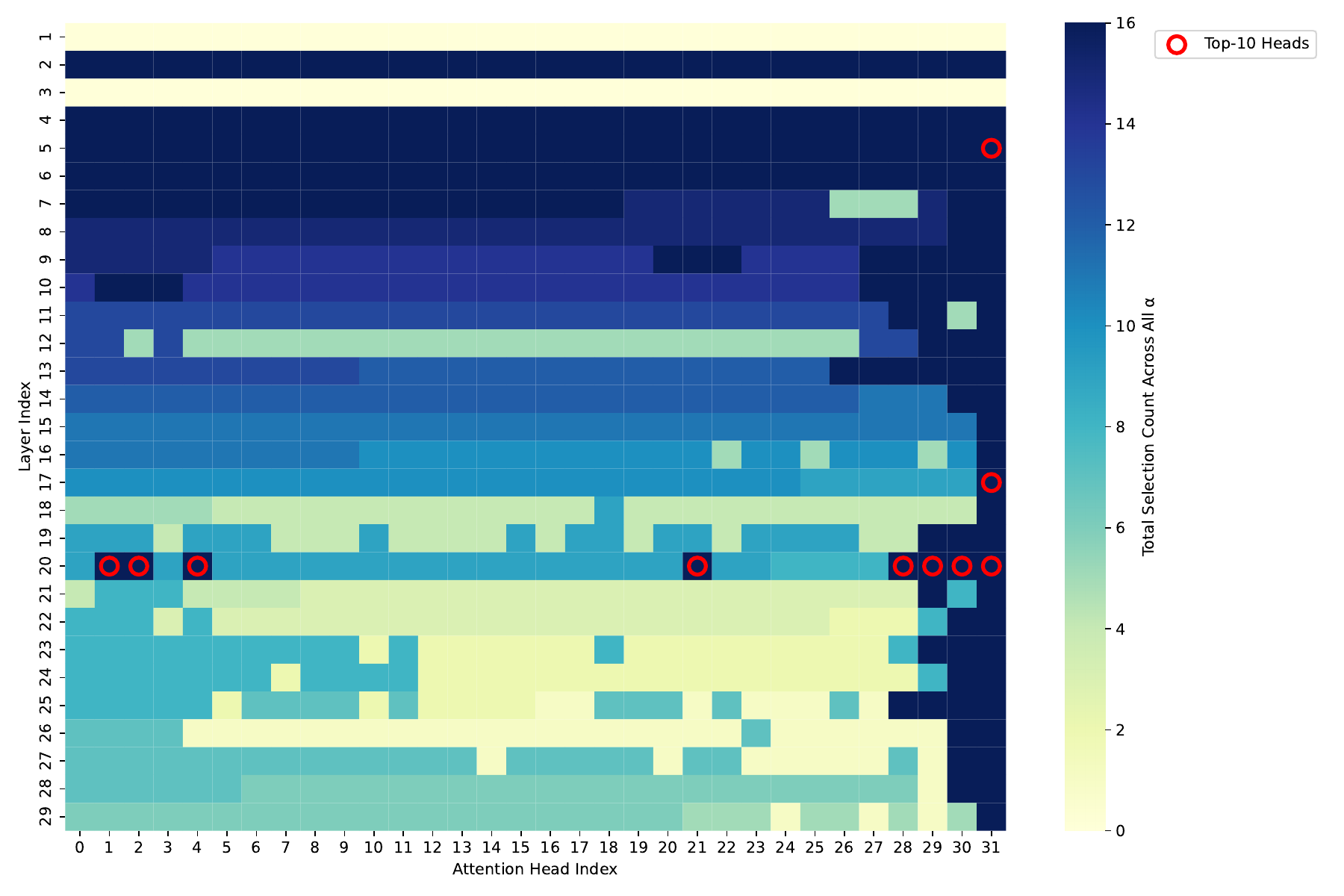}
		\caption{Deepseek}
		\label{fig:heatmap-deepseek}
	\end{subfigure}
    \caption{Spatial distribution of attention head selection frequencies across transformer layers for the AIR+GWP configuration. Color intensity indicates the selection frequency across all $\alpha$ values, while red circles mark the top-10 most frequently selected safety-critical heads.}
    \label{fig:layer_head_heatmap}
\end{figure}
\clearpage

\subsubsection*{C.5 Ablation Study on Structures.} 

\begin{figure}[ht]
	\centering
	\begin{subfigure}[b]{0.32\textwidth}
		\centering
		\includegraphics[width=\textwidth]{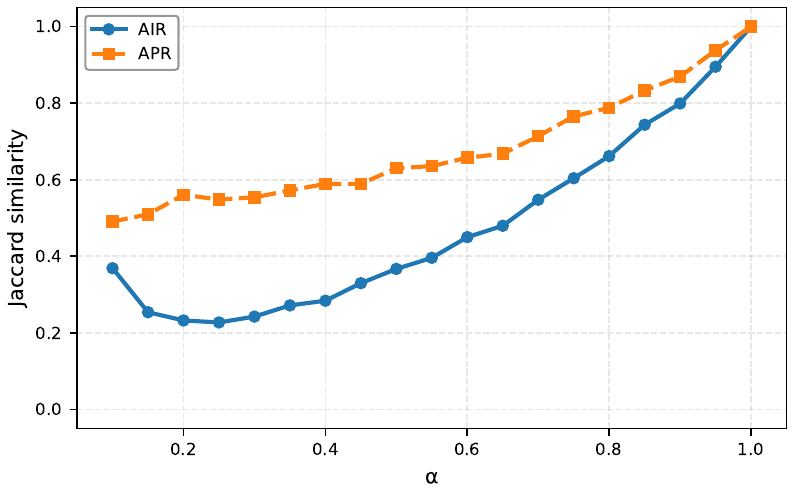}
		\caption{Llama3.1-8B-Instruct}
		\label{fig:jaccard:llama3.1-gl}
	\end{subfigure}
	\hfill
    \begin{subfigure}[b]{0.32\textwidth}
		\centering
		\includegraphics[width=\textwidth]{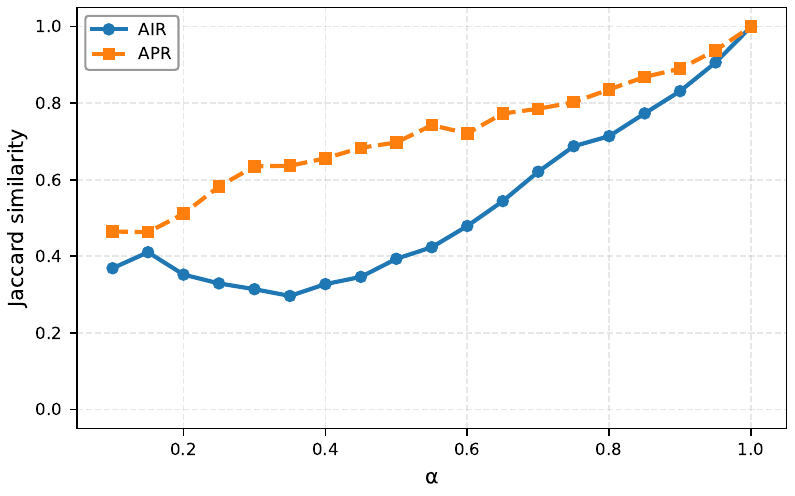}
		\caption{Deepseek-LLM-7B-Chat}
		\label{fig:jaccard:deepseek-gl}
	\end{subfigure}
	\hfill
	\begin{subfigure}[b]{0.32\textwidth}
		\centering
		\includegraphics[width=\textwidth]{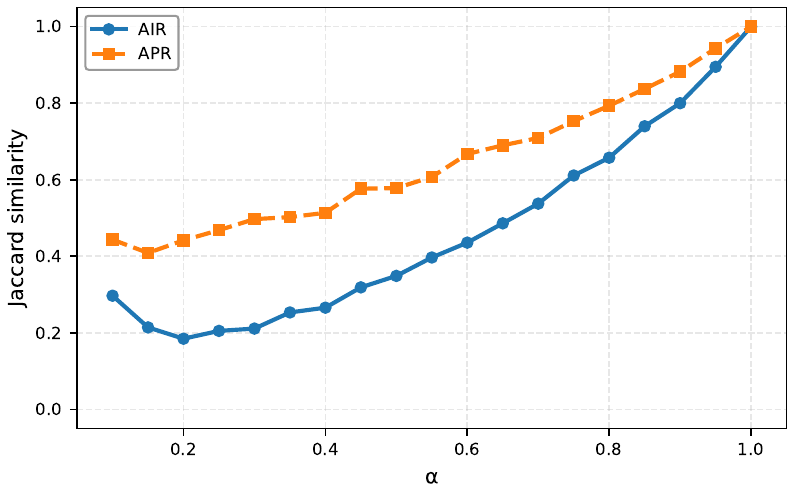}
		\caption{Qwen1.5-7B-Chat}
		\label{fig:jaccard:qwen-gl}
	\end{subfigure}
	\caption{Jaccard similarity comparison of AIR/APR.}
    \vspace{-0.1in}
	\label{fig:jaccard:GL}
\end{figure}

Figure~\ref{fig:jaccard:GL} evaluates the Jaccard similarity between attention-head sets generated by LWP and GWP for the two head-selection strategies. Jaccard Similarity is a metric used to measure the similarity between two sets, quantifying the ratio of the intersection of the two sets to their union. The differing behavior of APR and AIR arises from their fundamentally distinct approaches to measuring head importance. APR assesses each head's standalone predictive power for safety classification via linear probing, yielding nearly unique accuracy scores for individual heads. Since each head encodes distinct local features, their ability to predict safety labels rarely overlaps, eliminating ambiguity in ranking and ensuring consistent top-k head selection across runs. In contrast, AIR quantifies head importance by measuring performance drops upon head ablation, yielding distinct importance scores that reliably rank attention heads. Since each head's contribution to safety is uniquely captured through ablation impact—reflecting its specific role in safeguarding model behavior—the method eliminates ambiguity in selection, ensuring consistent top-k head identification across experimental runs.

\begin{figure}[ht]
	\centering
	\begin{subfigure}[b]{0.32\textwidth}
		\centering
		\includegraphics[width=\textwidth]{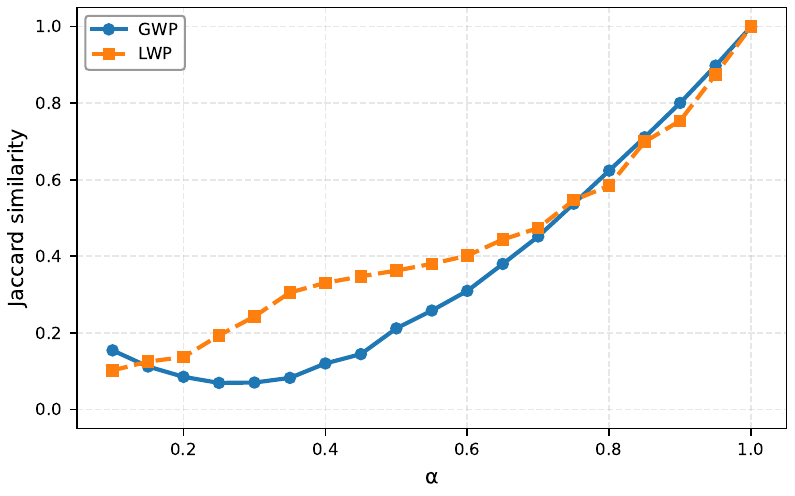}
		\caption{Llama3.1-8B-Instruct}
		\label{jaccard:llama3.1-as}
	\end{subfigure}
	\hfill
    \begin{subfigure}[b]{0.32\textwidth}
		\centering
		\includegraphics[width=\textwidth]{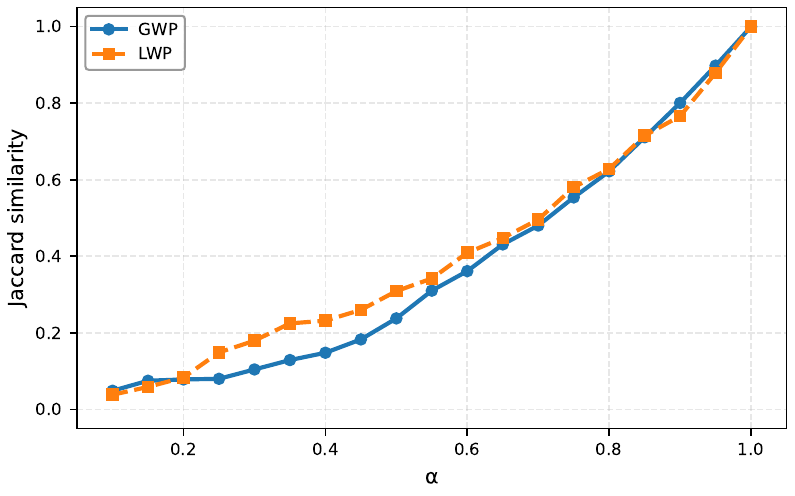}
		\caption{Deepseek-LLM-7B-Chat}
		\label{jaccard:deepseek-as}
	\end{subfigure}
	\hfill
	\begin{subfigure}[b]{0.32\textwidth}
		\centering
		\includegraphics[width=\textwidth]{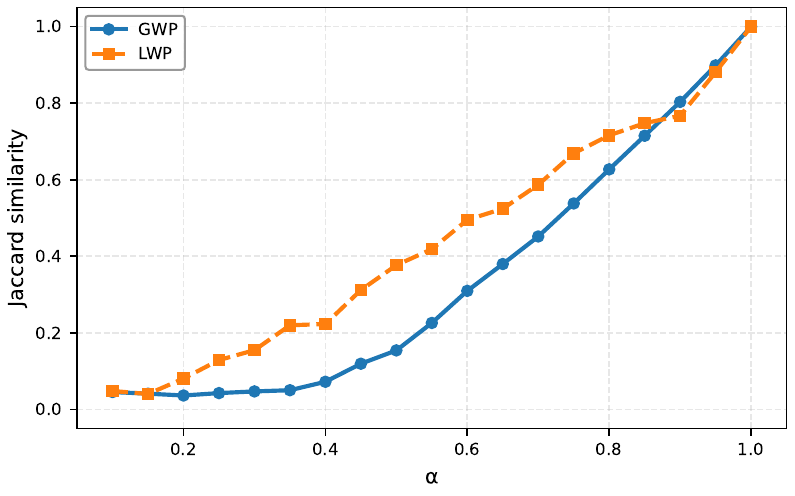}
		\caption{Qwen1.5-7B-Chat}
		\label{jaccard:qwen-as}
	\end{subfigure}
	\caption{Jaccard similarity comparison of LWP/GWP. Minor phase-dependent differences.}
    \vspace{-0.1in}
	\label{fig:jaccard:AS}
\end{figure}

Figure~\ref{fig:jaccard:AS} quantifies the Jaccard similarity between attention-head sets selected by APR and AIR across selection ratio \(\alpha\), focusing on the comparison between two perturbation strategies. 
In the low-$\alpha$ regime ($\alpha < 0.8$), LWP achieves slightly higher similarity than GWP. This improvement arises because LWP employs a layer-wise budget allocation strategy that independently identifies critical heads within each layer, aligning closely with APR's low stochasticity in layer-specific critical head localization. 
In contrast, GWP relies on a global ranking across all layers, which can allow non-critical heads from shallow layers to displace critical heads from deeper ones, thereby diminishing its alignment with the APR/AIR sets. 
In the mid-to-high $\alpha$ range ($0.8 \le \alpha \le 1.0$), where more attention heads are retained, GWP surpasses LWP in similarity. 
As the stochasticity of AIR diminishes in this regime, GWP's global unified ranking not only smooths the remaining minor randomness in AIR but also aligns more closely with APR's global prioritization of cross-layer critical head distribution. 
In contrast, LWP's layer-wise independent selection lacks such global integration, making it less effective in capturing APR's cross-layer head selection priorities.
This trend in Figure~\ref{fig:jaccard:AS} complements Figure~\ref{fig:jaccard:GL} by confirming that LWP and GWP need to adapt to  \(\alpha\) ranges based on the characteristics of APR and AIR to maximize the alignment between APR and AIR headsets.

%% file: sections/11-reference.bib
@Article{1Fei25,
	author = {Fei, Yunqiao and Fan, Jingchao and Zhou, Guomin},
	title = {Extracting fruit disease knowledge from research papers based on large language models and prompt engineering},
	journal = {Applied Sciences},
	volume = 15,
	year = 2025,
	number = 2,
	articleno = 628,
	url = {https://www.mdpi.com/2076-3417/15/2/628},
	issn = {2076-3417},
	doi = {10.3390/app15020628}
}

@InProceedings{2Guo25,
	author = {Guo, Shaopeng},
	booktitle = {2025 IEEE 7th International Conference on Communications, Information System and Computer Engineering (CISCE)},
	title = {DeepLegal-CN: research and application of a deepSeek-based large language model for the legal domain},
	year = 2025,
	pages = {944--947},
	doi = {10.1109/CISCE65916.2025.11065329},
	publisher = {IEEE},
	address = {Los Alamitos, CA}
}

@InProceedings{3Hu25,
	author = {Hu, Su and Zeng, Junyu and Lin, Ximing and Zhong, Xin and Liang, Fangyi and Chen, Peixi and Bai, Xuemei},
	booktitle = {2025 4th International Symposium on Computer Applications and Information Technology (ISCAIT)},
	title = {Accident LLM: a multimodal large language model of accident description based on inter-frame attention mechanism},
	year = 2025,
	pages = {345--350},
	doi = {10.1109/ISCAIT64916.2025.11010470},
	publisher = {IEEE},
	address = {Los Alamitos, CA}
}

@InProceedings{4Sinha25,
	author = {Sinha, Nishchay and Trivedi, Raghav and Mittapalli, Sanjit and Parihar, Anuj and Selvanambi, Ramani},
	booktitle = {2025 International Conference on Cognitive Computing in Engineering, Communications, Sciences and Biomedical Health Informatics (IC3ECSBHI)},
	title = {Targeting and automating recoveries from cybersecurity vulnerabilities using large language models},
	year = 2025,
	pages = {1369--1373},
	doi = {10.1109/IC3ECSBHI63591.2025.10990905},
	publisher = {IEEE},
	address = {Los Alamitos, CA}
}

@InProceedings{5andriushchenko25,
	author = {Andriushchenko, Maksym and Croce, Francesco and Flammarion, Nicolas},
	title = {Jailbreaking leading safety-aligned {LLM}s with simple adaptive attacks},
	booktitle = {The Thirteenth International Conference on Learning Representations},
	year = 2025,
	url = {https://openreview.net/forum?id=hXA8wqRdyV},
	publisher = {ICLR},
	address = {Virtual}
}

@InProceedings{6chen25,
	author = {Chen, Zhuowei and Zhang, Qiannan and Pei, Shichao},
	title = {Injecting universal jailbreak backdoors into {LLM}s in minutes},
	booktitle = {The Thirteenth International Conference on Learning Representations},
	year = 2025,
	url = {https://openreview.net/forum?id=aSy2nYwiZ2},
	publisher = {ICLR},
	address = {Virtual}
}

@InProceedings{7zhao-etal-2025-sql,
	author = {Zhao, Jiawei and Chen, Kejiang and Zhang, Weiming and Yu, Nenghai},
	title = {{SQL} injection jailbreak: a structural disaster of large language models},
	booktitle = {Findings of the Association for Computational Linguistics: ACL 2025},
	month = jul,
	year = 2025,
	address = {Vienna, Austria},
	publisher = {Association for Computational Linguistics},
	url = {https://aclanthology.org/2025.findings-acl.358/},
	doi = {10.18653/v1/2025.findings-acl.358},
	pages = {6871--6891},
	isbn = {979-8-89176-256-5}
}

@InProceedings{8huang-etal-2025-efficient,
	author = {Huang, Yihao and Wang, Chong and Jia, Xiaojun and Guo, Qing and Juefei-Xu, Felix and Zhang, Jian and Liu, Yang and Pu, Geguang},
	title = {Efficient universal goal hijacking with semantics-guided prompt organization},
	booktitle = {Proceedings of the 63rd Annual Meeting of the Association for Computational Linguistics (Volume 1: Long Papers)},
	month = jul,
	year = 2025,
	address = {Vienna, Austria},
	publisher = {Association for Computational Linguistics},
	url = {https://aclanthology.org/2025.acl-long.290/},
	doi = {10.18653/v1/2025.acl-long.290},
	pages = {5796--5816},
	isbn = {979-8-89176-251-0}
}

@InProceedings{9Ahi,
	author = {Ahi, Kiarash and Valizadeh, Saeed},
	booktitle = {2025 Silicon Valley Cybersecurity Conference (SVCC)},
	title = {Large language models (LLMs) and generative AI in cybersecurity and privacy: a survey of dual-use risks, AI-generated malware, explainability, and defensive strategies},
	year = 2025,
	pages = {1--8},
	doi = {10.1109/SVCC65277.2025.11133642},
	publisher = {IEEE},
	address = {Los Alamitos, CA}
}

@InProceedings{10singh,
	author = {Singh, Sonali and Namin, Akbar Siami},
	booktitle = {2024 IEEE International Conference on Big Data (BigData)},
	title = {Adversarial training of retrieval augmented generation to generate believable fake news},
	year = 2024,
	pages = {3589--3598},
	doi = {10.1109/BigData62323.2024.10825933},
	publisher = {IEEE},
	address = {Los Alamitos, CA}
}

@InProceedings{11park,
	author = {Park, Sungwon and Han, Sungwon and Xie, Xing and Lee, Jae-Gil and Cha, Meeyoung},
	title = {Adversarial style augmentation via large language model for robust fake news detection},
	year = 2025,
	isbn = {9798400712746},
	publisher = {Association for Computing Machinery},
	address = {New York, NY, USA},
	url = {https://doi.org/10.1145/3696410.3714569},
	doi = {10.1145/3696410.3714569},
	booktitle = {Proceedings of the ACM on Web Conference 2025},
	pages = {4024--4033},
	location = {Sydney NSW, Australia},
	series = {WWW '25}
}

@inproceedings{12panaitescu-liess-etal-2025-poisonedparrot,
    author = {Panaitescu-Liess, Michael-Andrei  and
	Pathmanathan, Pankayaraj  and
	Kaya, Yigitcan  and
	Che, Zora  and
	An, Bang  and
	Zhu, Sicheng  and
	Agrawal, Aakriti  and
	Huang, Furong},
	title = {{P}oisoned{P}arrot: Subtle Data Poisoning Attacks to Elicit Copyright-Infringing Content from Large Language Models},
	editor = {Chiruzzo, Luis  and
	Ritter, Alan  and
	Wang, Lu},
	booktitle = {Proceedings of the 2025 Conference of the Nations of the Americas Chapter of the Association for Computational Linguistics: Human Language Technologies (Volume 1: Long Papers)},
	month = {apr},
	year = {2025},
	address = {Albuquerque, New Mexico},
	publisher = {Association for Computational Linguistics},
	url = {https://aclanthology.org/2025.naacl-long.415/},
	doi = {10.18653/v1/2025.naacl-long.415},
	pages = {8173--8190},
	ISBN = {979-8-89176-189-6},
}

@InProceedings{13DBLP:conf/acl/ChenLSH0SH25,
	author = {Chen, Yulin and Li, Haoran and Sui, Yuan and He, Yufei and Liu, Yue and Song, Yangqiu and Hooi, Bryan},
	title = {Can indirect prompt injection attacks be detected and removed?},
	year = 2025,
	pages = {18189--18206},
	url = {https://aclanthology.org/2025.acl-long.890/},
	booktitle = {Proceedings of the 63rd Annual Meeting of the Association for Computational Linguistics (Volume 1: Long Papers)},
	publisher = {Association for Computational Linguistics},
	address = {Vienna, Austria},
	month = jul
}

@InProceedings{14zhang-etal-2025-defense,
	author = {Zhang, Ruiyi and Sullivan, David and Jackson, Kyle and Xie, Pengtao and Chen, Mei},
	editor = {Chiruzzo, Luis and Ritter, Alan and Wang, Lu},
	title = {Defense against prompt injection attacks via mixture of encodings},
	booktitle = {Proceedings of the 2025 Conference of the Nations of the Americas Chapter of the Association for Computational Linguistics: Human Language Technologies (Volume 2: Short Papers)},
	month = apr,
	year = 2025,
	address = {Albuquerque, New Mexico},
	publisher = {Association for Computational Linguistics},
	url = {https://aclanthology.org/2025.naacl-short.21/},
	doi = {10.18653/v1/2025.naacl-short.21},
	pages = {244--252},
	isbn = {979-8-89176-190-2}
}

@InProceedings{15yi2025probe,
	author = {Yi, Biao and Huang, Tiansheng and Chen, Sishuo and Li, Tong and Liu, Zheli and Chu, Zhixuan and Li, Yiming},
	title = {Probe before you talk: towards black-box defense against backdoor unalignment for large language models},
	booktitle = {The Thirteenth International Conference on Learning Representations},
	year = 2025,
	url = {https://openreview.net/forum?id=EbxYDBhE3S},
	publisher = {ICLR},
	address = {Virtual}
}

@InProceedings{15-1zhang-etal-2024-defending,
	author = {Zhang, Zhexin and Yang, Junxiao and Ke, Pei and Mi, Fei and Wang, Hongning and Huang, Minlie},
	editor = {Ku, Lun-Wei and Martins, Andre and Srikumar, Vivek},
	title = {Defending large language models against jailbreaking attacks through goal prioritization},
	booktitle = {Proceedings of the 62nd Annual Meeting of the Association for Computational Linguistics (Volume 1: Long Papers)},
	month = aug,
	year = 2024,
	address = {Bangkok, Thailand},
	publisher = {Association for Computational Linguistics},
	url = {https://aclanthology.org/2024.acl-long.481/},
	doi = {10.18653/v1/2024.acl-long.481},
	pages = {8865--8887}
}

@InProceedings{17DBLP,
	author = {Zou, Qingsong and Xiao, Jingyu and Li, Qing and Yan, Zhi and Wang, Yuhang and Xu, Li and Wang, Wenxuan and Gao, Kuofeng and Li, Ruoyu and Jiang, Yong},
	title = {QueryAttack: jailbreaking aligned large language models using structured non-natural query language},
	year = 2025,
	pages = {5725--5741},
	url = {https://aclanthology.org/2025.findings-acl.298/},
	booktitle = {Findings of the Association for Computational Linguistics: ACL 2025},
	publisher = {Association for Computational Linguistics},
	address = {Vienna, Austria},
	month = jul
}

@InProceedings{18saiem,
	author = {Saiem, Bijoy Ahmed and Shanto, MD Sadik Hossain and Ahsan, Rakib and Rashid, Md Rafi Ur},
	editor = {Zhao, Jin and Wang, Mingyang and Liu, Zhu},
	title = {{SequentialBreak}: large language models can be fooled by embedding jailbreak prompts into sequential prompt chains},
	booktitle = {Proceedings of the 63rd Annual Meeting of the Association for Computational Linguistics (Volume 4: Student Research Workshop)},
	month = jul,
	year = 2025,
	address = {Vienna, Austria},
	publisher = {Association for Computational Linguistics},
	url = {https://aclanthology.org/2025.acl-srw.37/},
	doi = {10.18653/v1/2025.acl-srw.37},
	pages = {548--579},
	isbn = {979-8-89176-254-1}
}

@InProceedings{19ha-etal-2025-one,
	author = {Ha, Junwoo and Kim, Hyunjun and Yu, Sangyoon and Park, Haon and Yousefpour, Ashkan and Park, Yuna and Kim, Suhyun},
	editor = {Che, Wanxiang and Nabende, Joyce and Shutova, Ekaterina and Pilehvar, Mohammad Taher},
	title = {One-Shot is enough: consolidating multi-turn attacks into efficient single-turn prompts for {LLM}s},
	booktitle = {Proceedings of the 63rd Annual Meeting of the Association for Computational Linguistics (Volume 1: Long Papers)},
	month = jul,
	year = 2025,
	address = {Vienna, Austria},
	publisher = {Association for Computational Linguistics},
	url = {https://aclanthology.org/2025.acl-long.805/},
	doi = {10.18653/v1/2025.acl-long.805},
	pages = {16489--16507},
	isbn = {979-8-89176-251-0}
}

@InProceedings{20li-etal-2025-revisiting,
	author = {Li, Tianlong and Wang, Zhenghua and Liu, Wenhao and Wu, Muling and Dou, Shihan and Lv, Changze and Wang, Xiaohua and Zheng, Xiaoqing and Huang, Xuanjing},
	editor = {Rambow, Owen and Wanner, Leo and Apidianaki, Marianna and Al-Khalifa, Hend and Di Eugenio, Barbara and Schockaert, Steven},
	title = {Revisiting jailbreaking for large language models: a representation engineering perspective},
	booktitle = {Proceedings of the 31st International Conference on Computational Linguistics},
	month = jan,
	year = 2025,
	address = {Abu Dhabi, UAE},
	publisher = {Association for Computational Linguistics},
	url = {https://aclanthology.org/2025.coling-main.212/},
	pages = {3158--3178}
}

@InProceedings{21li-etal-2025-cavgan,
	author = {Li, Xiaohu and Ning, Yunfeng and Bao, Zepeng and Xu, Mayi and Chen, Jianhao and Qian, Tieyun},
	editor = {Che, Wanxiang and Nabende, Joyce and Shutova, Ekaterina and Pilehvar, Mohammad Taher},
	title = {{CAVGAN}: unifying jailbreak and defense of {LLM}s via generative adversarial attacks on their internal representations},
	booktitle = {Findings of the Association for Computational Linguistics: ACL 2025},
	month = jul,
	year = 2025,
	address = {Vienna, Austria},
	publisher = {Association for Computational Linguistics},
	url = {https://aclanthology.org/2025.findings-acl.346/},
	doi = {10.18653/v1/2025.findings-acl.346},
	pages = {6664--6678},
	isbn = {979-8-89176-256-5}
}

@InProceedings{24vaswani,
	author = {Vaswani, Ashish and Shazeer, Noam and Parmar, Niki and Uszkoreit, Jakob and Jones, Llion and Gomez, Aidan N. and Kaiser, \L{}ukasz and Polosukhin, Illia},
	title = {Attention is all you need},
	year = 2017,
	isbn = {9781510860964},
	publisher = {Curran Associates Inc.},
	address = {Red Hook, NY, USA},
	booktitle = {Proceedings of the 31st International Conference on Neural Information Processing Systems},
	pages = {6000--6010},
	location = {Long Beach, California, USA},
	series = {NIPS'17}
}

@InProceedings{25DBLP:conf/nips/XuHCW24,
	author = {Xu, Zhihao and Huang, Ruixuan and Chen, Changyu and Wang, Xiting},
	title = {Uncovering safety risks of large language models through concept activation vector},
	year = 2024,
	url = {http://papers.nips.cc/paper_files/paper/2024/hash/d3a230d716e65afab578a8eb31a8d25f-Abstract-Conference.html},
	booktitle = {Proceedings of the 38th International Conference on Neural Information Processing Systems},
	publisher = {NeurIPS Foundation},
	address = {New Orleans, LA, USA},
	series = {NeurIPS'24}
}

@inproceedings{
	26Zhang,
	title={JBShield: Defending Large Language Models 
	from Jailbreak Attacks through Activated 
	Concept Analysis and Manipulation},
	author={Zhang, Shenyi and Zhai, Yuchen and Guo, Keyan and Hu, Hongxin and Guo, Shengnan and Fang, Zheng and Zhao, Lingchen and Shen, Chao and Wang, Cong and Wang, Qian},
	booktitle={Proceedings of the 
	34th USENIX Security Symposium},
	month = {aug},
	year={2025},
	pages = {8215-8234},
	address = {Seattle, WA, USA},
	doi={ 978-1-939133-52-6},
	url={ https://www.usenix.org/conference/usenixsecurity25/presentation/zhang-shenyi}
}

@InProceedings{27zhang,
	author = {Zhang, Hanyu and Wang, Xiting and Li, Chengao and Ao, Xiang and He, Qing},
	title = {Controlling large language models through concept activation vectors},
	year = 2025,
	booktitle = {Proceedings of the AAAI Conference on Artificial Intelligence},
	volume = 39,
	number = 24,
	pages = {25851--25859},
	doi = {10.1609/aaai.v39i24.34778},
	publisher = {AAAI Press},
	address = {Washington, DC, USA}
}

@inproceedings{28rimsky-etal-2024-steering,
	title = {Steering Llama 2 via Contrastive Activation Addition},
	author = {Rimsky, Nina  and
	Gabrieli, Nick  and
	Schulz, Julian  and
	Tong, Meg  and
	Hubinger, Evan  and
	Turner, Alexander},
	editor = {Ku, Lun-Wei  and
	Martins, Andre  and
	Srikumar, Vivek},
	booktitle = {Proceedings of the 62nd Annual Meeting of the Association for Computational Linguistics (Volume 1: Long Papers)},
	month = {aug},
	year = {2024},
	address = {Bangkok, Thailand},
	publisher = {Association for Computational Linguistics},
	url = {https://aclanthology.org/2024.acl-long.828/},
	doi = {10.18653/v1/2024.acl-long.828},
	pages = {15504--15522},
}

@inproceedings{29jin-etal-2025-internal,
	title = {Internal Value Alignment in Large Language Models through Controlled Value Vector Activation},
	author = {Jin, Haoran  and
	Li, Meng  and
	Wang, Xiting  and
	Xu, Zhihao  and
	Huang, Minlie  and
	Jia, Yantao  and
	Lian, Defu},
	editor = {Che, Wanxiang  and
	Nabende, Joyce  and
	Shutova, Ekaterina  and
	Pilehvar, Mohammad Taher},
	booktitle = {Proceedings of the 63rd Annual Meeting of the Association for Computational Linguistics (Volume 1: Long Papers)},
	month = {jul},
	year = {2025},
	address = {Vienna, Austria},
	publisher = {Association for Computational Linguistics},
	url = {https://aclanthology.org/2025.acl-long.1326/},
	doi = {10.18653/v1/2025.acl-long.1326},
	pages = {27347--27371},
	ISBN = {979-8-89176-251-0},
}

@misc{30qwen1.5,
	author = {Qwen Team},
	title = {Introducing Qwen1.5},
	url = {https://qwenlm.github.io/blog/qwen1.5/},
	month = feb,
	year = 2024,
	howpublished = {Technical Blog}
}

@misc{32llama3,
	author = {Grattafiori, Aaron and Dubey, Abhimanyu and Jauhri, Abhinav and Pandey, Abhinav and Kadian, Abhishek and Al-Dahle, Ahmad and Letman, Aiesha and Mathur, Akhil and Schelten, Alan and Vaughan, Alex and et al},
	title = {The Llama 3 herd of models},
	year = 2024,
	eprint = {2407.21783},
	archivePrefix = {arXiv},
	primaryClass = {cs.AI},
	url = {https://arxiv.org/abs/2407.21783},
	howpublished = {arXiv preprint}
}

@article{33Deepseek,
	author = {Bi, Xiao and Chen, Deli and Chen, Guanting and Chen, Shanhuang and Dai, Damai and Deng, Chengqi and Ding, Honghui and Dong, Kai and Du, Qiushi and Fu, Zhe and et al},
	title = {DeepSeek LLM: scaling open-source language models with longtermism},
	year = 2024,
	journal = {CoRR},
	volume = {abs/2401.02954},
	url = {https://doi.org/10.48550/arXiv.2401.02954},
	doi = {10.48550/arXiv.2401.02954}
}

@InProceedings{34chao2024jailbreakbench,
	author = {Chao, Patrick and Debenedetti, Edoardo and Robey, Alexander and Andriushchenko, Maksym and Croce, Francesco and Sehwag, Vikash and Dobriban, Edgar and Flammarion, Nicolas and Pappas, George J. and Tram{\`e}r, Florian and Hassani, Hamed and Wong, Eric},
	title = {JailbreakBench: an open robustness benchmark for jailbreaking large language models},
	booktitle = {The Thirty-eight Conference on Neural Information Processing Systems Datasets and Benchmarks Track},
	year = 2024,
	url = {https://openreview.net/forum?id=urjPCYZt0I},
	publisher = {NeurIPS Foundation},
	address = {New Orleans, LA, USA}
}

@InProceedings{35huang2024catastrophic,
	author = {Huang, Yangsibo and Gupta, Samyak and Xia, Mengzhou and Li, Kai and Chen, Danqi},
	title = {Catastrophic jailbreak of open-source {LLM}s via exploiting generation},
	booktitle = {The Twelfth International Conference on Learning Representations},
	year = 2024,
	url = {https://openreview.net/forum?id=r42tSSCHPh},
	publisher = {ICLR},
	address = {Virtual}
}

@inproceedings{36mu-etal-2025-stealthy,
	title = {Stealthy Jailbreak Attacks on Large Language Models via Benign Data Mirroring},
	author = {Mu, Honglin and He, Han and Zhou, Yuxin and Feng, Yunlong and Xu, Yang and Qin, Libo and Shi, Xiaoming and Liu, Zeming and Han, Xudong and Shi, Qi and et al.},
	editor = {Chiruzzo, Luis and Ritter, Alan and Wang, Lu},
	booktitle = {Proceedings of the 2025 Conference of the Nations of the Americas Chapter of the Association for Computational Linguistics: Human Language Technologies (Volume 1: Long Papers)},
	month = apr,
	year = {2025},
	address = {Albuquerque, New Mexico},
	publisher = {Association for Computational Linguistics},
	url = {https://aclanthology.org/2025.naacl-long.88/},
	doi = {10.18653/v1/2025.naacl-long.88},
	pages = {1784--1799},
	ISBN = {979-8-89176-189-6}
	}

@inproceedings{37ding-etal-2024-wolf,
	title = {A Wolf in Sheep{'}s Clothing: Generalized Nested Jailbreak Prompts can Fool Large Language Models Easily},
	author = {Ding, Peng and Kuang, Jun and Ma, Dan and Cao, Xuezhi and Xian, Yunsen and Chen, Jiajun and Huang, Shujian},
	editor = {Duh, Kevin and Gomez, Helena and Bethard, Steven},
	booktitle = {Proceedings of the 2024 Conference of the North American Chapter of the Association for Computational Linguistics: Human Language Technologies (Volume 1: Long Papers)},
	month = jun,
	year = {2024},
	address = {Mexico City, Mexico},
	publisher = {Association for Computational Linguistics},
	url = {https://aclanthology.org/2024.naacl-long.118/},
	doi = {10.18653/v1/2024.naacl-long.118},
	pages = {2136--2153}}

@inproceedings{38ramesh2024gpt,
	author={Ramesh, Govind  and Yao, Dou and Wei, Xu},
	title={GPT-4 Jailbreaks Itself with Near-Perfect Success Using Self-Explanation},
	year={2024},
	cdate={1704067200000},
	pages={22139-22148},
	url={https://aclanthology.org/2024.emnlp-main.1235},
	booktitle={EMNLP}
}

@inproceedings{
	39yuan2024gpt,
	title={{GPT}-4 Is Too Smart To Be Safe: Stealthy Chat with {LLM}s via Cipher},
	author={Yuan, Youliang and Jiao, Wenxiang and Wang, Wenxuan and Huang, Jen-tse and He, Pinjia and Shi, Shuming and Tu, Zhaopeng},
	booktitle={The Twelfth International Conference on Learning Representations},
	year={2024},
	url={https://openreview.net/forum?id=MbfAK4s61A}
}

@inproceedings{
	40schwinn2024soft,
	title={Soft Prompt Threats: Attacking Safety Alignment and Unlearning in Open-Source {LLM}s through the Embedding Space},
	author={Schwinn, Leo and Dobre, David and Xhonneux, Sophie and Gidel, Gauthier and G{\"u}nnemann, Stephan},
	booktitle={The Thirty-eighth Annual Conference on Neural Information Processing Systems},
	year={2024},
	url={https://openreview.net/forum?id=CLxcLPfARc}
}

@inproceedings{
	41hase2024smoothed,
	title={Smoothed Embeddings for Robust Language Models},
	author={Hase, Ryo and Rashid, Md Rafi Ur and Lewis, Ashley and Liu, Jing and Koike-Akino, Toshiaki and Parsons, Kieran and Wang, Ye},
	booktitle={Neurips Safe Generative AI Workshop 2024},
	year={2024},
	url={https://openreview.net/forum?id=GkBRng3Hl9}
}

@inproceedings{42zhou-etal-2024-alignment,
	title = {How Alignment and Jailbreak Work: Explain {LLM} Safety through Intermediate Hidden States},
	author = {Zhou, Zhenhong and Yu, Haiyang and Zhang, Xinghua and Xu, Rongwu and Huang, Fei and Li, Yongbin},
	editor = {Al-Onaizan, Yaser and Bansal, Mohit and Chen, Yun-Nung},
	booktitle = {Findings of the Association for Computational Linguistics: EMNLP 2024},
	month = nov,
	year = {2024},
	address = {Miami, Florida, USA},
	publisher = {Association for Computational Linguistics},
	url = {https://aclanthology.org/2024.findings-emnlp.139/},
	doi = {10.18653/v1/2024.findings-emnlp.139},
	pages = {2461--2488}}

@INPROCEEDINGS{43zhou,
	author={Zhou, Shide and Li, Tianlin and Wang, Kailong and Huang, Yihao and Shi, Ling and Liu, Yang and Wang, Haoyu},
	booktitle={2025 IEEE/ACM 47th International Conference on Software Engineering (ICSE)}, 
	title={Understanding the Effectiveness of Coverage Criteria for Large Language Models: A Special Angle from Jailbreak Attacks}, 
	year={2025},
	volume={},
	number={},
	pages={730-742},
	doi={10.1109/ICSE55347.2025.00209}}

@inproceedings{45scalena-etal-2024-multi,
	title = {Multi-property Steering of Large Language Models with Dynamic Activation Composition},
	author = {Scalena, Daniel and Sarti, Gabriele and Nissim, Malvina},
	editor = {Belinkov, Yonatan and Kim, Najoung and Jumelet, Jaap and Mohebbi, Hosein and Mueller, Aaron and Chen, Hanjie},
	booktitle = {Proceedings of the 7th BlackboxNLP Workshop: Analyzing and Interpreting Neural Networks for NLP},
	month = nov,
	year = {2024},
	address = {Miami, Florida, US},
	publisher = {Association for Computational Linguistics},
	url = {https://aclanthology.org/2024.blackboxnlp-1.34/},
	doi = {10.18653/v1/2024.blackboxnlp-1.34},
	pages = {577--603},}

@inproceedings{46nguyen-etal-2025-multi,
	title = {Multi-Attribute Steering of Language Models via Targeted Intervention},
	author = {Nguyen, Duy and Prasad, Archiki and Stengel-Eskin, Elias and Bansal, Mohit},
	editor = {Che, Wanxiang and Nabende, Joyce and Shutova, Ekaterina and Pilehvar, Mohammad Taher},
	booktitle = {Proceedings of the 63rd Annual Meeting of the Association for Computational Linguistics (Volume 1: Long Papers)},
	month = jul,
	year = {2025},
	address = {Vienna, Austria},
	publisher = {Association for Computational Linguistics},
	url = {https://aclanthology.org/2025.acl-long.1007/},
	doi = {10.18653/v1/2025.acl-long.1007},
	pages = {20619--20634},
	ISBN = {979-8-89176-251-0}}

@inproceedings{47deng-etal-2025-unveiling,
	title = {Unveiling Language-Specific Features in Large Language Models via Sparse Autoencoders},
	author = {Deng, Boyi and Wan, Yu and Yang, Baosong and Zhang, Yidan and Feng, Fuli},
	editor = {Che, Wanxiang and Nabende, Joyce and Shutova, Ekaterina and Pilehvar, Mohammad Taher},
	booktitle = {Proceedings of the 63rd Annual Meeting of the Association for Computational Linguistics (Volume 1: Long Papers)},
	month = jul,
	year = {2025},
	address = {Vienna, Austria},
	publisher = {Association for Computational Linguistics},
	url = {https://aclanthology.org/2025.acl-long.229/},
	doi = {10.18653/v1/2025.acl-long.229},
	pages = {4563--4608},
	ISBN = {979-8-89176-251-0}}

@inproceedings{
	48marks2025sparse,
	title={Sparse Feature Circuits: Discovering and Editing Interpretable Causal Graphs in Language Models},
	author={Marks, Samuel and Rager, Can and Michaud, Eric J and Belinkov, Yonatan and Bau, David and Mueller, Aaron},
	booktitle={The Thirteenth International Conference on Learning Representations},
	year={2025},
	url={https://openreview.net/forum?id=I4e82CIDxv}
}

@inproceedings{
	49frankle2018the,
	title={The Lottery Ticket Hypothesis: Finding Sparse, Trainable Neural Networks},
	author={Frankle, Jonathan and Carbin, Michael},
	booktitle={International Conference on Learning Representations},
	year={2019},
	url={https://openreview.net/forum?id=rJl-b3RcF7},
}

@inproceedings{50voita-etal-2019-analyzing,
	title = {Analyzing Multi-Head Self-Attention: Specialized Heads Do the Heavy Lifting, the Rest Can Be Pruned},
	author = {Voita, Elena and Talbot, David and Moiseev, Fedor and Sennrich, Rico and Titov, Ivan},
	editor = {Korhonen, Anna and Traum, David and M{`a}rquez, Llu{'i}s},
	booktitle = {Proceedings of the 57th Annual Meeting of the Association for Computational Linguistics},
	month = jul,
	year = {2019},
	address = {Florence, Italy},
	publisher = {Association for Computational Linguistics},
	url = {https://aclanthology.org/P19-1580/},
	doi = {10.18653/v1/P19-1580},
	pages = {5797--5808},}

@inproceedings{51kovaleva-etal-2019-revealing,
	title = {Revealing the Dark Secrets of {BERT}},
	author = {Kovaleva, Olga and Romanov, Alexey and Rogers, Anna and Rumshisky, Anna},
	editor = {Inui, Kentaro and Jiang, Jing and Ng, Vincent and Wan, Xiaojun},
	booktitle = {Proceedings of the 2019 Conference on Empirical Methods in Natural Language Processing and the 9th International Joint Conference on Natural Language Processing (EMNLP-IJCNLP)},
	month = nov,year = {2019},
	address = {Hong Kong, China},
	publisher = {Association for Computational Linguistics},
	url = {https://aclanthology.org/D19-1445/},
	doi = {10.18653/v1/D19-1445},
	pages = {4365--4374}}

@inproceedings{
57autodanturbo,
title={Auto{DAN}-Turbo: A Lifelong Agent for Strategy Self-Exploration to Jailbreak {LLM}s},
author={Liu, Xiaogeng and Li, Peiran and Suh, G. Edward and Vorobeychik, Yevgeniy and Mao, Zhuoqing and Jha, Somesh and McDaniel, Patrick and Sun, Huan and Li, Bo and Xiao, Chaowei},
booktitle={The Thirteenth International Conference on Learning Representations},
year={2025},
url={https://openreview.net/forum?id=bhK7U37VW8}
}

@inproceedings{
     58autodan,
      title={AutoDAN: Generating Stealthy Jailbreak Prompts on Aligned Large Language Models},
      author={Liu, Xiaogeng and Xu, Nan and Chen, Muhao and Xiao, Chaowei},
      booktitle={The Twelfth International Conference on Learning Representations},
      year={2024},
      url={https://openreview.net/forum?id=7Jwpw4qKkb}
}

@misc{59PAIR,
      title={Jailbreaking Black Box Large Language Models in Twenty Queries}, 
      author={Chao, Patrick and Robey, Alexander and Dobriban, Edgar and Hassani, Hamed and Pappas, George J. and Wong, Eric},
      year={2023},
      eprint={2310.08419},
      archivePrefix={arXiv},
      primaryClass={cs.LG}
}

@misc{60GCG,
      title={Universal and Transferable Adversarial Attacks on Aligned Language Models}, 
      author={Zou, Andy and Wang, Zifan and Carlini, Nicholas and Nasr, Milad and Kolter, J. Zico and Fredrikson, Matt},
      year={2023},
      eprint={2307.15043},
      archivePrefix={arXiv},
      primaryClass={cs.CL},
      url={https://arxiv.org/abs/2307.15043}, 
}

@article{61wang2025comprehensive,
  title={A comprehensive survey in llm (-agent) full stack safety: Data, training and deployment},
  author={Wang, Kun and Zhang, Guibin and Zhou, Zhenhong and Wu, Jiahao and Yu, Miao and Zhao, Shiqian and Yin, Chenlong and Fu, Jinhu and Yan, Yibo and Luo, Hanjun and others},
  journal={arXiv preprint arXiv:2504.15585},
  year={2025}
}

@inproceedings{
63ouyang2022training,
title={Training language models to follow instructions with human feedback},
author={Ouyang, Long and Wu, Jeffrey and Jiang, Xu and Almeida, Diogo and Wainwright, Carroll and Mishkin, Pamela and Zhang, Chong and Agarwal, Sandhini and Slama, Katarina and Gray, Alex and et al},
booktitle={Advances in Neural Information Processing Systems},
editor={Oh, Alice H. and Agarwal, Alekh and Belgrave, Danielle and Cho, Kyunghyun},
year={2022},
url={https://openreview.net/forum?id=TG8KACxEON}
}

@misc{64bai2022constitutionalaiharmlessnessai,
      title={Constitutional AI: Harmlessness from AI Feedback}, 
      author={Bai, Yuntao and Kadavath, Saurav and Kundu, Sandipan and Askell, Amanda and Kernion, Jackson and Jones, Andy and Chen, Anna and Goldie, Anna and Mirhoseini, Azalia and McKinnon, Cameron and et al},
      year={2022},
      eprint={2212.08073},
      archivePrefix={arXiv},
      primaryClass={cs.CL},
      url={https://arxiv.org/abs/2212.08073}, 
}

@inproceedings{
65han2025heads,
title={Heads up! Large Language Models Can Perform Tasks Without Your Instruction via Selective Attention Head Masking},
author={Han, Senyu and Zeng, Hongchuan and Yu, Kai and Chen, Lu},
booktitle={Forty-second International Conference on Machine Learning},
year={2025},
url={https://openreview.net/forum?id=x2Dw9aNbvw}
}

@inproceedings{66zhang2023tell,
author = {Zhang, Qingru and Singh, Chandan and Liu, Liyuan and Liu, Xiaodong and Yu, Bin and Gao, Jianfeng and Zhao, Tuo},
title = {Tell Your Model Where to Attend: Post-hoc Attention Steering for LLMs},
booktitle = {ICLR 2024},
year = {2023},
month = {November},
}

@article{67probing,
    title = {Probing Classifiers: Promises, Shortcomings, and Advances},
    author = {Belinkov, Yonatan},
    journal = {Computational Linguistics},
    volume = {48},
    number = {1},
    month = {mar},
    year = {2022},
    address = {Cambridge, MA},
    publisher = {MIT Press},
    url = {https://aclanthology.org/2022.cl-1.7/},
    pages = {207--219}
}

@inproceedings{68cao,
    title = "Defending Against Alignment-Breaking Attacks via Robustly Aligned {LLM}",
    author = "Cao, Bochuan and Cao, Yuanpu and Lin, Lu and Chen, Jinghui",
    booktitle = "Proceedings of the 62nd Annual Meeting of the Association for Computational Linguistics (Volume 1: Long Papers)",
    month = aug,
    year = "2024",
    address = "Bangkok, Thailand",
    publisher = "Association for Computational Linguistics",
    url = "https://aclanthology.org/2024.acl-long.568/",
    pages = "10542--10560",
}

@inproceedings{69zhao,
    title = "Defending Large Language Models Against Jailbreak Attacks via Layer-specific Editing",
    author = "Zhao, Wei and Li, Zhe and Li, Yige and Zhang, Ye and Sun, Jun",
    booktitle = "Findings of the Association for Computational Linguistics: EMNLP 2024",
    month = nov,
    year = "2024",
    address = "Miami, Florida, USA",
    publisher = "Association for Computational Linguistics",
    url = "https://aclanthology.org/2024.findings-emnlp.293/",
    pages = "5094--5109",
}

@inproceedings{70liu2024formalizing,
  title={Formalizing and Benchmarking Prompt Injection Attacks and Defenses},
  author={Liu, Yupei and Jia, Yuqi and Geng, Runpeng and Jia, Jinyuan and Gong, Neil Zhenqiang},
  booktitle={Proceedings of the 33rd USENIX Security Symposium},
  pages={1831--1847},
  year={2024},
  organization={USENIX Association}
}

@inproceedings{71li2024evaluating,
  title={Evaluating the Instruction-Following Robustness of Large Language Models to Prompt Injection},
  author={Li, Zekun and Peng, Baolin and He, Pengcheng and Yan, Xifeng},
  booktitle={Proceedings of the 2024 Conference on Empirical Methods in Natural Language Processing (EMNLP)},
  pages={557--568},
  year={2024},
  organization={Association for Computational Linguistics}
}

@inproceedings{72arditi2024refusal,
  title = {Refusal in Language Models Is Mediated by a Single Direction},
  author = {Arditi, Andy and Obeso, Oscar and Syed, Aaquib and Paleka, Daniel and Panickssery, Nina and Gurnee, Wes and Nanda, Neel},
  booktitle = {Advances in Neural Information Processing Systems (NeurIPS 2024)},
  year = {2024},
  note = {NeurIPS 2024; arXiv:2406.11717}
}

@article{73yu2024refusal,
  title = {Robust LLM safeguarding via refusal feature adversarial training},
  author = {Yu, Lei and Do, Virginie and Hambardzumyan, Karen and Cancedda, Nicola},
  year = {2024},
  note = {arXiv:2409.20089}
}

@inproceedings{74NEURIPS2023_fd661313,
 author = {Wei, Alexander and Haghtalab, Nika and Steinhardt, Jacob},
 booktitle = {Advances in Neural Information Processing Systems},
 pages = {80079--80110},
 publisher = {Curran Associates, Inc.},
 title = {Jailbroken: How Does LLM Safety Training Fail?},
 url = {https://proceedings.neurips.cc/paper_files/paper/2023/file/fd6613131889a4b656206c50a8bd7790-Paper-Conference.pdf},
 volume = {36},
 year = {2023}
}

@inproceedings{75NEURIPS2023_ac662d74,
 author = {Zhou, Chunting and Liu, Pengfei and Xu, Puxin and Iyer, Srinivasan and Sun, Jiao and Mao, Yuning and Ma, Xuezhe and Efrat, Avia and Yu, Ping and YU, LILI and Zhang, Susan and Ghosh, Gargi and Lewis, Mike and Zettlemoyer, Luke and Levy, Omer},
 booktitle = {Advances in Neural Information Processing Systems},
 pages = {55006--55021},
 publisher = {Curran Associates, Inc.},
 title = {LIMA: Less Is More for Alignment},
 url = {https://proceedings.neurips.cc/paper_files/paper/2023/file/ac662d74829e4407ce1d126477f4a03a-Paper-Conference.pdf},
 volume = {36},
 year = {2023}
}

@inproceedings{76ICLR2024_83b7da3e,
 author = {Qi, Xiangyu and Zeng, Yi and Xie, Tinghao and Chen, Pin-Yu and Jia, Ruoxi and Mittal, Prateek and Henderson, Peter},
 booktitle = {International Conference on Learning Representations},
 editor = {B. Kim and Y. Yue and S. Chaudhuri and K. Fragkiadaki and M. Khan and Y. Sun},
 pages = {30988--31043},
 title = {Fine-tuning Aligned Language Models Compromises Safety, Even When Users Do Not Intend To!},
 url = {https://proceedings.iclr.cc/paper_files/paper/2024/file/83b7da3ed13f06c13ce82235c8eedf35-Paper-Conference.pdf},
 volume = {2024},
 year = {2024}
}

@article{77inan2023llama,
  title={Llama Guard: LLM-based Input-Output Safeguard for Human-AI Conversations},
  author={Inan, Hakan H and Lin, Nelson F and Yu, Huan and Wallace, Eric and Sridhar, Sathyapriya and Wu, Lemeng and Wang, Tianle and Sivaprasad, Abhinav and Mathew, Bin and Tsvetkov, Yulia and others},
  journal={arXiv preprint arXiv:2312.06674},
  year={2023}
}

@inproceedings{78reimers2019sentence,
  title={Sentence-BERT: Sentence Embeddings using Siamese BERT-Networks},
  author={Reimers, Nils and Gurevych, Iryna},
  booktitle={Proceedings of the 2019 Conference on Empirical Methods in Natural Language Processing and the 9th International Joint Conference on Natural Language Processing (EMNLP-IJCNLP)},
  pages={3982--3992},
  year={2019},
  publisher={Association for Computational Linguistics},
  url={https://aclanthology.org/D19-1410/}
}

@inproceedings{79sahara,
    author = {Zhou, Z. and Yu, H. and Zhang, X. and Xu, R. and Huang, F. and Wang, K. and Liu, Y. and Fang, J. and Li, Y.},
    title = {On the Role of Attention Heads in Large Language Model Safety},
    booktitle = {The Thirteenth International Conference on Learning Representations (ICLR)},
    year = {2025},
    address = {Virtual},
    url = {https://openreview.net/forum?id=h0Ak8A5yqw}
}
